\newcommand*{\addFileDependency}[1]{
  \typeout{(#1)}
  \@addtofilelist{#1}
  \IfFileExists{#1}{}{\typeout{No file #1.}}
}
\DeclareMathOperator*{\argmin}{arg\,min}
\newcommand{\di}{i}
\newcommand{\tr}{\text{tr}}
\theoremstyle{plain}
\newtheorem{theorem}{Theorem}
\newtheorem{lemma}{Lemma}
\newtheorem{assumption}{Assumption}
\theoremstyle{definition}
\newtheorem{definition}{Definition}
\newtheorem{proposition}{Proposition}[section]
\theoremstyle{remark}
\newtheorem{remark}[theorem]{Remark}
\newcommand{\mbf}{\mathbf}
\newcommand{\blind}{1}
\begin{document}

\def\spacingset#1{\renewcommand{\baselinestretch}%
{#1}\small\normalsize} \spacingset{1}


\if1\blind
{
  \title{\bf Localized Sparse Principal Component Analysis of Multivariate Time Series in the Frequency Domain}
  \author{Jamshid Namdari \\
    Department of Biostatistics \& Bioinformatics, Emory University\\
    Amita Manatunga \\
    Department of Biostatistics \& Bioinformatics, Emory University\\
    Fabio Ferrarelli  \\
    Department of Psychiatry, University of Pittsburgh\\
    and\\
    Robert T. Krafty\thanks{
    Corresponding author Robert T. Krafty, Department of Biostatistics \& Bioinformatics, Emory University, Atlanta, GA 30322 (e-mail:rkrafty@emory.edu). This work is supported by National Institutes of Health grants R01GM140476, R01HL159213 and R01MH125816.} \\  
    Department of Biostatistics \& Bioinformatics, Emory University}
  \maketitle
} \fi

\if0\blind
{
  \bigskip
  \bigskip
  \bigskip
  \begin{center}
    {\LARGE\bf Localized Sparse Principal Component Analysis of Multivariate Time Series in Frequency Domain}
\end{center}
  \medskip
} \fi

\bigskip
\begin{abstract}
Principal component analysis has been a main tool in multivariate analysis for estimating
a low dimensional linear subspace that explains most of the variability in the
data. However, in high-dimensional regimes, naive estimates of the principal loadings
are not consistent and difficult to interpret. In the context of time series, principal
component analysis of spectral density matrices can provide valuable, parsimonious
information about the behavior of the underlying process, particularly if the principal
components are interpretable in that they are sparse in coordinates and localized in
frequency bands. In this paper, we introduce a formulation and consistent estimation
procedure for interpretable principal component analysis for high-dimensional time
series in the frequency domain. An efficient frequency-sequential algorithm is developed
to compute sparse-localized estimates of the low-dimensional principal subspaces
of the signal process. The method is motivated by and used to understand neurological
mechanisms from high-density resting-state EEG in a study of first episode
psychosis.
\end{abstract}

\noindent%
{\it Keywords:}  Principal component analysis; Frequency band; 
 Spectral density matrix; High dimensional time series; Sparse estimation
\vfill

\newpage
\spacingset{1.9} 
\section{Introduction} \label{sec:Introduction}
Since its first descriptions by  \cite{pearson1901liii}, 
principal component analysis (PCA) has been one of the main multivariate analysis techniques for dimension reduction and feature extraction. PCA has become an essential tool for not just independent and identically distributed (iid) multivariate data, but also for serially correlated multivariate time series data in both the time and frequency domains.  In the frequency domain, PCA as a sequential method for finding directions of maximum variability 
appeared in the work of \cite{Brillinger1964a}.  The principal component series are formulated through an optimal linear filtering that transmits a $p$-dimensional signal through a $d$-dimensional channel and recovers it with minimum loss of information.  A foundational discussion of theory and applications of PCA in frequency domain can be found in \cite{brillinger2001time}; recent applications of this framework include uncovering non-coherent block structures \citep{sundararajan2021}, time-frequency analysis \citep{ombao2005} and change point detection \citep{jiao2021}.

PCA for the frequency domain analysis of high-dimensional multivariate time series faces several challenges.  The first challenge, which is not unique to frequency domain PCA and is a challenge for PCA in general, is high-dimensionality. When the dimension is fixed, sample eigenvectors, and consequently sample estimates of the principal components, are consistent and asymptotically normally distributed. 
However, in high-dimensional regimes, where the dimension of the random variable grows, sample PCs fail to be consistent. For the single spiked covariance model in the iid multivariate setting,
it has been shown that 
the leading eigenvector of the sample covariance matrix can actually be orthogonal to the leading eigenvector of the population covariance matrix if its corresponding eigenvalue is not sufficiently large \citep{paul2007asymptotics}. 

To obtain consistent estimates of PCs, \cite{johnstone2009consistency} proposed to obtain PCs that have sparse representation in an orthonormal system. Reviews of sparse PCA can be found in \cite{johnstone2018pca} and in \cite{zou2018selective}.  Aside from the theoretical benefits or necessities for PCA in high-dimensions, sparsification also provides interpretation that is essential for effective data analysis.  For example, consider our motivating application of resting state EEG data that is discussed in  Section \ref{sec:Data_Analysis}.  Figure \ref{timeseriesfigure} displays subsets of EEG signals from 9 channels, or locations in the brain, that are part of 64 channel recording, from two individuals, one who is experiencing a first psychotic episode (FEP) and one who is a healthy control (HC), for one minute while resting with their eyes open. We desire a frequency domain analysis of each of these data that can provide insights into underlying dependence structure of the signals at each frequency across and within each location in the brain. This inherently requires low-dimensional representations that are interpretable as combinations of power at important frequencies, which requires localization in frequency, and within certain channels or regions of the brain, which is equivalent to sparsity within coordiantes.

\FloatBarrier
\begin{figure}[t!] 
\begin{center}
\begin{tabular}{cc}
\includegraphics[width=2.8in, height=3in, bb = 50 50 600 600]{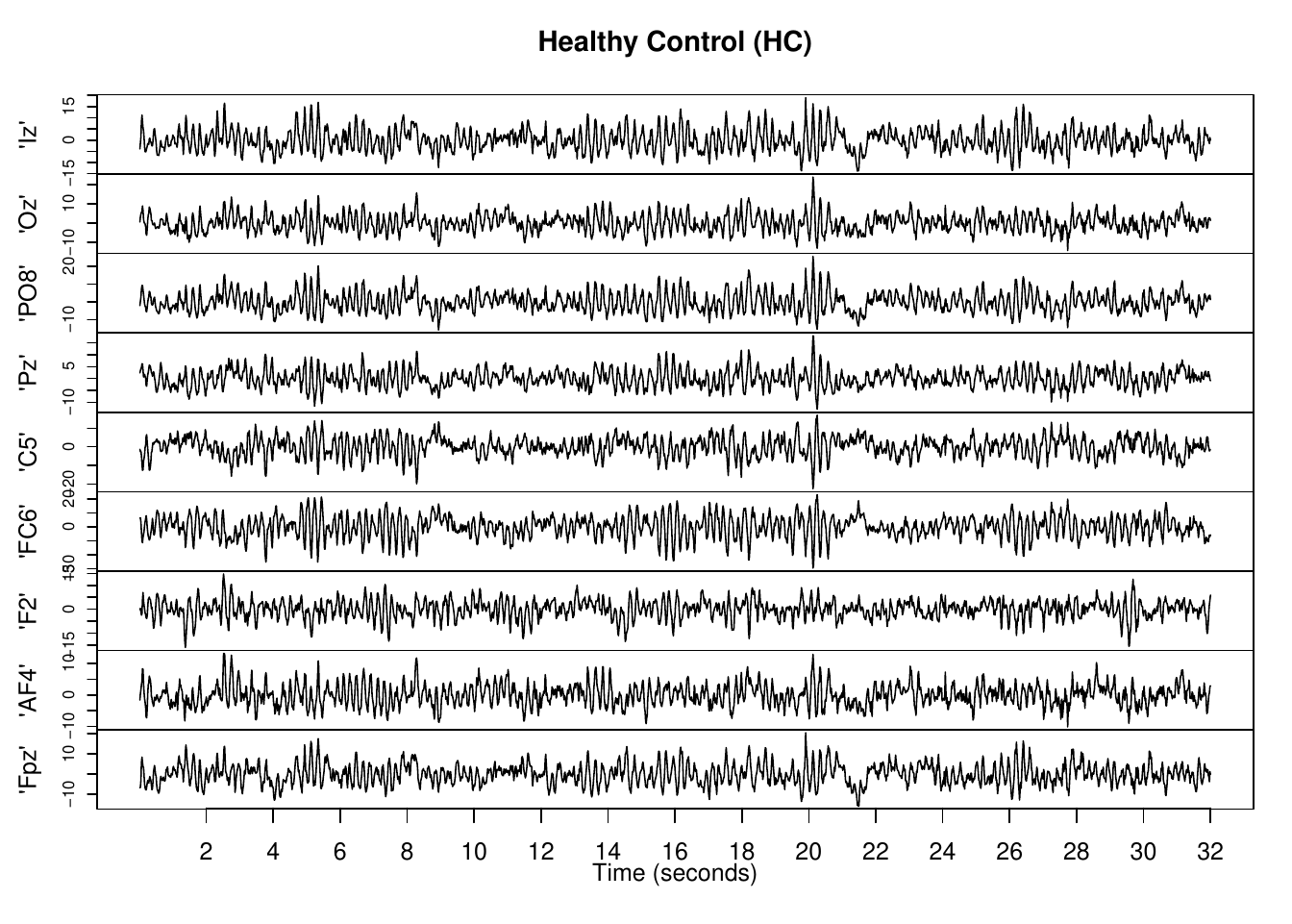} 
\hspace{12pt}
& \includegraphics[width=2.8in, height=3in, bb = 50 50 600 600]{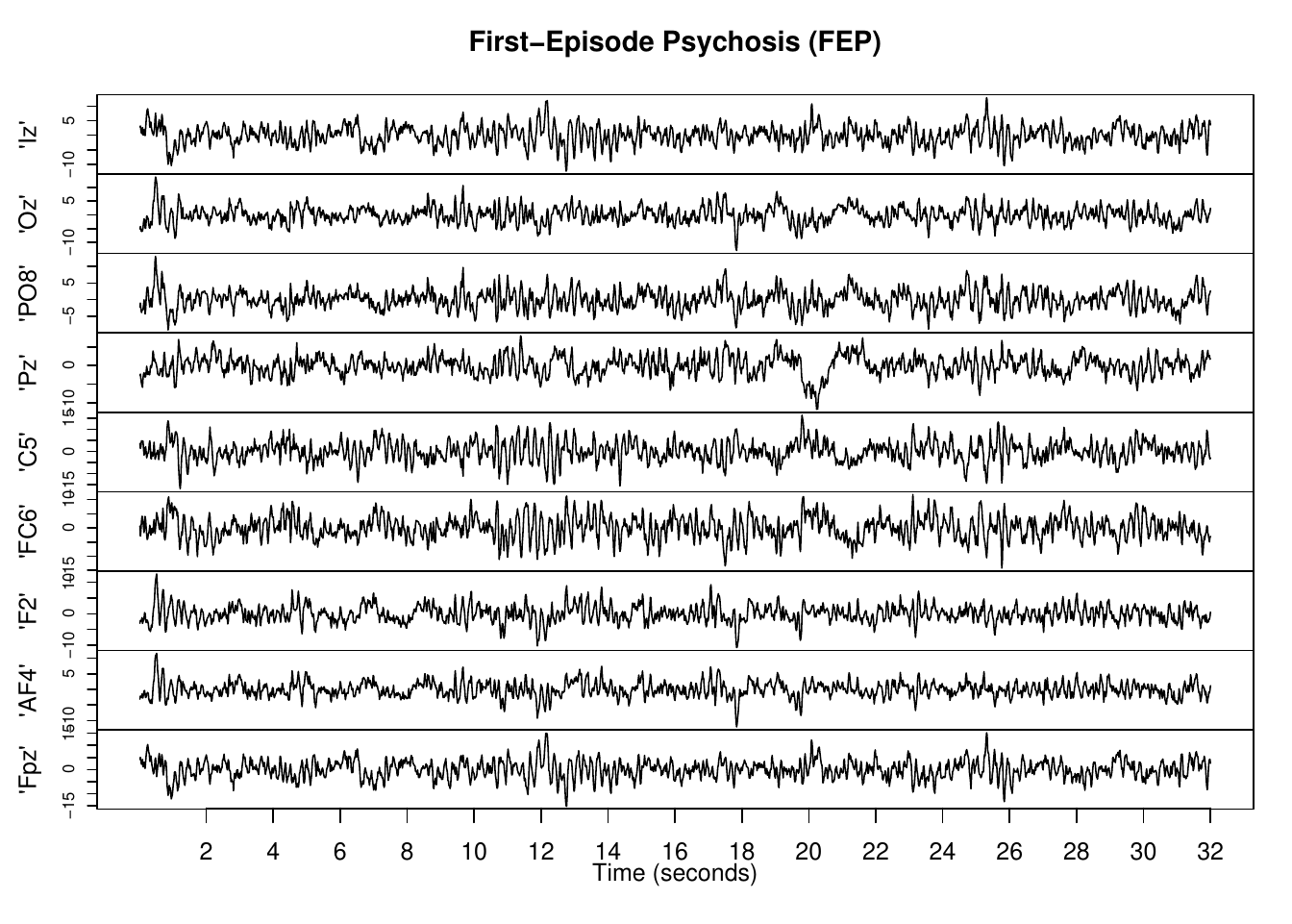}
\end{tabular}
         \caption{Resting state EEG data from two individuals, a healthy control (HC, left) and a person experiencing a first psychotic episode (FEP, right), from a subset of 9 channels from a 64-channel montage.}
         \label{timeseriesfigure}
\end{center}

\end{figure}


Various formulations of the PC problem, alternative methods of imposing sparsity, and convex relaxation of the corresponding optimization problems have inspired a notable volume of research.  For iid multivarite data, these methods include diagonal thresh-holding for spiked covariance models \citep{johnstone2009consistency},
regularized regression \citep{zou2006sparse}, 
low rank matrix approximation with sparsity constraint \citep{shen2008sparse, witten2009penalized}, 
maximizing variability over sparse subspaces 
\citep{d2004direct,vu2013fantope}, and 
subspace estimation via orthogonal iteration with sparsification \citep{ma2013sparse,yuan2013truncated}.  
%
%
Literature on the analysis of sparse PCA for serially dependent data is dearth compared to that for iid multivarite observations. In the time domain, sparse principal subspace estimation under vector autoregressive models were studied by \cite{wang2013sparse_s}. \cite{wang2014nonconvex} developed a two step method of Fantope projection and selection (FPS) \citep{vu2013fantope} followed by an orthogonal iteration method with sparsification (SOAP) and studied its theoretical properties for non-Gaussian and dependent data. Sparse PCA is even less explored in the frequency domain.  To the best of our knowledge, the only previous method for sparse PCA in the frequency domain is a two-step FPS followed by SOAP approach introduced by \citep{lusparse}.

%


Being equivalent to finding a sequence of principal subspaces of spectral matrices across frequency, high-dimensional spectral domain PCA has an additional layer of complexity compared to the iid multivaraite case that only considers the principal subspace of a single matrix.  This additional complexity presents two types of challenges.  First, the ability to consistently estimate a $d$-dimensional principal subspace relies on the first $d$ eigenvalues being sufficiently larger than the remaining, which is often referred to as having a sufficient eigengap.  This presents theoretical challenges in a PCA for the entire spectrum across all frequencies since, although there can be a sufficient eigengap at certain frequencies of interest, one is unable to reliably estimate principal subspaces of spectral matrices at frequencies with low power where the eigengap is small. A second challenge emerges with regards to interpretation.  The challenge of interpretation is typically addressed by applied researchers by summarizing frequency-domain information by collapsing power within a finite number of pre-defined frequency bands. 
Although historically derived, pre-defined frequency bands have been shown to be associated with a variety of scientific mechanisms, they are not optimal for parsimoniously summarizing and describing information in any given signal.  There has been considerable recent research into methods to address this challenge and to learn frequency bands for a given time series that are optimal in some sense \citep{bruce2020, tuft2023}.  However, to the best of our knowledge, there exist no approach to provide a frequency-domain PCA of a stationary time series that is interpretable in that principal components are localized in frequency band.  

The concept of localization to improve interpretation has been developed within the context of functional data analysis, where it is often referred to as ``interpretable functional data analysis'' \citep{james2009functional, zhang2021}.  Although functional data analytic methods have been developed for the frequency-domain analysis of replicated time series where several independent time series realizations are observed \citep{krafty2011, krafty2015}, it should be noted that  a different setting and question is considered in this article.  This article considers the analysis of a single realization of a multivariate time series, so that we consider PCA in the sense of \cite{Brillinger1964a} that involves principal subspaces of spectral matrices as operators on finite complex vector spaces, and not PCA in the functional sense that considers principal subspaces of functional operators over a continuous domain of frequency. 


The broad contribution of this paper is the introduction of the first method for conducting interpretable frequency-domain PCA that is both sparse among variables as well as localized in {bands of} frequency.  
We formulate a definition for the PCA of multivariate time series that contains a low-rank signal of interest with sparse{, continuous,} localized principal subspaces.  
We propose a sequential algorithm for estimating the principal subspaces. The approach estimates the sparse $d$-dimensional principal subspace itself, thus avoiding some of the challenges associated with deflation that is required in approaches to sparse PCA that sequentially estimate one-dimensional subspaces. Through simulation, we have studied the performance of the proposed algorithms on estimation of the underlying principal subspaces as well as parameter selection. On the theoretical side, we have established the consistency of the estimated principal subspaces in high-dimensions. The proof builds upon and extends the arguments presented in \cite{wang2014nonconvex} and \cite{lusparse} to account for smoothness of the principle subspace, including a new concentration inequality that shares information across frequency. 

The remainder of this paper is organized as follows. In Section \ref{sec:Main_Results}, a formulation of the localized and sparse principal subspaces of the underlying process is described. Section \ref{sec:Estimation} is devoted to the development of the estimation procedure. The theoretical analysis of the estimator is covered in Section \ref{subsec:theoretical_analysis} and the selection of tuning parameters is discussed in Section \ref{sec:model_selection}.  Empirical properties are investigated through simulation studies in Section \ref{sec:Simulation} and through the analysis of the motivating EEG data in Section \ref{sec:Data_Analysis}. Section \ref{sec:Discussion} provides a discussion of limitations and future directions.  Proofs of theoretical results and additional details with regards to the algorithms, simulation results and data analysis are provided in Supplementary Material.

\subsection*{Notation: }
Let $\mbf A=[A_{i,j}]\in\mathbb{C}^{p\times p}$. We denote conjugate transpose of $\mbf A$ by $\mbf A^\dagger$ and will use it to represent transpose of a real valued matrix as well. Let $u\in\mathbb{C}^p$, the $\ell_2$-norm of $u$ is defined as $\|u\|_2=\sqrt{u^\dagger u}$ and the $\ell_0$-norm of $u$ is the number of non-zero elements of $u$. 
For matrices $\mbf A_1$ and $\mbf A_2$, we define the inner product as $\langle \mbf A_1 , \mbf A_2 \rangle = \tr(\mbf A_1^\dagger \mbf A_2)$ and $\|\mbf A\|_F=\sqrt{\langle \mbf A,\mbf A\rangle}$, where $\tr(\mbf A)$ is the trace of  $A$. 
In this paper, $\|\mbf A\|_{2,0}$ determines the number of non-zero rows of $\mbf A$ and $\|\mbf A\|_{1,1}:=\sum_{i,j} |\mbf A_{i,j}|$. In addition, for Hermitian matrices $\mbf A, \mbf B\in\mathbb{C}^{p\times p}$, $\mbf A\preccurlyeq \mbf B$ if and only if $\mbf B-\mbf A$ is positive definite.  We denote the real and imaginary parts by $\Re(\cdot)$ and $\Im(\cdot)$, respectively, and we define the unit ball in $\mathbb{C}^p$ by $\mathbb{S}^{p-1}(\mathbb{C})=\{v\in\mathbb{C}^p \mid \|v\|_2=1\}$.  

\section{Localized Sparse Principal Components} \label{sec:Main_Results}

\subsection{Principal Components in Frequency Domain} \label{subsec:PCA_definition}

Let $\{X(t): t\in\mathbb{Z}\}$ be a $p$-dimensional stationary time series with mean vector $\mathbb{E}[X] = \mu$, auto-covariance matrix $\mbf\Gamma(h) = \mathbb{E}\left\{ \left[ X(t+h)-\mu \right] \left[ X(t)-\mu \right]^\dagger\right\}$, $h=0,\pm 1,\dots$, and spectral density matrix 
    $\mbf f(\omega) = (2\pi)^{-1}\sum_{h=-\infty}^{\infty}\mbf\Gamma(h)\exp\{-\di\omega h\}$, $-\infty < \omega < \infty,$
that is continuous as a function of frequency. 
Consider the decomposition $X(t) = \vartheta(t) + \varepsilon(t),\; t\in\mathbb{Z},$
where $\vartheta$ is the time series that is the closest time series to $X$ in terms of mean square error that can be obtained after compressing then reconstructing $X$ through a $d$-dimensional linear filter.  Formally, $\vartheta$ is defined by the $d\times p$ filter $\{\mbf b(h)\}$ and the $p\times d$ filter $\{\mbf c(h)\}$ such that $\vartheta(t) = \mu_{\vartheta} + \sum_h \mbf c(t-h)\zeta(h)$ and $\zeta(t) = \sum_h \mbf b(t-h)X(h)$  
minimizes 
\begin{equation} \label{eq:l2_est_loss}
    \mathbb{E}\left\{ \left[ X(t)-\vartheta(t) \right]^\dagger \left[X(t)-\vartheta(t) \right] \right\}
\end{equation} 
over all possible $d\times p$ and $p\times d$ filters.

Let $\mbf B(\omega) = \sum_h \mbf b(h)\exp\{\di\omega h\}$ and $\mbf C(\omega) = \sum_h \mbf c(h)\exp\{\di\omega h\}$ be the corresponding transfer functions. The next theorem, presented in Brillinger (2001), identifies the optimal transfer functions that minimizes Equation (\ref{eq:l2_est_loss}).
 
\begin{theorem} \label{thm:PC_series}
    Let $\{X(t)\}$ be a $p$-dimensional weakly stationary time series with mean vector $\mu$, an absolutely summable autocovariance function $\mbf \Gamma(h)$, and spectral density matrix $\mbf f(\omega), -\infty < \omega < \infty$. Then the $\mu_{\vartheta}, \{\mbf b(h)\}$, and $\{\mbf c(h)\}$ that minimizes (\ref{eq:l2_est_loss}) are given by $\mu_{\vartheta} = \mu - \left[\sum_h \mbf c(h)\right]\left[\sum_h \mbf b(h)\right]\mu$, $\mbf b(h) = (2\pi)^{-1}\int_0^{2\pi} \mbf B(\alpha)\exp\{ih\alpha\} \;d\alpha$, and $\mbf c(h) = (2\pi)^{-1}\int_0^{2\pi} \mbf C(\alpha)\exp\{ih\alpha\} \;d\alpha,$
    where $\mbf C(\omega) = [U_1(\omega) \dots U_d(\omega)]$, $\mbf B(\omega) = \mbf C^\dagger(\omega)$, and $U_j(\omega)$ is the $j$-th eigenvector of $\mbf f(\omega), j=1,\dots,p$. In addition, if $\lambda_j(\omega)$ denotes the corresponding eigenvalue, $j=1,\dots,p$, then the minimum obtained is $\int_0^{2\pi}\sum_{j>d}\lambda_j(\alpha) \;d\alpha$.
\end{theorem}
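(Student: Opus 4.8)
\noindent The plan is to reduce the estimation problem to a separate finite-dimensional matrix approximation at each frequency and then invoke the Eckart--Young--Mirsky theorem.

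First I would dispose of the mean. Writing $Y(t)=X(t)-\mu$ and expanding $\zeta(t)=\sum_h\mbf b(t-h)X(h)$ and $\vartheta(t)=\mu_\vartheta+\sum_h\mbf c(t-h)\zeta(h)$, one obtains $X(t)-\vartheta(t)=\big(Y(t)-\widetilde\vartheta(t)\big)+\big(\mu-\mu_\vartheta-[\sum_h\mbf c(h)][\sum_h\mbf b(h)]\mu\big)$, where $\widetilde\vartheta$ denotes the same double filter applied to the centered process $Y$; the first summand has mean zero and the second is a deterministic vector, so (\ref{eq:l2_est_loss}) equals $\mathbb E\{\|Y(t)-\widetilde\vartheta(t)\|_2^2\}+\|\mu-\mu_\vartheta-[\sum_h\mbf c(h)][\sum_h\mbf b(h)]\mu\|_2^2$. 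For any fixed filters this is minimized over $\mu_\vartheta$ by the stated choice $\mu_\vartheta=\mu-[\sum_h\mbf c(h)][\sum_h\mbf b(h)]\mu$, which annihilates the second term, so it remains to minimize $\mathbb E\{\|Y(t)-\widetilde\vartheta(t)\|_2^2\}$ over the two filters, equivalently over their transfer functions $\mbf B(\omega)\in\mathbb C^{d\times p}$ and $\mbf C(\omega)\in\mathbb C^{p\times d}$.

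Second I would pass to the frequency domain via the spectral representation $Y(t)=\int_0^{2\pi}e^{\di t\omega}\,d\mbf Z(\omega)$, where $\mbf Z$ has orthogonal increments with $\mathbb E[d\mbf Z(\omega)\,d\mbf Z(\omega)^\dagger]=\mbf f(\omega)\,d\omega$ (absolute summability of $\mbf\Gamma$ makes $\mbf f$ bounded and continuous and legitimizes these operations, the filtered processes having finite variance precisely when the transfer functions lie in $L^2(\mbf f)$). Then $\widetilde\vartheta(t)=\int_0^{2\pi}e^{\di t\omega}\mbf C(\omega)\mbf B(\omega)\,d\mbf Z(\omega)$, so $Y(t)-\widetilde\vartheta(t)=\int_0^{2\pi}e^{\di t\omega}[\mbf{I}_p-\mbf C(\omega)\mbf B(\omega)]\,d\mbf Z(\omega)$, and the isometry of the spectral integral yields
\begin{equation}\label{eq:freq-decouple}
\mathbb E\!\left\{[X(t)-\vartheta(t)]^\dagger[X(t)-\vartheta(t)]\right\}=\int_0^{2\pi}\tr\!\left\{[\mbf{I}_p-\mbf C(\omega)\mbf B(\omega)]\,\mbf f(\omega)\,[\mbf{I}_p-\mbf C(\omega)\mbf B(\omega)]^\dagger\right\}d\omega .
\end{equation}
The integrand is nonnegative and depends on $\mbf B,\mbf C$ only through their values at $\omega$, so the problem decouples and it suffices to minimize the integrand pointwise in $\omega$.

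Third, for fixed $\omega$ set $\mbf M=\mbf C(\omega)\mbf B(\omega)$, which ranges over all $p\times p$ matrices of rank at most $d$. Factoring $\mbf f(\omega)=\mbf f(\omega)^{1/2}\mbf f(\omega)^{1/2}$ with the Hermitian positive-semidefinite square root, the integrand is $\|(\mbf{I}_p-\mbf M)\mbf f(\omega)^{1/2}\|_F^2=\|\mbf f(\omega)^{1/2}-\mbf M\mbf f(\omega)^{1/2}\|_F^2$. Since $\mbf f(\omega)^{1/2}=\sum_{j=1}^p\lambda_j(\omega)^{1/2}U_j(\omega)U_j(\omega)^\dagger$ with $\lambda_1(\omega)\ge\cdots\ge\lambda_p(\omega)\ge0$, the Eckart--Young--Mirsky theorem gives $\|\mbf f(\omega)^{1/2}-\mbf N\|_F^2\ge\sum_{j>d}\lambda_j(\omega)$ for every matrix $\mbf N$ of rank at most $d$, with equality at $\mbf N=\sum_{j\le d}\lambda_j(\omega)^{1/2}U_j(\omega)U_j(\omega)^\dagger$; this $\mbf N$ equals $\mbf M\mbf f(\omega)^{1/2}$ for $\mbf M=\mbf M^\star(\omega):=\sum_{j\le d}U_j(\omega)U_j(\omega)^\dagger$, so the pointwise minimum is $\sum_{j>d}\lambda_j(\omega)$, attained by the orthogonal projection $\mbf M^\star(\omega)$, which factors as $\mbf C(\omega)\mbf B(\omega)$ with $\mbf C(\omega)=[U_1(\omega)\ \cdots\ U_d(\omega)]$ and $\mbf B(\omega)=\mbf C(\omega)^\dagger$. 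Integrating over $[0,2\pi]$ gives the claimed minimum $\int_0^{2\pi}\sum_{j>d}\lambda_j(\alpha)\,d\alpha$, and taking inverse Fourier transforms of $\mbf B$ and $\mbf C$ recovers the stated filters $\{\mbf b(h)\},\{\mbf c(h)\}$; a measurable selection of the eigenvectors $U_j(\cdot)$ (available since $\mbf f$ is continuous) suffices for these integrals to be defined, and the minimizer is unique only up to $\mbf C\mapsto\mbf C\mbf Q$, $\mbf B\mapsto\mbf Q^{-1}\mbf B$ with $\mbf Q\in GL_d(\mathbb C)$ and, at frequencies of eigenvalue multiplicity, up to the choice of orthonormal basis in the corresponding eigenspaces.

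The step I expect to be the main obstacle is the second one: carefully justifying the frequency-domain identity (\ref{eq:freq-decouple}) for \emph{every} admissible pair of filters rather than only nicely behaved ones, which requires the spectral representation theorem, Parseval's relation for linearly filtered stationary processes, and an identification of the exact class $L^2(\mbf f)$ on which the transfer functions may live. Once the problem is reduced to the pointwise matrix optimization, step three is essentially routine, the only minor subtlety being the passage between $\mbf M$ and $\mbf N=\mbf M\mbf f(\omega)^{1/2}$ when $\mbf f(\omega)$ is singular (handled by noting that the Eckart--Young optimal $\mbf N$ already has row space inside that of $\mbf f(\omega)^{1/2}$), and the mean reduction in step one is a short algebraic computation.
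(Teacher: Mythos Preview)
Your proposal is correct and follows exactly the route the paper indicates: the paper does not actually prove this theorem (it cites Brillinger, 2001) but immediately after the statement it records the key identity $\mathbb{E}\{[X(t)-\vartheta(t)]^\dagger[X(t)-\vartheta(t)]\}=\int_0^{2\pi}\tr\{[I-\mbf A(\omega)]\mbf f(\omega)[I-\mbf A(\omega)]^\dagger\}\,d\omega$ with $\mbf A(\omega)=\mbf C(\omega)\mbf B(\omega)$, which is precisely your display (\ref{eq:freq-decouple}), and then asserts that the optimal $\mbf A(\omega)$ is the rank-$d$ projection $\sum_{j\le d}U_j(\omega)U_j(\omega)^\dagger$, which is precisely your step three via Eckart--Young--Mirsky. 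Your handling of the mean, the singular-$\mbf f(\omega)$ case, and the measurable-selection caveat are all fine and in fact more careful than what the paper sketches.
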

Note that if we denote $\mbf A(\omega)=\mbf C(\omega)\mbf B(\omega)$, then
\begin{align*}
    \mathbb{E}\left\{ \left[ X(t)-\vartheta(t) \right]^\dagger\left[X(t)-\vartheta(t)\right] \right\} &= \int_0^{2\pi} \tr\left\{ \left[I-\mbf A\left(\omega\right)\right] \mbf f(\omega) \left[ I-\mbf A(\omega) \right]^\dagger\right\}\; d(\omega) \\
    \mbf A(\omega) &= U_1(\omega) U_1(\omega)^\dagger + \dots + U_d(\omega) U_d(\omega)^\dagger.
\end{align*}
In other words, for each $\omega\in[0,1)$, $\mbf A(\omega)$ is a rank $d$ projection matrix. This indicates that the minimizer of
    $\int_0^{2\pi} \tr\left\{\left[I-\mbf A(\omega)\right] \mbf f(\omega) \left[I-\mbf A(\omega) \right]^\dagger\right\}\; d(\omega)$ 
over the space of rank-$d$ projection matrices $\mathcal{G}_d$ is equivalent to the solution to the maximization problem 
\begin{equation} \label{eq:max_prob}
    \max_{A\in\mathcal{G}_d} \int_0^{2\pi} \tr \left[ \mbf f(\omega)\mbf A(\omega)\right]\; d(\omega). 
\end{equation}

The focus of this article is the interpretation and estimation of the principal subspace spanned by the orthogonal directions $U_j\left(\omega\right)$, considered with reference to the eigenvalues $\lambda_j\left(\omega\right)$. It should be noted that the power spectrum of $\vartheta$ can be represented as
$\mbf f_{\vartheta}\left(\omega\right) = \sum_{j=1}^d \lambda_j\left(\omega\right) U_j\left(\omega\right) U_j\left(\omega\right)^{\dagger}.$
The principal time series $\zeta_{j}\left(t\right)$, $j=1,\dots,d$, are uncorrelated time series with power spectra $\lambda_j\left(\omega\right)$ that represent parsimonious underlying latent mechanisms that account for most of the information in $X\left(t\right)$. The orthogonal directions $U_j\left(\omega\right)$ describe how these latent time series relate to and can be interpreted from the perspective of the $p$-dimensional space.  For example, in our analysis of the EEG data that is presented in Section \ref{sec:Data_Analysis}, $U_j$ and $\lambda_j$ are dominated by information within a subset of the slow delta frequencies less than 4 Hz, which has been shown to be most prominent during times of and can be used to electrophysiologically quantify rest, and within a subset of the theta frequencies between 4 - 7 Hz, which has been shown to be associated with attention control. The principal time series $\zeta_j$ represent uncorrelated relative expression of these two mechanisms, and the principal directions $U_j$ indicate how these mechanisms are expressed in each location of the brain.

\subsection{Sparsity} \label{subsec:sparse_def}

This article is concerned with PCA where  principal subspaces are sparse in variates. This assumption is essential both to make estimation tractable in the high-dimensional setting, as well as for interpretation. For example, in our motivating application, we desire a parsimonious interpretation where a component represents power in only certain regions of the brain by estimating principal subspaces that are sparse in the following sense.

\begin{definition}[Subspace Sparsity]  \label{def:sparsity_level}
    Let $\mathcal{U}$ be a $d$-dimensional subspace of $\mathbb{C}^p$ and $\mathbb{U}$ be the set of $p\times d$ orthonormal matrices whose columns span $\mathcal{U}$. Let $\mbf\Pi=\mbf U\mbf U^\dagger, \mbf U\in\mathbb{U}$ be the unique (orthogonal) projection matrix onto the subspace $\mathcal{U}$. We define the sparsity level of $\mathcal{U}$ as $s = \left|\mbox{supp}\left[\mbox{diag}\left(\mbf\Pi\right)\right]\right|$.  
\end{definition}

We desire a principal component analysis under the assumption that the principal subspace  that is spanned by $U_1\left(\omega\right), \dots, U_d \left(\omega\right)$ is sparse with sparsity level $s^*$. 

\subsection{Frequency Localization} \label{subsec:local_def}

In addition to sparsity, we also desire a PCA that is localized in the frequency domain in that only the most relevant frequencies are retained.  Frequency localization is important for two reasons.  In terms of estimation, the ability to consistently estimate the $d$-dimensional principal subspace of a matrix depends on the difference between the $d$th and $d+1$st eigenvalues. In many applications, including our motivating EEG example where there is low signal at higher frequencies, this eigengap is not sufficiently large for spectral matrices at many frequencies.  In terms of interpretation and applications, we desire a parsimonious decomposition of information in that principal components can be interpretable as having support withing certain ranges or bands of frequency.  This desire also mitigates the issues caused by the inability to consistently estimate principal subspaces at frequencies with insufficient eigengaps as said information is not of practical interest.  Formally, we desire a frequency localization procedure that identifies frequencies with sufficient power by finding $\Omega \subset \left[0, 2\pi \right)$ such that the power of $X$ at frequency $\omega$ that is accounted for by $\vartheta$, or $\sum_{j=1}^d \lambda_j(\omega)$, is greater than some threshold for all $\omega \in \Omega$.  Although this threshold can be selected either subjectively or based on existing scientific knowledge, in Section \ref{sec:model_selection} we discuss a data driven procedure for selecting this threshold relative to the variance of the remainder 
$\epsilon_t$
and present details in Appendix A. 

{
\subsection{Frequency Bands and Continuity of  Principal Subspaces}
Continuity of the power spectrum is a common theoretical assumption that is justifiable in most practical applications, including for EEG.  We utilize the assumption that  principal subspaces are continuous as a function of frequency in two ways.  First, it allows for the sharing of information in adjacent frequencies, which improves estimation, especially in areas with small eigen-gap.  Second, we utilize the continuity of the principal subspaces to improve interpretation.  When combined with frequency localization, regularizing the continuity of the principal subspace results in estimates of the principal subspaces that are not localized in individual frequencies, but in bands of frequencies.   
%
}

\subsection{Optimization Problem}\label{subsec:opt_def}
{We combine Equation (\ref{eq:max_prob}) to formulate the PCA problem with appropriate regularizations to obtain estimates of the principal subspaces that are localized, sparse, and preserve smoothness presented in the underlying principal subspaces, as a function of frequency. In order to obtain a sparse solution, we can add the $\ell_0$ penalty term for each frequency component, i.e. via $\|\mbf V(\omega_\ell)\|_{2,0} \leq s^*$. To obtain a localized solution, we propose to discretize the objective function and add a shrinkage parameter, $\beta_\ell$, for each frequency $\omega_\ell$. To control variation in estimated principal subspaces at consecutive fundamental frequencies, an intuitive approach can be to add the constraint $\|\mbf\Pi_\ell-\mbf\Pi_{\ell+1}\|_F < \varepsilon$ for all $\ell=1,\dots,n/2-1$.  We relax this constraint to obtain a computationally feasible, sequential optimization problem.

Given a realization $X(1),\dots,X(n)$ of the time series $\{X(t):t\in\mathbb{Z}\}$, we consider the frequency transformation of the data given by
\begin{equation}\label{def:fn}
    \mbf f_n(\omega_\ell) = \sum_{t=-M}^{M}\hat{\mbf R}_t\exp\{-2\pi i\omega t\},
\end{equation}
where $\hat{\mbf R}_t=\frac{1}{n}\sum_{k=1}^{n-t}X(k+t)X(t)^\dagger$ with the dependence of $M$ on $n$ implicit so that $M$ can grow with $n$ as $n\to\infty$. Note that, when $M=n/2$, $\mbf f_n$ is the standard periodogram and when $M<n/2$, $\mbf f_n$ is a truncated periodogram. In establishing the theoretical properties of our proposed estimator, we specify an appropriate rate at which $M$ can grow relative to $n$ and $p$ such that we maintain consistency of the estimator.

With the aforementioned considerations, we can consider estimating the sparse and localized principal components through solving the following optimization problem.
\begin{equation} \label{eq:LSPC_opt0}
    \begin{split}
    \underset{\ell=1,\dots,n/2}{\underset{\beta_\ell,\mbf V(\omega_\ell)}{\mbox{maximize}}}\; & \left\{\sum_{\ell=1}^{n/2} \beta_\ell \tr\left[\mbf{f}_n^{}(\omega_\ell)\mbf V(\omega_\ell)\mbf V(\omega_\ell)^\dagger\right] \right\}\\
    \mbox{subject to: } &\;\; \mbf V(\omega_\ell) \mbox{ orthonormal, } \|\mbf V(\omega_\ell)\|_{2,0} \leq s^* \mbox{ and } \\
    &\; \|\mbf V(\omega_\ell)\mbf V(\omega_\ell)^\dagger-\mbf V(\omega_{\ell+1})\mbf V(\omega_{\ell+1})^\dagger\|_F < \varepsilon, \ell=1,\dots,n/2-1 \mbox{ and } \\
    &\; \sum_{\ell=1}^{n/2} \beta_\ell \leq \eta \; \mbox{ and } 0\leq\beta_\ell\leq 1, \ell=1,\dots,n/2.
    \end{split}
\end{equation}
Note that the constraint set in Equation (\ref{eq:LSPC_opt0}) is non-convex and solving it directly is a challenging problem. We approach the problem through relaxing the constraint set, reformulating the problem as a different regularization problem and propose a sequential procedure to obtain a solution. We show in Section \ref{subsec:theoretical_analysis} that, for large enough $n$, the solution obtained through the proposed sequential procedure is feasible for (\ref{eq:LSPC_opt0}), with high probability, and is a consistent estimate of the underlying principal subspaces. 

Let $\Theta = \{(\mbf\Pi_1,\dots,\mbf\Pi_{n/2}): \|\mbf\Pi_\ell-\mbf\Pi_{\ell+1}\|_F < \varepsilon$ for all $\ell=1,\dots,n/2-1\}$. We can show that $\|\mbf\Pi_\ell-\mbf\Pi_{\ell+1}\|_F=\sqrt{2}\|\mbf\Pi_{\ell+1} - \mbf\Pi_\ell\mbf\Pi_{\ell+1}\mbf\Pi_\ell\|_F$, see Appendix B.2.1 for details. Thus, for any $(\mbf\Pi_1,\dots,\mbf\Pi_{n/2})\in\Theta$, $\sum_\ell\tr\{\beta_\ell \mbf f_n(\omega_{\ell+1})[\mbf\Pi_{\ell+1} - \mbf\Pi_\ell\mbf\Pi_{\ell+1}\mbf\Pi_\ell]\}\leq E$ for an appropriate $E$ (e.g, $E=2n\sqrt{d}\sup_{\ell}\{\lambda_1(\omega_\ell)\}\varepsilon$), and we define the relaxed constraint set $\tilde{\Theta}:=\{(\mbf\Pi_1,\dots,\mbf\Pi_{n/2}): \sum_\ell\tr\{\beta_\ell \mbf f_n(\omega_{\ell+1})[\mbf\Pi_{\ell+1} - \mbf\Pi_\ell\mbf\Pi_{\ell+1}\mbf\Pi_\ell]\}\leq E\}$.  While the constrained set $\Theta$ restricts the distance between principle subspaces at all pairwise adjacent frequencies, this relaxed set $\tilde{\Theta}$ restricts the sum of the distances across selected frequencies between the projection of the periodogram at a frequency onto its principle subspace and onto the principle subspace at the previous frequency.  Relaxing the constraint to $\tilde{\Theta}$ enables us to consider (\ref{eq:LSPC_opt0}) as 
\begin{equation*} \label{eq:LSPC_opt1}
    \begin{split}
    \underset{\ell=1,\dots,n/2}{\underset{\beta_\ell,V(\omega_\ell)}{\mbox{maximize}}}\; & \left\{\sum_{\ell=1}^{n/2} \beta_\ell \tr\left[\mbf{f}_n^{}(\omega_\ell)\mbf V(\omega_\ell)\mbf V(\omega_\ell)^\dagger\right] - \theta\sum_{\ell=1}^{n/2-1}\tr\left[\beta_\ell \mbf f_n(\omega_{\ell+1})\left(\mbf\Pi_{\ell+1} - \mbf\Pi_\ell\mbf\Pi_{\ell+1}\mbf\Pi_\ell\right)\right] \right\}\\
    \mbox{subject to: } &\;\; \mbf V(\omega_\ell) \mbox{ orthonormal, } \|\mbf V(\omega_\ell)\|_{2,0} \leq s^* \mbox{ and } \\
    &\; \sum_{\ell=1}^{n/2} \beta_\ell \leq \eta \; \mbox{ and } 0\leq\beta_\ell\leq 1, \ell=1,\dots,n/2,
    \end{split}
\end{equation*}
where the relation between $\theta$ and $E$ depends on the data. Simple calculations show that
$\tr\{\mbf f_n(\omega_{\ell+1})\mbf\Pi_{\ell+1} - \theta\mbf f_n(\omega_{\ell+1})[\mbf\Pi_{\ell+1}-\mbf\Pi_{\ell}\mbf\Pi_{\ell+1}\mbf\Pi_{\ell}]\} = \tr\{[(1-\theta)\mbf f_n(\omega_{\ell+1}) + \theta\mbf\Pi_\ell\mbf f_n(\omega_{\ell+1})\mbf\Pi_\ell]\mbf\Pi_{\ell+1}\}$. This enables us to rewrite 
the above optimization problem as the following problem 
\begin{equation} \label{eq:LSPC_opt_relaxed}
    \begin{split}
    \underset{\ell=1,\dots,n/2}{\underset{\beta_\ell,V(\omega_\ell)}{\mbox{maximize}}}\; & \left\{\sum_{\ell=1}^{n/2} \beta_\ell \tr\left[\mbf{f}^{(\theta)}(\omega)\mbf V(\omega_\ell)\mbf V(\omega_\ell)^\dagger\right] \right\}\\
    \mbox{subject to: } &\;\; \mbf V(\omega_\ell) \mbox{ orthonormal, } \|\mbf V(\omega_\ell)\|_{2,0} \leq s^* \mbox{ and } \\
    &\; \sum_{\ell=1}^{n/2} \beta_\ell \leq \eta \; \mbox{ and } 0\leq\beta_\ell\leq 1, \ell=1,\dots,n/2,
    \end{split}
\end{equation}
where $\mbf f^{(\theta)}(\omega_{\ell+1}) = (1-\theta)\mbf f_n(\omega_{\ell+1}) + \theta\mbf\Pi_\ell\mbf f_n(\omega_{\ell+1})\mbf\Pi_\ell$ and $\mbf f^{(\theta)}(\omega_1) = \mbf f_n(\omega_1)$, and we suppressed the dependence of $\mbf f^{(\theta)}$ on $n$ to ease notation. Note that, since $\Theta\subset\tilde{\Theta}$, a solution to (\ref{eq:LSPC_opt_relaxed}) may not be feasible for (\ref{eq:LSPC_opt0}). However, as will be shown in Section \ref{subsec:theoretical_analysis}, for large enough $n$ the solution obtained falls in $\Theta$ with high probability.

Observe that $\mbf f^{(\theta)}$ regularizes the object of interest, the principal subspace, by shrinking the principal subspace at frequency $\omega_{\ell+1}$ toward the principal subspace at frequency $\omega_\ell$, which implicitly encourages the estimated principal subspaces at consecutive frequencies to stay close together. The parameter $\theta$ controls the amount of information that is pooled from the previous frequency and implicitly controls the total variation in the estimated principal subspaces. In the extreme case where $\theta=0$, no sharing of information is incorporated into subspace estimation, and when $\theta=1$, the principal subspace at each frequency is projected on the space spanned by the corresponding previous frequency resulting in recovering the estimated principal subspace obtained at $\omega_1$.  

}


\section{Estimation Procedure} \label{sec:Estimation}
{
We propose to solve the optimization problem in (\ref{eq:LSPC_opt_relaxed}) sequentially as follows. First, we estimate $\mbf V(\omega_1),\dots,\mbf V(\omega_{n/2})$ sequentially, where, at frequency $\omega_{\ell+1}$ and conditional on estimates $\hat{\mbf V}(\omega_1),\dots,\hat{\mbf V}(\omega_\ell)$ for $\mbf V(\omega_1),\dots,\mbf V(\omega_\ell)$, we obtain an estimate for $\mbf V(\omega_{\ell+1})$ through solving 
    \begin{equation}\label{eq:spc_opt1}
    \underset{\mbf V(\omega_{\ell+1})}{\max}\; \tr\left[ \hat{\mbf{f}}^{(\theta)}(\omega_{\ell+1})\mbf V(\omega_{\ell+1})\mbf V(\omega_{\ell+1})^\dagger\right],\quad\mbox{s.t. }  \mbf V(\omega_{\ell+1}) \mbox{ orthonormal, } \|\mbf V(\omega_{\ell+1})\|_{2,0} \leq s^*,
\end{equation} 
where 
    $\hat{\mbf f}^{(\theta)}(\omega_{\ell+1}) = (1-\theta)\mbf f_n(\omega_{\ell+1}) + \theta\hat{\mbf\Pi}_\ell\mbf f_n(\omega_\ell)\hat{\mbf\Pi}_\ell, $
 $\hat{\mbf\Pi}_\ell = \hat{\mbf V}(\omega_\ell)\hat{\mbf V}^\dagger(\omega_\ell)$, and $\hat{\mbf f}^{(\theta)}(\omega_1)=\mbf f_n(\omega_1)$. 
Note that substituting $\mbf\Pi_\ell$ by its estimate is justified by the consistency of the estimated principal subspaces established in Theorem \ref{thm:consistency}. Next, we solve
\begin{equation}\label{eq:opt_beta}
    \underset{ \sum_{\ell=1}^{n/2} \beta_\ell \leq \eta \; \mbox{ and } 0\leq\beta_\ell\leq 1, \ell=1,\dots,n/2}{\underset{\beta_1,\dots,\beta_{n/2}}{\mbox{maximize}}}\; \left\{\sum_{\ell=1}^{n/2} \beta_\ell  \tr\left[ \hat{\mbf {f}}^{(\theta)}(\omega_\ell)\hat{\mbf V}(\omega_\ell)\hat{\mbf V}(\omega_\ell)^\dagger\right] \right\},
\end{equation}
which as shown in Proposition \ref{prop:linear_programming}, maintains only the $\eta$ frequency components with highest objective function. 
We solve the former problem sequentially by utilizing the sparse orthogonal iterated pursuit (SOAP) proposed by \cite{wang2014nonconvex}, where the initial value at $\omega_1$ is obtained by the Fantope projection and selection (FPS) and the estimate obtained at each frequency is used as the initial value for the next frequency component afterwards.

\subsection{Solution of the Sparse Principal Components} \label{subsec:sparse_PCA}

Although the solution to (\ref{eq:spc_opt1}) for each $\ell=1,\dots,n/2$ can attain the optimal statistical rate of convergence \citep{vu2013minimax}, it is NP-hard to compute \citep{moghaddam2005spectral}. Extensive research has been done to design a computationally feasible algorithm that enjoys optimal statistical convergence rate, a review of which can be found in Section D of  \cite{zou2018selective}. 

The optimization problem (\ref{eq:spc_opt1}) can be solved directly by the orthogonal iteration algorithm \citep{golub2013matrix} combined with a sparsification step. In addition, to ensure the solution attains optimal statistical rate of convergence, the initial value should fall within an appropriate distance from the solution. The initial estimate is obtained by applying the FPS method that solves a convex relaxation of the problem (\ref{eq:spc_opt1}). The two steps of (relaxed step) FPS followed by (tightened step) SOAP are explained below.

We first introduce the FPS algorithm for estimating the sparse PCs of a real matrix $\mbf{\Sigma}$, where we maximize $\tr\left[ \mbf{\Sigma}\mbf V\mbf V^\dagger\right]$ subject to $\mbf V$ being orthonormal and $\|\mbf V\|_{2,0} \leq s^*$. We then extend it so that it can estimate the sparse principal subspace of a complex valued spectral density matrix. Note that we can relax the constraint set of (\ref{eq:spc_opt1}) to obtain a convex relaxation of the problem. To do so, let $\mbf \Pi = \mbf V \mbf V^T$ and note that since $\mbf V$ is an orthonormal matrix, $\mbf \Pi$ is the projection matrix onto a $d$-dimensional subspace of $\mathbb{R}^p$ (in the real case). In addition, we know that $\mbf \Pi$ has exactly two eigenvalues, $1$ with multiplicity $d$ and $0$ with multiplicity $p-d$. Such constraint on eigenvalues of $\mbf \Pi$ can be relaxed to $\tr\left(\mbf \Pi\right)=d$ and $0\preccurlyeq \mbf \Pi\preccurlyeq \mathbb{I}_p$. In addition, we relax the constraint $\|\mbf \Pi\|_{2,0}\leq s^*$ to $\|\mbf \Pi\|_{1,1}\leq s^*$. Note that the constraint set $\{\|\mbf V\|_{2,0} \leq s^*\}$ is not convex, while the set $\mathcal{A} = \{\mbf \Pi: \mbf \Pi\in\mathbb{R}^{p\times p}, \tr({\mbf \Pi})=d, 0\preccurlyeq \mbf \Pi\preccurlyeq \mathbb{I}_p\}$ is convex.

The relaxed convex optimization problem can be equivalently expresses as
\begin{equation} \label{eq:relaxed_convex_opt}
    \mbox{minimize } \{\tr({\mbf \Sigma}\;\mbf \Pi) + \rho \|\mbf \Phi\|_{1,1} \;|\; \mbf \Pi=\mbf \Phi, \mbf \Pi\in\mathcal{A}, \mbf \Phi\in\mathbb{R}^{p\times p} \},
\end{equation}
with the Lagrangian $\mathcal{L}(\mbf \Pi,\mbf \Phi,\mbf \Theta) = \tr({\mbf \Sigma}\;\mbf \Pi) + \rho \|\mbf \Phi\|_{1,1} - \tr\left[{\mbf \Theta \left(\mbf \Pi-\mbf \Theta \right)}\right], \quad \mbf \Pi\in\mathcal{A}, \mbf \Phi\in\mathbb{R}^{p\times p}, \mbf\Theta\in\mathbb{R}^{p\times p}$,
and be solved by the alternating direction of multiplier (ADMM), which iteratively minimizes the augmented Lagrangian, 
\begin{equation} \label{eq:augmented_Lagrangian}
    \mathcal{L}(\mbf \Pi,\mbf \Phi,\mbf \Theta) +\beta/2\|\mbf \Pi-\mbf \Phi\|_F^2
\end{equation}
with respect to $\mbf \Pi$ and $\mbf \Phi$ and updating the dual variable $\mbf \Theta$. We only need to iterate the algorithm enough so that the calculated $\mbf \Pi$ at iteration $T$, $\mbf \Pi^{(T)}$, falls within the basin of attraction of the SOAP algorithm. A detailed description of the ADMM step can be found in Appendix A. Then, the top $d$ leading eigenvectors of $\mbf \Pi^{(T)}$ are used as the initial value in the SOAP algorithm.

To apply the FPS algorithm to complex valued metrics we invoke to Lemma B.4.3 from the Appendix B
that describes the isomorphism between complex matrices and real matrices. Let ${^{(R)}}{ \hat{\mbf {f}}^{(\theta)}}$ be the associated $(2p)\times(2p)$ real matrix to $\hat{\mbf f}^{(\theta)}$. We propose to apply the FPS algorithm to ${^{(R)}}{ \hat{\mbf {f}}^{(\theta)}}(\omega_\ell)$ and estimate the $2d$-dimensional principal subspace of ${^{(R)}}{ \hat{\mbf {f}}^{(\theta)}}(\omega_\ell)$ to obtain the initial estimate of $d$ leading eigenvectors of $\mbf {f}(\omega_\ell)$. As will be shown in Theorem (\ref{thm:consistency}), part (I), the estimated subspace obtained in this manner will be consistent.

In SOAP, the orthogonal iteration method is followed by a truncation step to enforce row-sparsity and further followed by taking another re-normalization step to enforce orthogonality. More precisely, at the $t$-th iteration of the algorithm the following operations are performed
\begin{itemize}
    \item Orthogonal iteration:  $\Tilde{\mbf V}^{(t+1)}\leftarrow\hat{\mbf {f}}^{(\theta)}(\omega_\ell)\mbf U^{(t)}; \; \mbf V^{(t+1)},\mathsf{\mbf R}_1^{(t+1)}\leftarrow\mbox{QR}(\Tilde{\mbf V}^{(t+1)})$
    \item Truncation/re-normalization: $\Tilde{\mbf U}^{(t+1)}\leftarrow\mbox{Truncate}(\mbf V^{(t+1)},\hat{s});\; \mbf U^{(t+1)},\mathsf{\mbf R}_2^{(t+1)}\leftarrow\mbox{QR}(\Tilde{\mbf U})^{(t+1)}$
\end{itemize}
where columns of $\mbf U^{(t)}$ contain the estimated first $d$ eigenvectors of $\mbf {f}(\omega_\ell)$ and the truncation operator sets the $p-\hat{s}$ rows with the smallest modulus to zero.

\subsection{Solution of the Linear Programming Problem}\label{subsec:localization}
Let $\hat{\mbf V}(\omega_\ell),\ell=1,\dots,n/2$ be the maximizer of $\tr\{[\hat{\mbf {f}}^{(\theta)}(\omega_\ell)]\mbf V(\omega_\ell)\mbf V(\omega_\ell)^\dagger\}$ such that $V(\omega_\ell)$ is orthonormal and $\|\mbf V(\omega_\ell)\|_{2,0} \leq s^*$.
Since $\hat{\mbf {f}}^{(\theta)}(\omega_\ell),\ell=1,\dots,n/2$ are positive definite, $h_\ell := \tr\left[ \hat{\mbf {f}}^{(\theta)}(\omega_\ell)\hat{\mbf V}(\omega_\ell)\hat{\mbf V}(\omega_\ell)^\dagger\right]>0$. Thus, we can write (\ref{eq:opt_beta}) as
\begin{equation}\label{eq:sum_beta_h}
    \underset{ \sum_{\ell=1}^{n/2} \beta_\ell \leq \eta \; \mbox{ and } 0\leq\beta_\ell\leq 1, \ell=1,\dots,n/2}{\underset{\beta_1,\dots,\beta_{n/2}}{\mbox{maximize}}}\; \left\{\sum_{\ell=1}^{n/2} \beta_\ell h_\ell \right\}.
\end{equation}
Note that, since $h_\ell>0, \ell=1,\dots,n/2$, the objective function is monotonically increasing in $\beta_\ell,\ell=1,\dots,n/2$ and therefore attains its maximum on the boundary of the constraint set. The algorithm selects the $\eta$ largest $h_j$'s and set the coefficients of the $n/2-\eta$ smallest $h_j$'s to zero. 
\begin{proposition} \label{prop:linear_programming}
Let $\eta\in\mathbb{N}$, $\eta < K$ for some $K\in\mathbb{N}$, and $h_1,\dots,h_K\in\mathbb{R}^+$. In addition, let $h_{(1)}\geq\dots\geq h_{(K)}$ be the sorted $h_j's$ in decreasing order and $\beta_{(1)},\dots,\beta_{(K)}$ be the corresponding coefficients in (\ref{eq:sum_beta_h}). Then
    \begin{equation}
    \underset{ \sum_{\ell=1}^{K} \beta_\ell \leq \eta \; \mbox{ and } 0\leq\beta_\ell\leq 1, \ell=1,\dots,K}{\underset{\beta_1,\dots,\beta_{K}}{\mbox{maximum}}}\; \left\{\sum_{\ell=1}^{K} \beta_\ell h_\ell \right\}.
\end{equation}
is attained at $\beta_{(1)}=\dots=\beta_{(\eta)}=1,\beta_{(\eta+1)}=\dots=\beta_{(K)}=0$.
\end{proposition}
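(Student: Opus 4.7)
The plan is to verify this by standard linear programming reasoning: the feasible set
$\mathcal{F}=\{\beta\in\mathbb{R}^K : 0\leq \beta_\ell\leq 1,\ \sum_\ell \beta_\ell\leq \eta\}$ is a nonempty compact polytope and the objective $\beta\mapsto \sum_\ell \beta_\ell h_\ell$ is continuous and linear, so a maximum exists. I will show the candidate assignment is optimal via a short exchange argument and the fact that $h_\ell>0$ forces the capacity constraint to bind.

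First I would argue that any maximizer $\beta^\star$ must satisfy $\sum_\ell \beta^\star_\ell = \eta$. Indeed, if $\sum_\ell \beta^\star_\ell < \eta$, then since $\eta<K$ there exists at least one index $j$ with $\beta^\star_j<1$; increasing $\beta^\star_j$ by a sufficiently small $\varepsilon>0$ keeps feasibility and strictly increases the objective by $\varepsilon h_j>0$, contradicting optimality. So at an optimum the capacity is fully used.

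Next, I would run the exchange step. Suppose, for contradiction, that there exist indices $i,j$ with $h_i>h_j$, $\beta^\star_i<1$, and $\beta^\star_j>0$. Pick $\varepsilon=\min\{1-\beta^\star_i,\beta^\star_j\}>0$ and define $\tilde\beta$ by $\tilde\beta_i=\beta^\star_i+\varepsilon$, $\tilde\beta_j=\beta^\star_j-\varepsilon$, and $\tilde\beta_k=\beta^\star_k$ otherwise. Then $\tilde\beta\in\mathcal{F}$ and $\sum_\ell \tilde\beta_\ell h_\ell - \sum_\ell \beta^\star_\ell h_\ell = \varepsilon(h_i-h_j)>0$, again contradicting optimality. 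Therefore at every maximizer, every index with $\beta^\star<1$ has an $h$-value no larger than every index with $\beta^\star>0$, i.e.\ the support of $\{\beta^\star_\ell=1\}$ contains the largest $h_\ell$'s.

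Finally, combining the two observations, if we sort $h_{(1)}\geq\cdots\geq h_{(K)}$ and break ties arbitrarily, setting $\beta_{(1)}=\cdots=\beta_{(\eta)}=1$ and $\beta_{(\eta+1)}=\cdots=\beta_{(K)}=0$ yields $\sum_\ell \beta_\ell = \eta$, lies in $\mathcal{F}$, and achieves the objective $\sum_{i=1}^{\eta} h_{(i)}$, which by the exchange argument is an upper bound on the objective over $\mathcal{F}$. Hence this assignment attains the maximum, proving the claim. There is no real obstacle in the argument; the only minor subtlety is the tie-breaking when several $h_\ell$ coincide at the boundary between ranks $\eta$ and $\eta+1$, in which case the stated assignment is one of possibly several optima, which is consistent with the statement.
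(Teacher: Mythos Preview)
Your argument is correct. Existence via compactness, the binding-constraint observation, and the exchange step together characterize the optimum, and the tie-breaking remark is appropriate. The only place that is slightly compressed is the passage from ``no improving exchange is possible'' to ``the maximum value equals $\sum_{i=1}^\eta h_{(i)}$''; this follows immediately once you note that the exchange condition forces the set $\{\ell:\beta^\star_\ell=1\}$ to contain indices with the $|\{\ell:\beta^\star_\ell=1\}|$ largest $h$-values, and any remaining mass sits on indices sharing a common $h$-value sandwiched between those and the rest, so the objective value coincides with $\sum_{i=1}^\eta h_{(i)}$.

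The paper, by contrast, proves the proposition by a double induction: first on $K$ with $\eta$ held fixed (reducing $K+1$ to $K$ by merging the two smallest-$h$ coefficients), and then on $\eta$ for general $K$. Your exchange argument is the standard linear-programming route and is both shorter and more transparent: it identifies the structural reason (greedy mass allocation to the largest $h_\ell$'s) in one stroke, and it generalizes without modification to non-integer $\eta$. The paper's induction is more elementary in that it avoids any appeal to compactness or optimality conditions, but it is longer and somewhat more delicate to execute cleanly.
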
    

{
\subsection{LSPCA Algorithm}
Below, we summarize the estimation procedure described above and will refer to it as the localized sparse principal component analysis (LSPCA) algorithm. 

\BlankLine

\begin{algorithm}[H]
\caption{LSPCA} \label{alg:LSPCA}

\SetKwInput{KwInput}{Input}                
\SetKwInput{KwOutput}{Output}              
\SetKwInOut{Parameter}{Parameters}
\DontPrintSemicolon
\SetAlgoLined
\RestyleAlgo{boxruled}
  
  \KwInput{ $\{\mbf {f}_n(\omega_\ell)\}_{\ell=1}^{n/2}$ }
\Parameter{Regularization parameter $\rho>0$, penalty parameter $\beta>0$, maximum number of iterations of ADMM $T$, Sparsity parameter $\hat{s}$, Maximum number of iteration of SOAP $\Tilde{T}$, Frequency parameter $\eta$, Smoothing parameter $\theta$.}

\BlankLine
    \textbf{ADMM:} $\mbf \Pi_1\leftarrow\mbox{ADMM}\left[\mbf {f}_n(\omega_1)\right]$\;
    Set the columns of $\mbf U^{init}$ to be the top $d$ leading eigenvectors of $\mbf \Pi_1$\;
    $\tilde{\mbf U}^{(0)}\leftarrow\mbox{Truncate}(\mbf U^{init},\hat{s})$, $\mbf U^{init}\leftarrow\mbox{Thin-QR}(\tilde{\mbf U}^{(0)})$\;
    $\hat{\mbf U}^{(0)}\leftarrow\mbox{SOAP}\left[\hat{\mbf {f}}^{(\theta)}(\omega_1),\mbf U^{init}\right]$\;
    \For{$\ell\in\{0,\dots,n/2-1\}$}{
        $\quad \hat{\mbf U}(\omega_{\ell+1})\leftarrow\mbox{SOAP}\left[\hat{\mbf {f}}^{(\theta)}(\omega_{\ell+1}),\hat{\mbf U}(\omega_\ell)\right]$\;
    }
    $\hat{\beta}_1,\dots,\hat{\beta}_{n/2}\leftarrow\mbox{Solve (\ref{eq:sum_beta_h}) by linear programming.}$\; 
  \KwOutput{$\{\hat{\mbf U}(\omega_\ell),\hat{\beta}_\ell\}_{\ell=1}^{n/2}$}
\end{algorithm}

\section{Theoretical Analysis}\label{subsec:theoretical_analysis}
To investigate theoretical properties, we first introduce the notion of distance between subspaces, in addition to several key quantities that will be used in the theoretical analysis, then we present 
{
the model and assumptions under which theoretical properties were derived. Finally, we present
}
the rate of convergence of the estimated localized sparse principal components. 

\begin{itemize}
    \item  \textbf{Subspace distance: } Let $\mathcal{U}$ and $\mathcal{U}^\prime$ be two $d$-dimentional subspaces of $\mathbb{C}^p$. Denote the projection matrices onto them by $\mbf \Pi$ and $\mbf \Pi^\prime$, respectively. We define and denote the distance between $\mathcal{U}$ and $\mathcal{U}^\prime$ by $\mathcal{D}(\mathcal{U},\mathcal{U}^\prime)=\|\mbf \Pi-\mbf \Pi^\prime\|_F$.
    \item \textbf{Principal subspace notations: } Let $\mathcal{U}^*_\ell$ be the $d$-dimensional principal subspace of $\mbf f(\omega_\ell)$ for each fundamental frequency $\omega_\ell=\ell/n,\ell=2,\dots,n/2$ and $\mathcal{U}^{(t)}(\omega_1)$ be the $d$-dimensional subspace spanned by the top $d$ eigenvectors of $\bar{\mbf \Pi}^{(t)}$ obtained at the $t$-th iteration of the ADMM algorithm presented in the Appendix A.  
    \item \textbf{Minimum number of iterations and data points: } Let $\gamma = \sup_{\omega\in[0,1]} \frac{3\lambda_{d+1}(\omega)+\lambda_d(\omega)}{\lambda_{d+1}(\omega)+3\lambda_d(\omega)}$ and $R = \min\left\{\sqrt{\frac{d\gamma(1-\gamma^{1/2})}{2}},\frac{\sqrt{2\gamma}}{4}\right\}$. The minimum number of iterations of the ADMM, $T_{min}$, the SOAP, $\tilde{T}_{min}$, and the minimum data points, $n_{min}$, are
    \begin{equation}
    T_{\min} =\left\lceil\frac{\zeta_1^2}{(R-\zeta_2)^2}\right\rceil ;\;
    \tilde{T}_{\min} = \left\lceil\frac{4\log(R/\xi)}{\log(1/\gamma)}\right\rceil ;\; n_{\min} = C\frac{(s^*)\log(p)}{R^2}\left(\frac{\lambda_1(\omega_1)}{\lambda_d(\omega_1)-\lambda_{d+1}(\omega_1)}\right)^2,
    \end{equation}
    where $
     \zeta_1 = \frac{\tilde{C}^\prime\lambda_1(\omega_1)}{\lambda_d(\omega_1)-\lambda_{d+1}(\omega_1)}.s^*\sqrt{\frac{\log(p)}{n}}$, and $
     \zeta_2 = \frac{\tilde{C}^{\prime\prime}\sqrt{M(n)\lambda_1(\omega_1)}}{\sqrt{\lambda_d(\omega_1)-\lambda_{d+1}(\omega_1)}}\left(\frac{d.p^2\log(p)}{n}\right)^{1/4}\frac{1}{\sqrt{t}}$.
\end{itemize}

{
\subsubsection{Model Assumptions} \label{subsec:model_assumptions}
let $\mathcal{M}_d(f,d,s^*)$ be the class of $p$-dimensional stationary time series $\{X(t): t\in\mathbb{Z}\}$ satisfying the following assumptions. 

\begin{assumption}\label{as:sparsity}
    For all $\omega\in[0,1)$, the $d$-dimensional principal subspace of $\mbf f(\omega)$ is continuous as a function of $\omega$, is $s^*$-sparse and these principal subspaces share the same support. In addition, we assume that $\inf_{\omega\in[0,.5]}\lambda_d(\omega)-\lambda_{d+1}(\omega)>\delta>0$ for some constant $\delta$.
\end{assumption}

\begin{assumption}\label{as:mixing}
    There exists constants $c_1$ and $\gamma_1\geq 1$ such that for all $h\geq 1$, the $\alpha$-mixing coefficient satisfies 
        $\alpha(h) \leq \exp\{-c_1h^{\gamma_1}\}.$
\end{assumption}

\begin{assumption} \label{as:concentration}
    There exists positive constants $c_2$ and $\gamma_2$ such that for all $v\in\mathbb{S}^{p-1}(\mathbb{C})$ and all $\lambda\geq 0$, we have 
        $\mathbb{P}\left(|v^*X(t)|\geq\lambda\right)\leq 2\exp\{-c_2\lambda^{\gamma_2}\}$  for all $t\in\mathbb{Z}.$
\end{assumption}

\begin{assumption} \label{as:gamma}
    Define $\gamma$ via $\frac{1}{\gamma}=\frac{1}{\gamma_1}+\frac{2}{\gamma_2}$, where $\gamma_1$ and $\gamma_2$ are given in Assumptions (\ref{as:mixing}) and (\ref{as:concentration}). We assume that $\gamma<1$.
\end{assumption}

Assumption 1 is made to ensure consistency of the estimated principal subspaces as well as enhancing interpretability. Moreover, continuity of the principal subspace allows us to solve the optimization problem in (\ref{eq:LSPC_opt_relaxed}) sequentially, where the solution obtained are feasible with high probability for large enough $n$. Assumption 2 ensures the process we consider has short range dependence. An example of processes satisfying this assumption is the class of 1-Lipschitz functions of linear processes with absolutely regular innovations   \citep{merlevede2011bernstein}. Assumption 3 ensures the processes we consider are not heavy tailed. We leave spectrum estimation in the frequency domain for heavy tailed and long memory processes for a later investigation. Assumption 4 is required for applying the deviation inequality derived in \cite{merlevede2011bernstein}. 
}

\begin{theorem} \label{thm:consistency} 
 Let $\{X(t): t=1,\dots,n\}$ be a realization of a weakly stationary time series that follows $\mathcal{M}_d(f,d,s^*)$ with $n>n_{min}$. Let the regularization parameter in (\ref{eq:relaxed_convex_opt}) be $\varrho=C\lambda_1(\omega_1)\sqrt{\log(p)/n}$ for a sufficiently large constant $C$, and the penalty parameter $\beta$ in (\ref{eq:augmented_Lagrangian}) be $\beta=\sqrt{2}p.\varrho/\sqrt{d}$. 
 \begin{itemize}
     \item[(I)] The iterative sequence of $d$-dimensional subspace $\{\mathcal{U}^{(t)}(\omega_1)\}_{t=1}^{T}$ satisfies
    \begin{equation}
    \footnotesize
                \mathcal{D}\left[\mathcal{U}^{(t)}(\omega_1), \mathcal{U}^*(\omega_1)\right]\leq \frac{\tilde{\tilde{C}}^\prime\lambda_1(\omega_1)}{\lambda_d(\omega_1)-\lambda_{d+1}(\omega_1)}.s^*\sqrt{\frac{\log(p)}{n}}
                + \frac{\tilde{\tilde{C}}^{\prime\prime}\sqrt{M\lambda_1(\omega_1)}}{\sqrt{\lambda_d(\omega_1)-\lambda_{d+1}(\omega_1)}}\left[\frac{d.p^2\log(p)}{n}\right]^{1/4}\frac{1}{\sqrt{t}}
    \end{equation}
    with high probability, where $\tilde{\tilde{C}}^\prime $ and $\tilde{\tilde{C}}^{\prime\prime}$ are constants. 
    \item[(II)] 
    Let $\mathcal{U}^{(T+\tilde{T})}_\ell$ be the space spanned by the columns of the estimator obtained from the Algorithm LSPCA after $T \geq T_{min}$ iterations in Algorithm ADMM followed by $\tilde{T}\geq\tilde{T}_{min}$ iterations of Algorithm SOAP. By taking the sparsity parameter $\hat{s}$ in Algorithm SOAP such that  $ \hat{s}=C\max\left\{\left[\frac{4d}{(\gamma^{-1/2}-1)^2}\right],1\right\}s^*$,
    for some integer constant $C\geq 1$, and a fixed $\theta\in[0,1)$,  
    the final estimator $\hat{\mathcal{U}}_\ell=\mathcal{U}^{(T+\tilde{T})}_\ell$ satisfies 
    \begin{equation} \label{eq:D_U_U_hat}
        \mathcal{D}(\mathcal{U}^*_\ell,\hat{\mathcal{U}}_\ell) \leq C^{\prime\prime\prime}\frac{\gamma^{1/2}}{1-\gamma^{1/4}}\Delta(2\hat{s})
    \end{equation}
    with high probability, for all $\ell=1,2,\dots,n/2$, where
        \begin{equation}\label{eq:Delta_theta}
    \footnotesize
        \Delta(s) := \sup_{\omega_\ell}\frac{\sqrt{2d}\left[\left(\exp(-c_0M)\vee M\sqrt{\frac{s^*\log(p)}{n}}\right) +2\theta\lambda_1(\omega_{\ell+1})\left[\mathcal{D}(\mathcal{U}^*_\ell,\mathcal{U}^*_{\ell+1})+\alpha\right] \right]}{\frac{1}{2}\left[\lambda_d(\omega_{\ell+1})-(1-\theta)\lambda_{d+1}(\omega_{\ell+1})\right]-2\theta\lambda_1(\omega_{\ell+1})\alpha},
    \end{equation}   
$\alpha = \sup_{\omega_\ell} \frac{c_1\sqrt{2d}}{\lambda_d(\omega_\ell)-\lambda_{d+1}(\omega_\ell)}\left\{\exp\left[-c_0M\right]\vee M\sqrt{\frac{s^*\log(p)}{n}}\right\},$ and  $c_0,c_1$ are constants.
\end{itemize}
       
\end{theorem}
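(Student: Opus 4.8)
The plan is to prove the two parts by reducing them to the single-matrix sparse PCA guarantees of \cite{wang2014nonconvex}, with the frequency-sequential structure handled by induction and by a new concentration inequality that controls $\|\mbf f_n(\omega_\ell)-\mbf f(\omega_\ell)\|$ uniformly over $\ell$. First I would establish the deterministic ingredients. For part (I), the FPS/ADMM convergence analysis of \cite{wang2014nonconvex,vu2013fantope} gives, for a fixed symmetric input matrix, that the $t$-th ADMM iterate produces a subspace within distance $\zeta_1 + \zeta_2/\sqrt{t}$ of the principal subspace, where $\zeta_1$ is the statistical error term (governed by $\|\mbf f_n(\omega_1)-\mbf f(\omega_1)\|_{\max}$ and the sparsity $s^*$) and $\zeta_2/\sqrt{t}$ is the optimization error. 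Since $\omega_1$ is a single fixed frequency, the only probabilistic input here is a bound on $\|\mbf f_n(\omega_1)-\mbf f(\omega_1)\|_{\max}$, which follows from the Bernstein-type deviation inequality of \cite{merlevede2011bernstein} under Assumptions \ref{as:mixing}--\ref{as:gamma}, together with the bias control coming from truncation at lag $M$ (this contributes the $\exp(-c_0 M)$ term). I would also need to verify that the real $(2p)\times(2p)$ isomorphic representation (Lemma B.4.3) preserves the eigengap up to the factor in the definition of $\gamma$, so that FPS applied to ${}^{(R)}\hat{\mbf f}^{(\theta)}(\omega_1)$ recovers the right $2d$-dimensional subspace.

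For part (II), I would argue by induction on $\ell$. The inductive hypothesis is that $\mathcal{D}(\mathcal{U}^*_\ell,\hat{\mathcal{U}}_\ell)\le C'''\frac{\gamma^{1/2}}{1-\gamma^{1/4}}\Delta(2\hat s)$ holds simultaneously for all frequencies up to $\ell$, on a single high-probability event. The base case $\ell=1$ is part (I) combined with the SOAP contraction (the geometric factor $\gamma^{1/2}/(1-\gamma^{1/4})$ and the requirement $\tilde T\ge \tilde T_{\min}$ come from iterating the one-step SOAP improvement bound of \cite{wang2014nonconvex}, which is why the ADMM initializer must land inside the basin of attraction of radius $R$ — this forces $n>n_{\min}$). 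For the inductive step, the key observation is that SOAP at frequency $\omega_{\ell+1}$ is run on $\hat{\mbf f}^{(\theta)}(\omega_{\ell+1}) = (1-\theta)\mbf f_n(\omega_{\ell+1}) + \theta\hat{\mbf\Pi}_\ell \mbf f_n(\omega_\ell)\hat{\mbf\Pi}_\ell$, so I must show this perturbed matrix still has a sufficient eigengap around its $d$th eigenvalue and that its principal subspace is close to $\mathcal{U}^*_{\ell+1}$. Decompose the perturbation: replace $\hat{\mbf\Pi}_\ell$ by $\mbf\Pi^*_\ell$ (error controlled by the inductive hypothesis, giving the $\alpha$ and $\mathcal{D}(\mathcal{U}^*_\ell,\mathcal{U}^*_{\ell+1})$ terms), replace $\mbf f_n$ by $\mbf f$ at both frequencies (error from the uniform-over-$\ell$ concentration inequality, the $M\sqrt{s^*\log p/n}$ term — restricting to row/column supports of size $s^*$ is what makes the dependence on $p$ enter only logarithmically), and then use continuity of the principal subspace (Assumption \ref{as:sparsity}) so that $\mbf\Pi^*_\ell \mbf f(\omega_{\ell+1})\mbf\Pi^*_\ell$ is a small perturbation of $\mbf f(\omega_{\ell+1})$ restricted near $\mathcal{U}^*_{\ell+1}$. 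A Davis--Kahan / Weyl argument then yields that the principal subspace of $\hat{\mbf f}^{(\theta)}(\omega_{\ell+1})$ is within $\Delta(2\hat s)$ of $\mathcal{U}^*_{\ell+1}$, with the denominator of \eqref{eq:Delta_theta} being exactly the resulting effective eigengap $\tfrac12[\lambda_d(\omega_{\ell+1})-(1-\theta)\lambda_{d+1}(\omega_{\ell+1})] - 2\theta\lambda_1(\omega_{\ell+1})\alpha$. Feeding this initializer into SOAP and applying the contraction bound closes the induction, with a union bound over the $n/2$ frequencies absorbed into "high probability."

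The main obstacle I anticipate is the new concentration inequality that must hold \emph{uniformly across all $n/2$ frequencies simultaneously} while keeping the deviation bound at the order $M\sqrt{s^*\log p/n}$ rather than $M\sqrt{s^* p/n}$ or worse. The naive approach — apply the \cite{merlevede2011bernstein} Bernstein inequality to each entry $[\mbf f_n(\omega_\ell)-\mbf f(\omega_\ell)]_{ij}$ and union bound over $i,j,\ell$ — loses too much because $\mbf f_n(\omega_\ell)$ is a sum of $M$ lag-covariance estimators each of which is itself a partial sum of a weakly dependent array, so one must carefully handle the double summation (over lags $t\in[-M,M]$ and over the time index) and track how the mixing decay interacts with the oscillating weights $\exp\{-2\pi i\omega t\}$; the bias term $\exp(-c_0 M)$ must be balanced against the variance term, dictating the admissible growth rate of $M=M(n)$. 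Restricting attention to the operator norm over $s^*$-sparse directions (an $\epsilon$-net over sparse unit vectors, of cardinality $\binom{p}{s^*}9^{s^*}$, contributing the $s^*\log p$ factor) rather than the full spectral norm is the device that makes this work, and getting the dependence on $d$ and on the smoothness modulus $\varepsilon$ right in $\Delta$ and $\alpha$ is the delicate bookkeeping. Everything else — the ADMM and SOAP iteration counts, the isomorphism, Proposition \ref{prop:linear_programming} for the $\beta_\ell$ step — is either routine or already in the cited literature.
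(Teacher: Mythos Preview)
Your proposal is correct and follows essentially the same route as the paper: Part (I) is reduced to the ADMM guarantee of \cite{wang2014nonconvex} via the real $(2p)\times(2p)$ isomorphism, and Part (II) is handled by the frequency-sequential induction you describe, with the perturbation $\hat{\mbf f}^{(\theta)}(\omega_{\ell+1})-\mbf f(\omega_{\ell+1})$ decomposed exactly as you outline (replace $\hat{\mbf\Pi}_\ell$ by $\mbf\Pi^*_\ell$, then $\mbf f_n$ by $\mbf f$, then use continuity), a Davis--Kahan step giving the $\Delta(2\hat s)$ bound, and the SOAP contraction from \cite{wang2014nonconvex} closing the induction. The concentration inequality you flag as the main obstacle is indeed the key new lemma, and your sketch (sparse $\epsilon$-net over $\mathbb{S}^{p-1}\cap\mathbb{B}_0(s^*)$, Merlev\`ede--Peligrad--Rio Bernstein applied lag-by-lag to $v^\dagger(\hat{\mbf R}_t-\mbf R_t)v$, bias control from truncation at $M$) is precisely how the paper proves it; one minor simplification you may not have noticed is that the resulting probability bound does not depend on $\omega$ at all (the complex exponentials have modulus one), so uniformity over the $n/2$ frequencies comes for free without a union bound over $\ell$.
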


{
    Theorem \ref{thm:consistency} can be seen as an extension of the result in \cite{wang2014nonconvex} that is also an extension of the Davis-Kahan theorem quantifying the precision of the estimated principal subspaces and its dependence on the sample size, dimension, and the magnitude of the perturbation relative to the eigengap of the covariance type matrix in an iid sampling scheme. In particular, 
    Equations (\ref{eq:D_U_U_hat}) and (\ref{eq:Delta_theta}) reveal how convergence of the estimated principal subspaces depends on the eigengap $\lambda_d(\omega)-\lambda_{d+1}(\omega)$, sample size, and dimension. Observe that the numerator of (\ref{eq:Delta_theta}) is a combination of a term proportional to $2\theta\lambda_1(\omega_{\ell+1})\mathcal{D}(\mathcal{U}^*_\ell,\mathcal{U}^*_{\ell+1})$ and a term proportional to $\left\{\exp\left[-c_0M\right]\vee M\sqrt{\frac{s^*\log(p)}{n}}\right\}$. As the proof of Lemma B.2.1 in Appendix B indicates, the former term is an upper bound for $\|\mbf f^{(\theta)}(\omega)-\mbf f(\omega)\|_{op}$, which represents the extend of the bias introduced by the information sharing step in the proposed principal component analysis. Note that the bias vanishes as $n\to\infty$, since $\mathcal{D}(\mathcal{U}^*_\ell,\mathcal{U}^*_{\ell+1})$ converges to zero by the continuity of the principal subspaces. The later term is an upper bound on the sparse operator norm of $\mbf f_n(\omega)-\mbf f(\omega)$. The upper bound obtained holds with high probability and vanishes when $M\to\infty$ and $M\sqrt{\log(p)}/n\to 0$ as $n,p\to\infty$, with the dependence of $M$ on $n$ implicit. This indicates that, to achieve consistency in estimation of the principal subspaces, the parameter $M$, should grow with $n$ and $p$ at the rate such that $M\sqrt{\log(p)}/n\to 0$. Finally, Theorem \ref{thm:consistency} guaranties that for a sufficiently large $n,p$ and $M$, $\mathcal{D}(\mbf\Pi_\ell,\hat{\mbf\Pi}_\ell) <\epsilon/3$ for a given $\epsilon$. This, along with the smoothness of principal subspaces as a function of frequency, implies $\|\hat{\mbf\Pi}_\ell-\hat{\mbf\Pi}_{\ell+1}\| < \epsilon$ with high probability. This confirms that, for sufficiently large $n,p$, and $M$ the estimates obtained by the LSPCA is feasible for (\ref{eq:LSPC_opt0}), with high probability.  
}

\section{Tuning Parameter Selection}\label{sec:model_selection}
In this section we outline a procedure for the selection of four parameters:  the dimension of the principal subspaces $d$, the sparsity parameter $\hat{s}$,  the localization parameter $\eta$, and  the smoothing parameter $\theta$. A detailed description is provided in Appendix A. Given the complex nature of the problem that makes the empirical joint selection of the parameters infeasible, we propose the selection of each parameter individually, conditional on the other parameters, and iteratively. 

For the dimension of the principal subspaces $d$, we inspect the scree plot or equivalently the plot of the proportion of variance explained. For the localization parameter $\eta$,  we propose to use information criteria based on the log-Whittle likelihood. More precisely, let $h_{\ell}$ be the maximum total power in a d-dimensional subspace at frequency $\ell$ defined in Section \ref{subsec:localization}, $\left\{h_{(1)},\dots,h_{(n/2)}\right\}$ be their order statistics, $\check{L} = \{\ell_{(1)},\dots,\ell_{(n/2)} \}$ be the corresponding indices of the Fourier frequencies of these order statistics, and $\check{L}_\eta = \{\ell_{(1)},\dots,\ell_{(\eta)}\}\subset \check{L}$ be the the Fourier frequency indices of the top $\eta$ order statistics. The log-Whittle likelihood is estimated by $\log(\mathcal{L}) = -\sum_{\ell=1}^{n/2} \{p\log{\pi} + \log{|\hat{\mbf G}_\ell|} + [d_X(\omega_\ell)][\hat{\mbf G}_\ell]^{-1}[d_X(\omega_\ell)]^\dagger\}$, 
    where $\hat{\mbf G}_{\ell} =  \mathbb{I}\{\ell \in \check{L}_\eta\}\hat{\mbf f}_\vartheta(\omega_\ell) + \hat{\mbf\Sigma}$, $\hat{\mbf f}_\vartheta(\omega_\ell) = {\mbf f}_n(\omega_\ell)\hat{U}_1(\omega_\ell)\hat{U}_1(\omega_\ell)^\dagger + \dots + {\mbf f}_n(\omega_\ell)\hat{U}_d(\omega_\ell)\hat{U}_d(\omega_\ell)^\dagger$, $\hat{U}_j(\omega_\ell)$ are obtained from the LSPCA algorithm, $\hat{\mbf \Sigma} = \sum_{\ell\in \check{L}/\check{L}_\eta}d_{X}(\omega_\ell)d_{X}(\omega_\ell)^\dagger/\left(|\check{L}| - |\check{L}_\eta|\right)$, and $d_{X}(\omega_\ell)$ is the discrete Fourier transform of the data at $\omega_\ell$. We use this to define standard information criteria for $\eta$ including $AIC = -2\log(\mathcal{L}) + 2\eta$, $AICc = -2\log(\mathcal{L}) + 2\eta + \frac{2\eta^2 + 2\eta}{n-\eta-1}$, and $BIC = -2\log(\mathcal{L}) + \log(n)\eta$. The localization parameter $\eta$ is selected to minimize the information criteria.  For both the sparsity parameter $s$ and the smoothing parameter $\theta$, we propose to use $k$-folds cross validation, with the Mahalanobis distance to evaluate the performance of fitted model in the validation step. For time series with short length, one might consider alternative methods such as information criterion for selecting $s$ and $\theta$.

\section{Simulation Studies}\label{sec:Simulation}

\subsection{Setting}
In this section, we explore, numerically, the performance of the LSPCA algorithm, effect of smoothing, and performance of the parameter selection procedures and compare the results with the classical frequency domain PCA developed in \cite{brillinger2001time}.  
Let $a(t)$ be a linear filter with frequency response $\mathbf{I}(\Omega)$, where $\mathbf{I}(\Omega)$ is the indicator function of the set $\Omega=[.05,.25]$.  We consider processes $X(t) = \left[X_1(t),\dots,X_p(t)\right]^T, \; t=1,\dots,n$ that are constructed from independent processes $Y_1,\dots,Y_5$ such that $\{Y_k(t):t\in\mathbb{Z}\}\sim AR(4)$. More precisely, let $\star$ represents the convolution operator, we define $X_1(t) = a(t)\star (1/c)Y_1(t)$, $X_j(t) = a(t)\star[\pi_j^{(x)}X_1(t) + (1/c)Y_j(t)]+  W_{j1}(t), j=2,\dots,5$ with $\pi_{2}^{(x)} = 2.2, \pi_{3}^{(x)} = 1.2, \pi_{4}^{(x)} = 1.25, \pi_{5}^{(x)} = 2.25$ and $X_j(t) \sim WN(0,1),\; j=6,\dots,p$, where 
$Y_j(t) = \sum_{k=1}^{4}\pi_{jk}Y_{j}(t-k)+W_{j0}(t), W_{j0}(t)\sim N(0,1/4), j=1,\dots,5$ such that 
$\pi_{j1}=\alpha_1+\phi_{j1}, \pi_{j2}=\alpha_2-\alpha_1\phi_{j1}+\phi_2, \pi_{j3}=-(\phi_{j1}\alpha_2+\phi_2\alpha_1), \pi_{j4}=-\phi_{j2}\alpha_2$, $\alpha_1=1/20, \alpha_2=-1/1.15, \phi_2=-0.75, \phi_{11}=1.5, \phi_{21}=1.55, \phi_{31}=1.45, \phi_{41}=1.65, \phi_{51}=1.35$, and $W_{j0}(t), j=1,\dots,5$ are independent white noise processes with mean zero and variance $1$ and are independent of $W_{j1}(t), j=2,\dots,5$.

\begin{figure}[t] 
\begin{center}
\vspace{-.7in}
\begin{tabular}{cc}
\includegraphics[width=2.5in, height=3.4in, bb = 50 50 650 650]{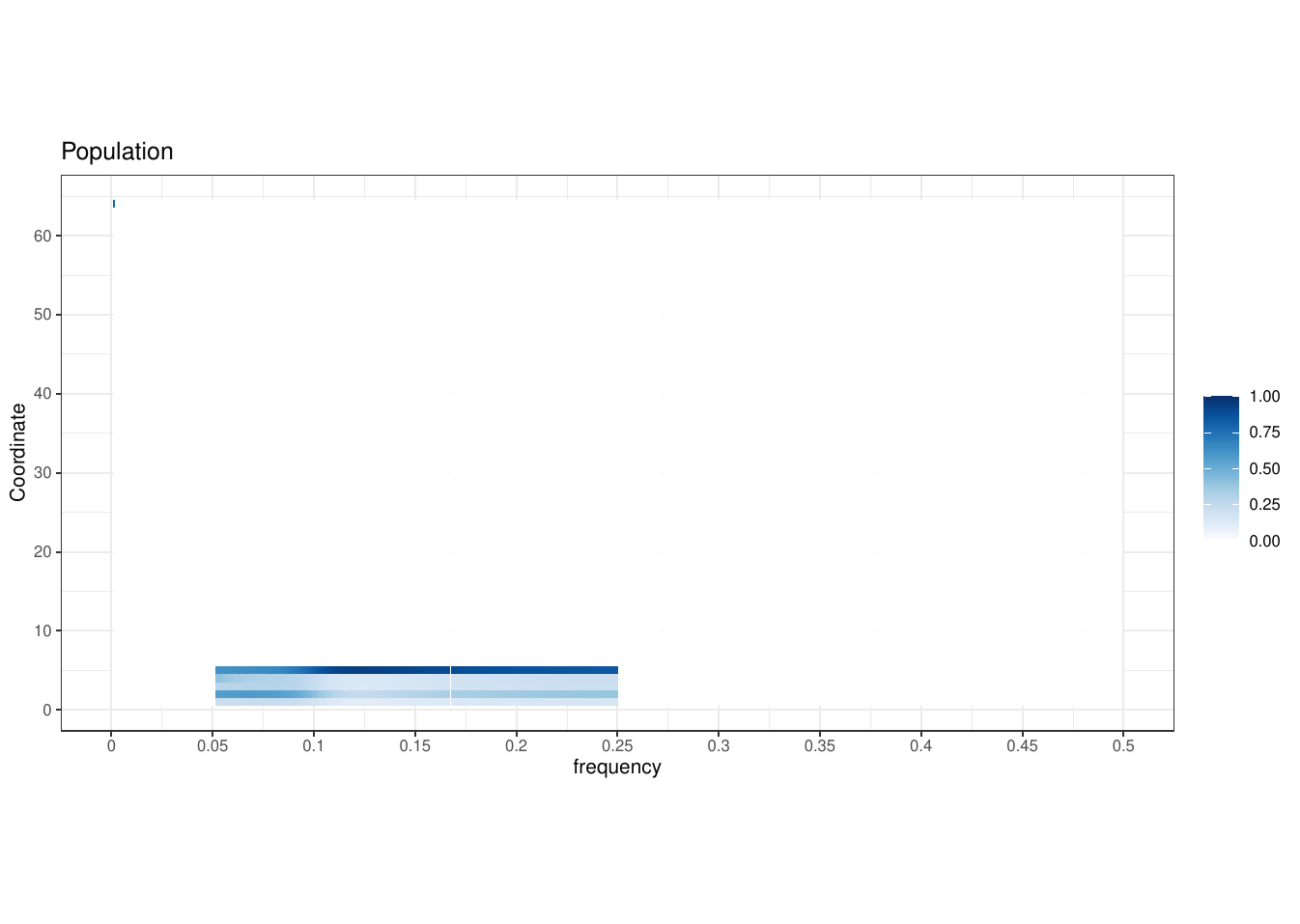} 
& \includegraphics[width=2.5in, height=3.4in, bb = 50 50 650 650]{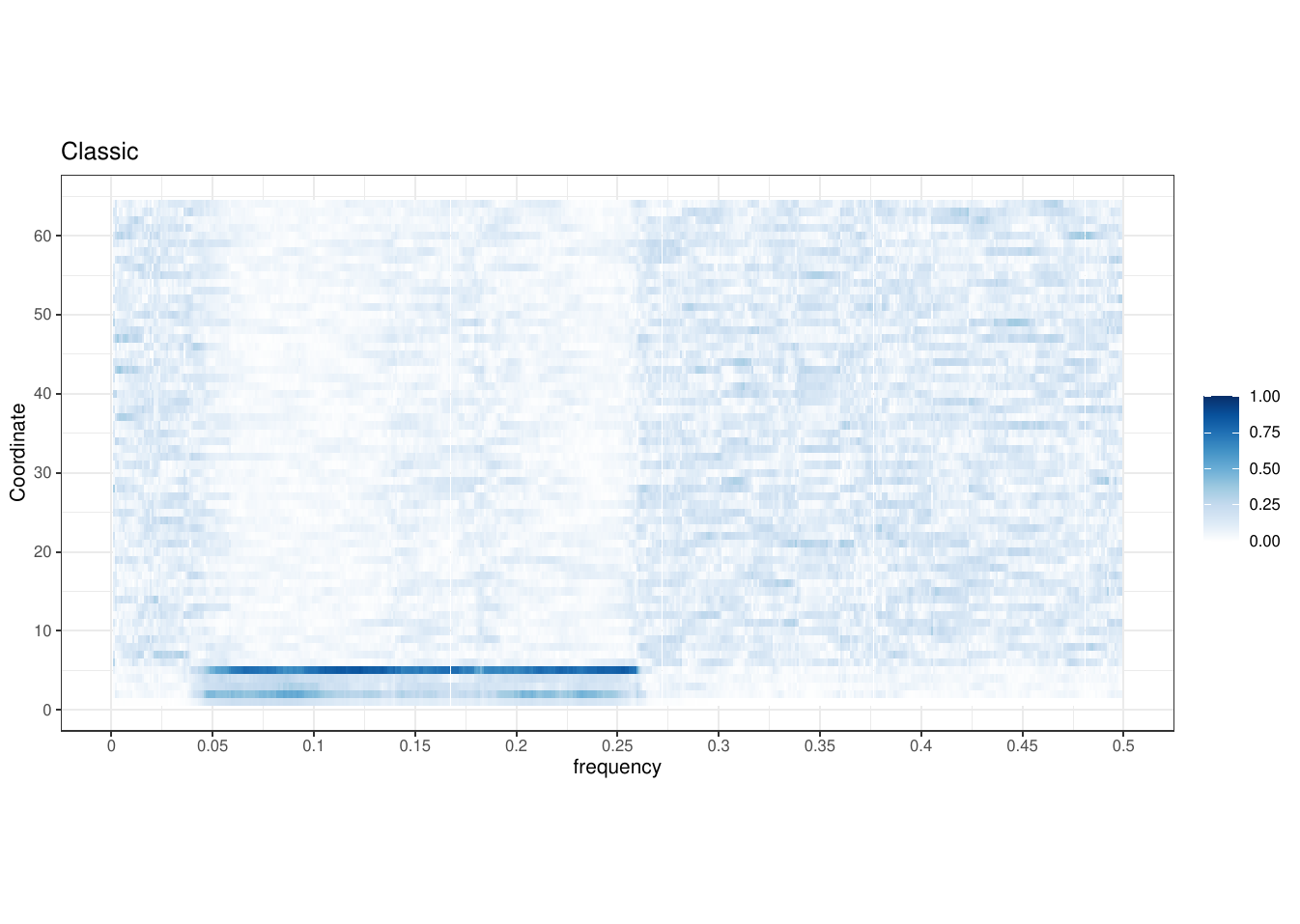}\vspace{-100pt} \\
\includegraphics[width=2.5in, height=3.4in, bb = 50 50 650 650]{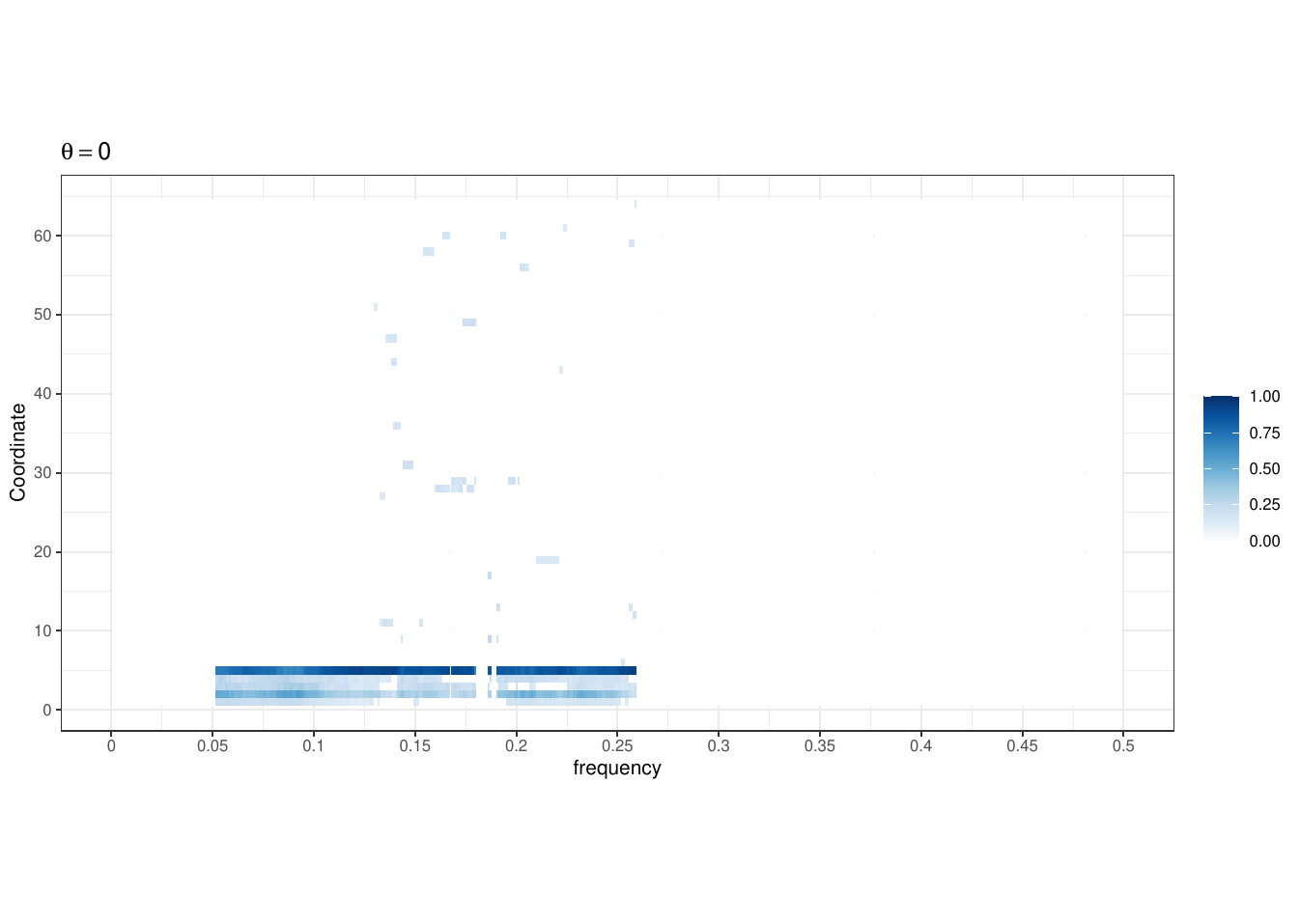}
& \includegraphics[width=2.5in, height=3.4in, bb = 50 50 650 650]{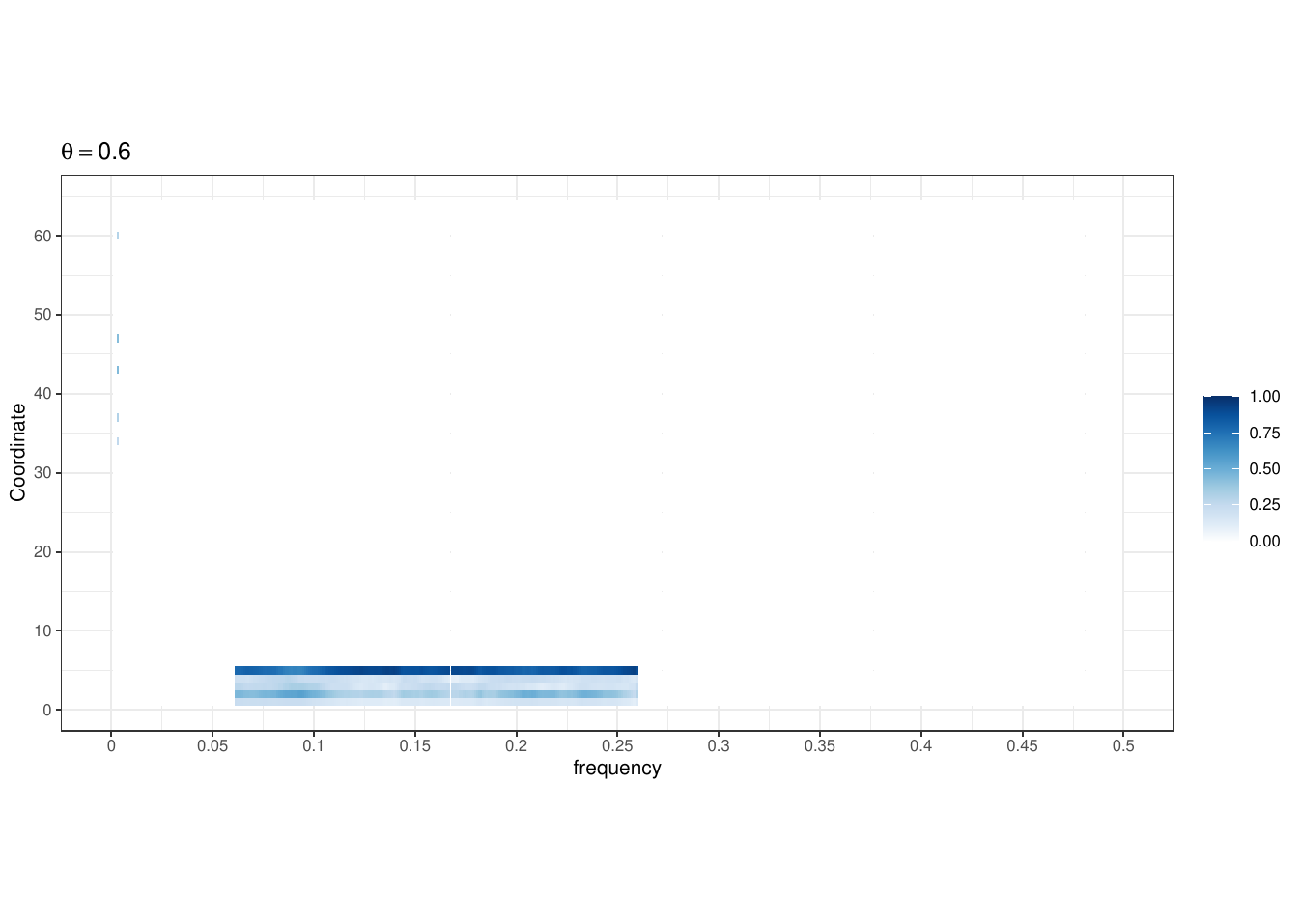}
\end{tabular}
         \caption{\textbf{Top left} panel represents the population leading eigenvector; \textbf{Top right} panel represents the classical estimate of the leading eigenvector; \textbf{Bottom left} panel represents the sparse estimate of the leading eigenvector with $\theta=0$; \textbf{Bottom right} panel represents the sparse estimate of the leading eigenvector with $\theta=0.6$.}
         \label{fig:evec_trajectories_2D_truncated}
\end{center}
\end{figure}

The processes have $d=1$ dimensional principle subspaces. The parameter $c$ controls the eigengap such that increases in $c$ reduces the power in the first $5$ coordinates, which weakens the signal strength and decreases the eigengap. The modulus of its leading eigenvector is displayed in the top left panel of Figure \ref{fig:evec_trajectories_2D_truncated} for $c=3$. The accuracy of the estimated principal subspace at any frequency, say $\omega$, depends on various factors including the dimension ($p$), sample size ($n$), and the eigengap. We considered two values of the dimension $p=64, 128$,  three sample sizes $n=1024, 2048, 4096$, and two values of signal strength $c = 1, 3$. One hundred realizations of the process for all combinations of $n$, $p$, and $c$ were generated, and the first eigenvector of the spectral density matrix at each frequency were estimated using the LSPCA algorithm and the classical method of \cite{brillinger2001time}. We evaluated the performance of the estimators using mean squared estimation error (MSEE) defined as $\sum_{k=1}^{100}\mathcal{D}(\mathcal{U}-\hat{\mathcal{U}}^{(k)})/100$, where $\mathcal{U}$ is the true 1-dimensional principal subspace and $\hat{\mathcal{U}}^{(k)}$ is the estimated one obtained from the $k$-th run of the simulation, and $\mathcal{D}$ is the distance defined in Section \ref{subsec:theoretical_analysis}.

\begin{figure}[t] 
\centering
\begin{tabular}{c}
 \includegraphics[scale = .55]{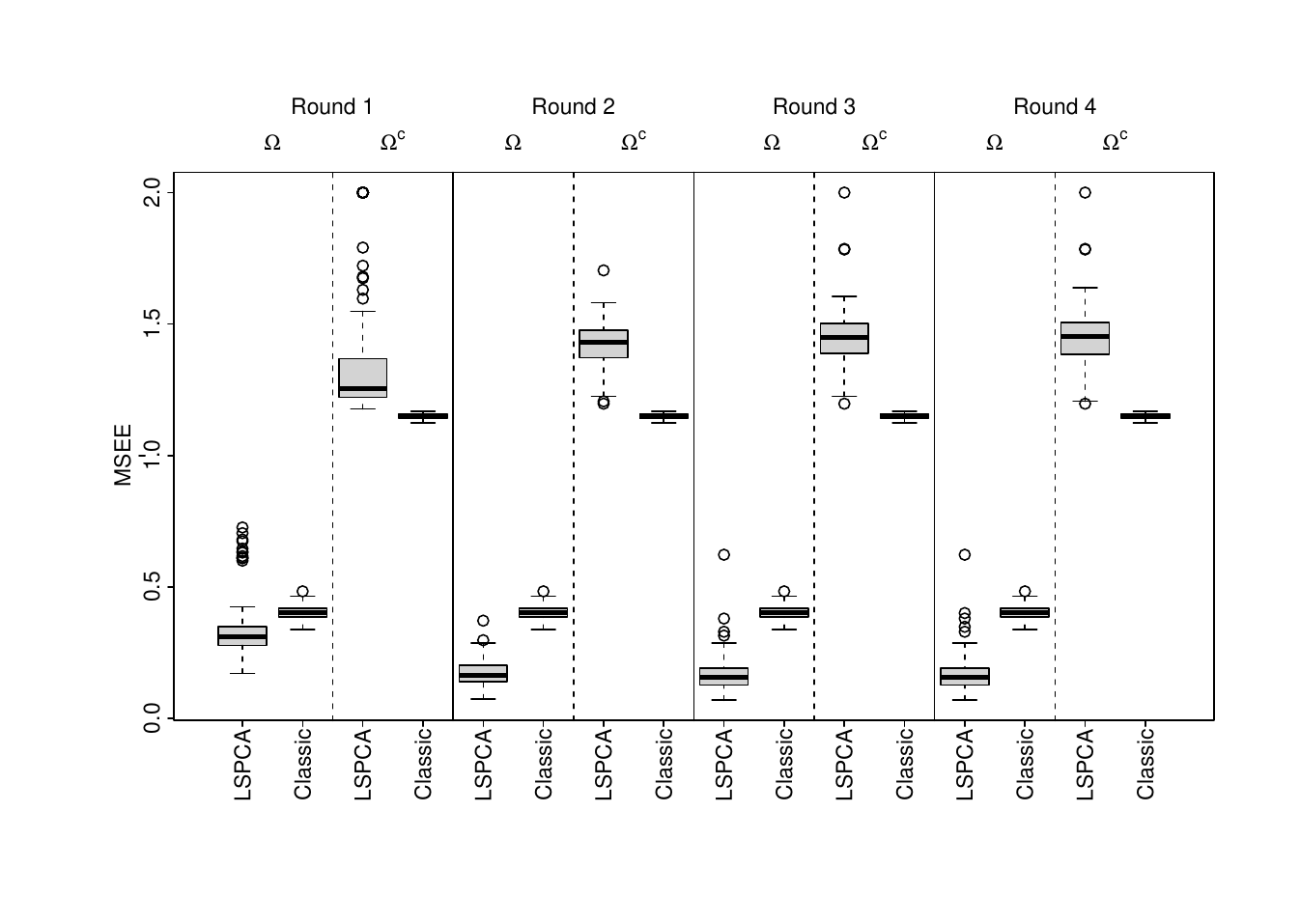}
\end{tabular}
         \caption{Side by side box plots of the MSEE of the LSPCA and the classical PCA for 4 times iteration of the parameter selection and estimation of the 1-dimensional principal subspaces over $\Omega$ and $\Omega^c$.}
         \label{fig:5_iters}
\end{figure}

\subsection{Results}
We present three sets of results here; additional results related to computation time, sparsity, and localization are provided in Appendix C.  The first set of results are illustrative results from one realization with $p=64$, $n=1024$ and $c=3$.  The top right panel of Figure \ref{fig:evec_trajectories_2D_truncated} displays the estimated modulus of the first principal component estimated from the classical estimator of \cite{brillinger2001time}. The high variability of the classical estimator is illustrated, which inhibits the separation of signal within the five channels in the support $\Omega$ from noise.  The bottom panels of Figure \ref{fig:evec_trajectories_2D_truncated} provide a visual illustration of the role of the sharing of information across principle subspaces at adjacent frequencies via $\theta$. Although the results of Section \ref{subsec:theoretical_analysis} indicate consistency even with $\theta = 0$, we see that without sharing information across frequency in the lower left panel where $\theta = 0$ at certain frequencies, other coordinates are selected rather than the five that have true signal.  The result is that supported subspaces cannot be interpreted as a continuous bands.  The estimate under $\theta = 0.6$ is provided in the lower right panel, where it can be seen that the sharing of information not only improves estimation in finite samples, but provides essential interpretation where frequency is localized into bands.  

\begin{figure}[t] 
\centering
\begin{tabular}{c}
 \includegraphics[scale = .5]{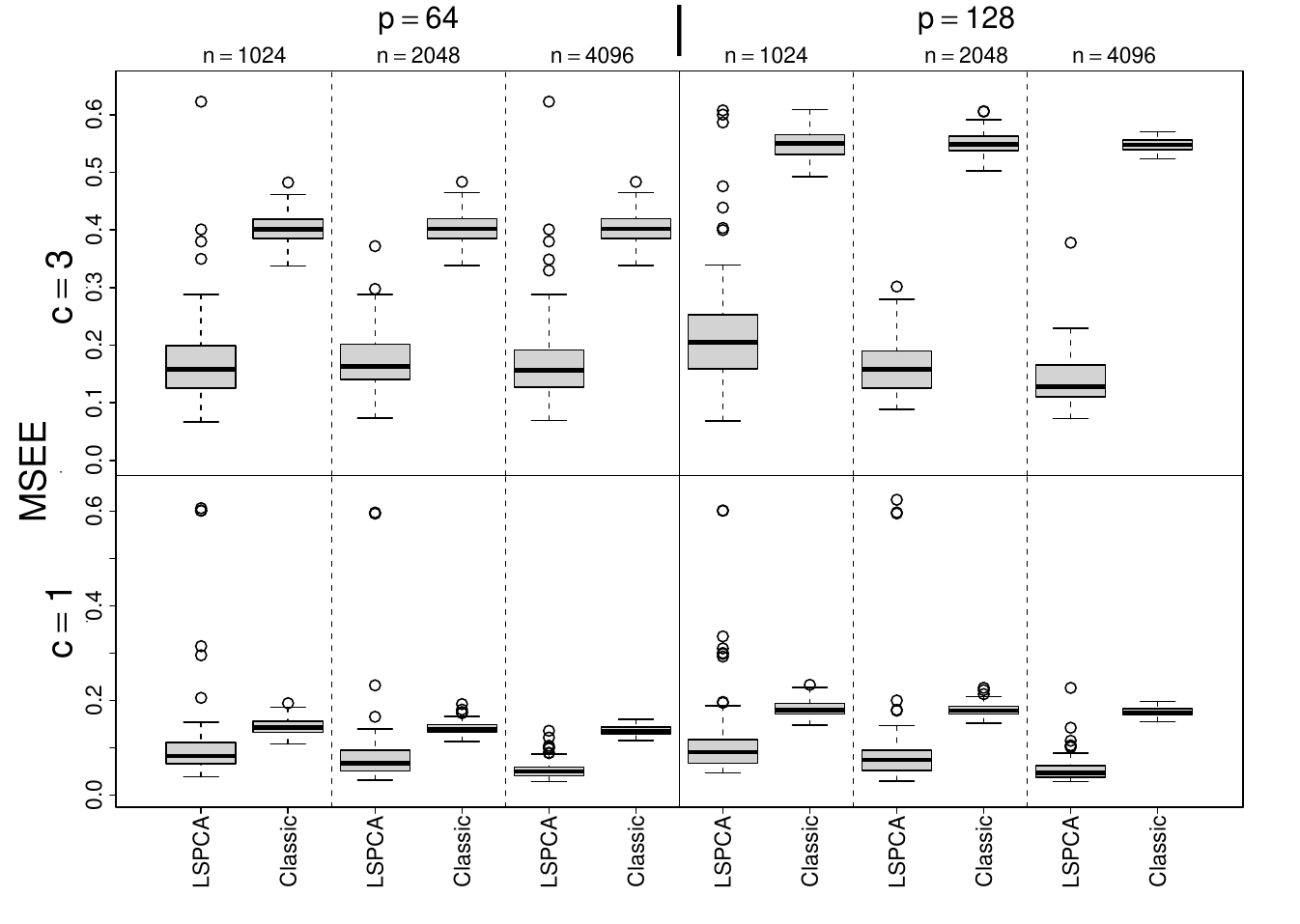}
\end{tabular}
         \caption{Side by side box plots of the MSEE of the LSPCA illustrating the effect of increasing sample size, dimension, and signal strength on estimation of the 1-dimensional principal subspaces over $\Omega$ and in comparison with the the classical principal subspace estimation.}
         \label{fig:increasing_n_&_p}
\end{figure}

The second set of results investigates the stability of the iterative procedure for selecting tuning parameters and the relative estimation at frequencies within the support $\Omega$ and outside the support $\Omega^c = [0,.5] \backslash \Omega$ for $p=64, n=1024, c=3$. Figure \ref{fig:5_iters} illustrates side by side boxplots of the mean estimation error of the LSPCA and the classical PCA over $\Omega$ and $\Omega^c$ after 1-4 iterations of the tuning parameter selection procedure.  Plots illustrates that, after two iterations of the parameter selection and estimation procedure, the estimated principal subspaces stabilizes, suggesting performing the iterative scheme in LSPCA for two iterations. In addition, all plots confirm that the LSPCA improves estimation of the underlying principal subspaces in $\Omega$ compared to the classical approach. Moreover, over $\Omega^c$, where the underlying principal subspaces are not sparse and signal strength is low, both estimates obtained by the classical PCA and the LSPCA do not perform well.  It can be seen that imposing sparsity through LSPCA  improves estimation within regions of scientific interest $\Omega$ while increasing error in regions that are not of scientific interest $\Omega^c$.

\begin{figure}[t]
    \centering
    \includegraphics[width=6.5in, height=3.5in]{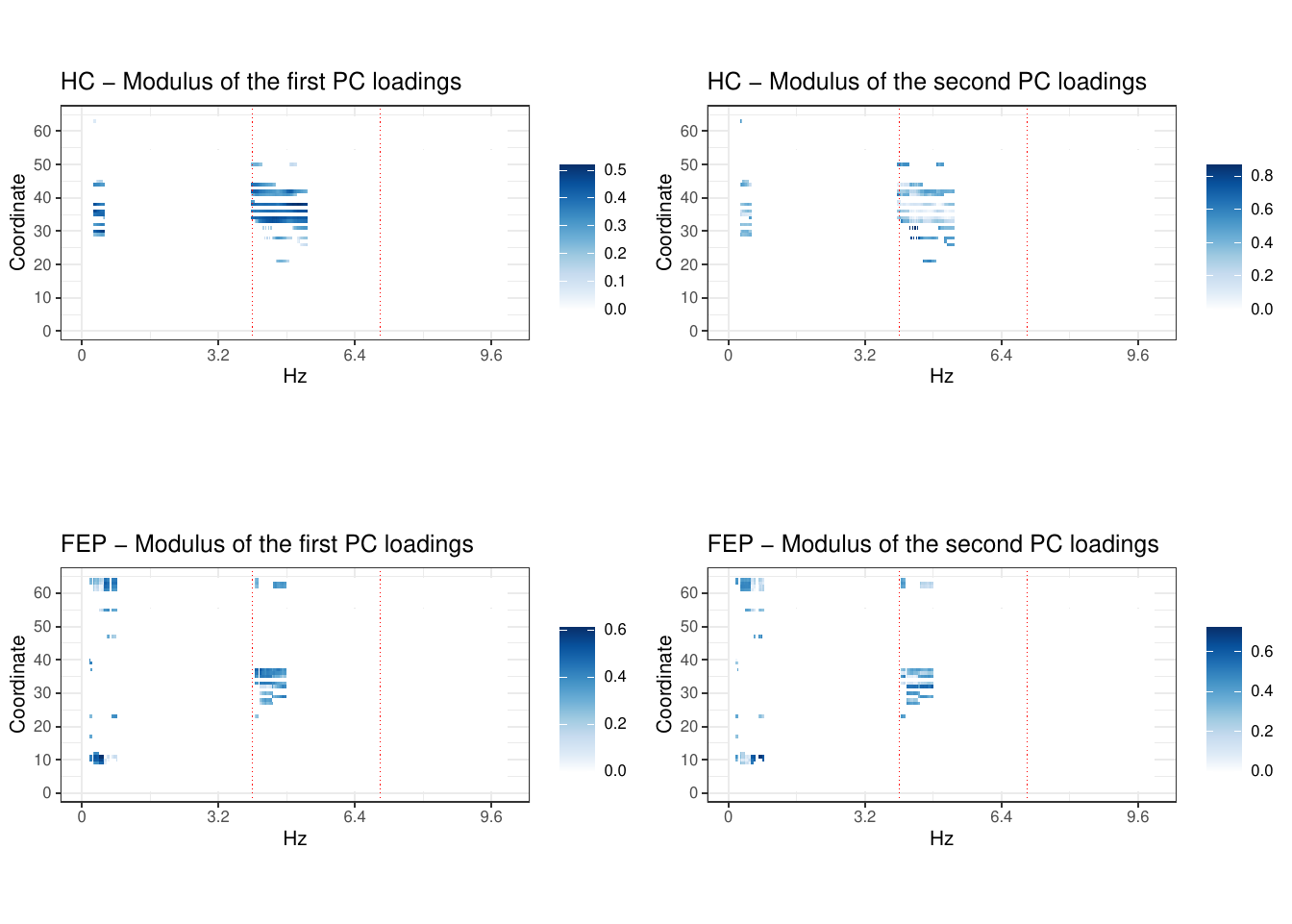}
    \caption{\textbf{Top left panel} Modulus of the first PC loadings of HC subject; \textbf{Top right pane}: Modulus of the second PC loadings of HC subject. \textbf{Bottom left panel} Modulus of the first PC loadings of FEP subject; \textbf{Bottom right pane}: Modulus of the second PC loadings of FEP subject. Vertical lines denote the boundaries of the traditional delta and theta bands.}
    \label{fig:Subjects_evecs}
\end{figure}

The third set of results investigates the relative effects of sample size $n$, dimension $p$ and signal strength $c$ for estimation within $\Omega$ as summarized in Figure \ref{fig:increasing_n_&_p}.  We found that LSPCA outperforms classical PCA in all settings, and confirms the results in Section \ref{subsec:theoretical_analysis} in that estimation error improves for higher sample sizes $n$, smaller dimensions $p$ and stronger larger eigengaps/smaller $c$.


\section{Data Analysis} \label{sec:Data_Analysis}
Evidence suggests that electrophysiological activity at different frequencies and locations of the brain can be biomarkers for schizophrenia \citep{renaldi2019predicting, zhang2021}. To illustrate the use of LSPCA to obtain interpretable frequency-channel analyses, we apply it in separate analyses of 64-channel EEG recording from two individuals. One is from a patient who is experiencing an episode of psychosis for the first time (FEP), which can be a precursor to the eventual development of schizophrenia, and has been emitted to a psychiatric emergency department. The second is from a healthy control with no history of mental illness (HC). During the recording, participants sat in a chair and relaxed with their eyes open. Data were recorded using a f10-10 system and were initially sampled at a rate of 250 Hz for one minute. Pre-processing consisted of down-sampling to 64 Hz and filtering using a 1 Hz high-pass filter and 58 Hz low-pass filter; removal of segments with large artifacts such as muscle activity or movements by a trained EEG data manager; and further removal of subtle artifacts such as ocular movement and cardiac signals via independent component analysis \citep{delorme2004}.

We applied LSPCA to understand personal electrophysiological activity by analyzing each subject's data separately.  We made exploratory comparisons and discuss future formal group analyses in Section \ref{sec:Discussion}.  The parameter selection procedure selected a sparsity level of $\hat{s}=8$ for both subjects, smoothing parameters of $\theta=0.2$ and $\theta=0.6$,  and localization parameters of $\eta=41$ and $\eta=52$ for the FEP and HC participants, respectively. Inspection of scree plots at all frequencies suggests that $d=2$. Two aspects of the results are presented here; additional explorations, including analyses of real and complex structures and of coherence, are provided in Appendix D. 

\begin{figure}[t] 
\begin{center}
\includegraphics[scale = .48]{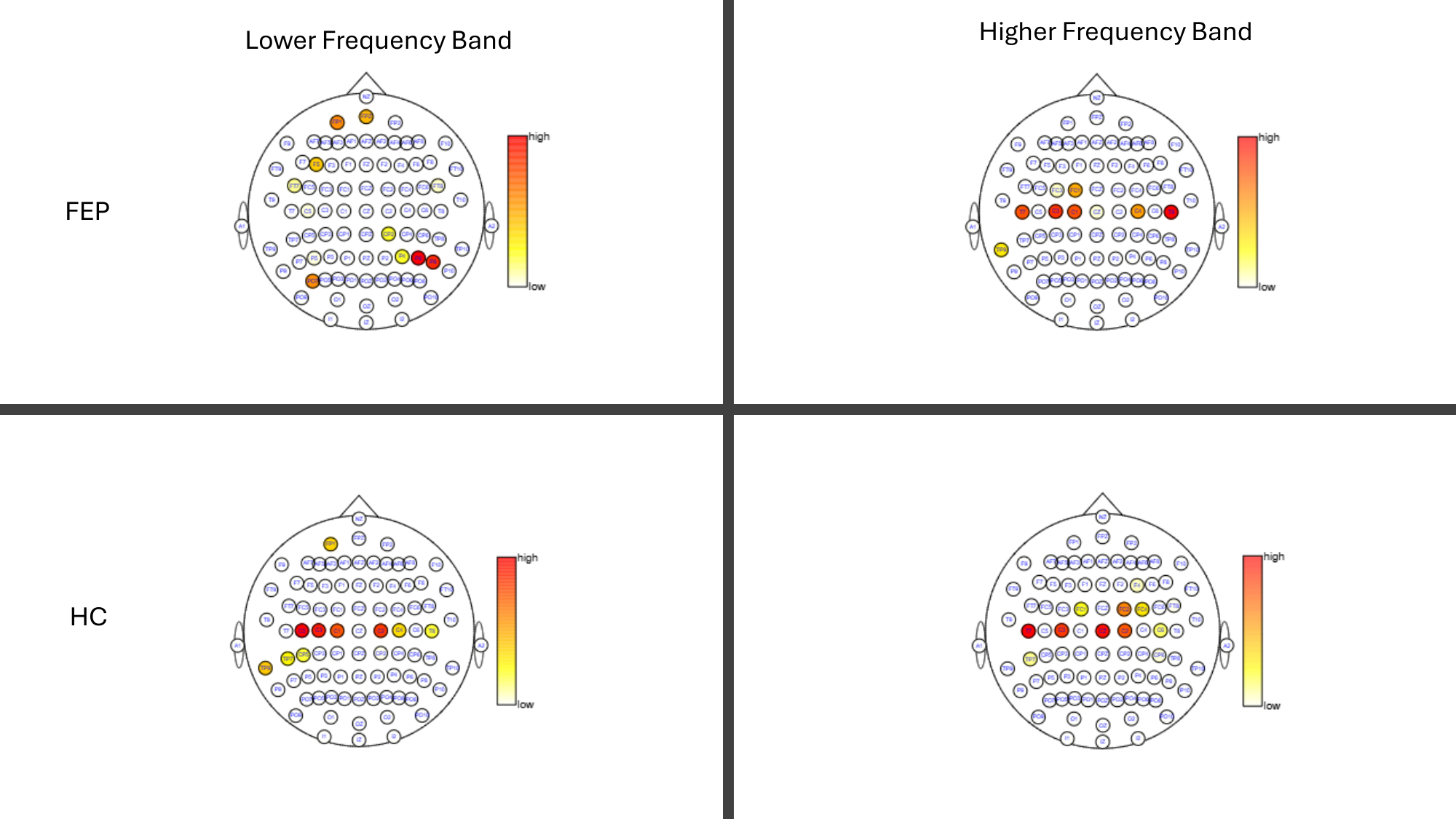}
\caption{Active channels in the FEP (top row) and HC (bottom row) participants for the lower frequency band (left column) and higher frequency band (right column).}
         \label{fig:eegcap}
\end{center}
\end{figure}

First, we investigated the modulus of PC loadings of the $d=2$ components as functions of coordinate/channel and frequency for the FEP participant and the healthy control in Figure \ref{fig:Subjects_evecs}.
Principal subspaces for both participants are localized within the union of a band of very-low frequencies that is contained within the traditional delta band of frequencies less than 4 Hz, and with a band that is contained in the traditional theta band of frequencies between 4 - 8 Hz. As power within the delta band is characteristic of unconscious processes and elevated during rest, and power within the theta band is involved in cognitive processes such as attention control that is elevated when eyes are open, this localization is not unexpected.  However, as opposed to collapsing power within the historically defined delta and theta bands, the boundaries of which are displayed in Figure \ref{fig:Subjects_evecs}, the data-driven LSPCA identified narrower, more parsimonious, person-specific bands. 

Next, we explored the spatial localization of power within these identified bands. Figure \ref{fig:eegcap} displays the diagonal elements of $\sum_{\omega\in B} \hat{\mbf f}_{\vartheta}(\omega)$ where B is the localized frequency band for both subjects. We see that power in both the delta and theta band are concentrated in the central regions of the HC; theta power in the FEP participant is also concentrated in the central region.  However, delta power for the FEP participant is primarily located in the frontopolar and parietal regions. This difference in distribution of delta brain activity between the FEP and HC participants is consistent with the findings of \cite{renaldi2019predicting}, who reported significantly higher delta power in the frontal and posterior regions in FEP compared to HC.

\section{Discussion} \label{sec:Discussion}
This article introduced what is, to the best of our knowledge, the first approach to conducting a PCA on a high-dimensional stationary time series whose principal subspace is sparse among variates, localized within frequency, and smooth as a function of frequency.  The method is by no means exhaustive and can potentially be extended to more complex scenarios.  The first of these is to nonstationary time series. Although the developed LSPCA routine could be applied directly using quadratic time-frequency transformations such as the local Fourier periodogram or SLeX periodogram, it is not yet obvious how to impose sparsity, frequency localization, or smoothness on the time-frequency subspaces that accounts for temporal ordering and information. A second extension is to the replicated time series setting in which a joint analysis is conducted on data where multivariate time series are observed for multiple subjects. As opposed to the analysis presented in Section \ref{sec:Data_Analysis}, where separate PCAs were conducted individually for two separate subjects, a PCA for replicated time series will find optimal eigenspaces for describing mutual and subject specific information.  The proposed LSPCA offers a regularized estimation approach. 
One might desire a confidence-based procedure that provides inference with regards to included frequency bands and retained channels.  Although the excursion set method that has been used to obtain inference for spatial clusters in image data appears to provide a natural solution for conducting inference with regards to frequency localization \citep{maullinsapey2023}, how to do this while imposing sparsity and extract the low-dimensional principal subspace could prove to be challenging.  Lastly, PCA is just one of many popular tools for extracting low-dimensional structures in time series data, each with different goals. Another popular tool, dynamic factor modeling, finds parsimonious dynamic factors that represent comovements of the variates. It has an elegant formulation that involves the eigendecomposition of spectral matrices and has favorable statistical properties when eigenvalues diverge \citep{forni2000}. Future work can investigate the extension of LSPCA to dynamic factor models when largest eigenvalues are bounded.

\begin{center}
{\large\bf CODE}
\end{center}
\appendix

 An R package to implement LSPCA and reproduce results presented in the manuscript is provided at \href{github.com/jamnamdari/LSPCA}{github.com/jamnamdari/LSPCA}. 

\newpage

\begin{center}
{\large\bf APPENDICES}
\end{center}
\appendix

\renewcommand{\thetable}{S\arabic{table}}
\renewcommand{\thefigure}{S\arabic{figure}}

\section{Algorithms}\label{Appendix:Algorithms} 
In Section \ref{subsec:LSPCA} the ADMM, SOAP, and the LSPCA algorithms are illustrated. Detailed description of the localization, sparsity, and smoothing parameter selection are provided in Sections \ref{subsec:localization_par}, \ref{subsec:sparsity_par}, and \ref{subsec:smoothing_par}, respectively. 
\subsection{LSPCA} \label{subsec:LSPCA}
\begin{algorithm}[H]
\caption{ADMM : Solving (\ref{eq:relaxed_convex_opt})} \label{alg:ADMM}

\SetKwInput{KwInput}{Input}                
\SetKwInput{KwOutput}{Output}              
\SetKwInOut{Parameter}{Parameters}
\SetKwInOut{Initialization}{Initialization}
\DontPrintSemicolon
\SetAlgoLined
\RestyleAlgo{boxruled}
  
  \KwInput{Spectral density estimator $\hat{\mbf\Sigma}$}
  \KwOutput{$\mbf U^{init}$}
\Parameter{Regularization parameter $\rho>0$, penalty parameter $\beta>0$, maximum number of iterations $T$}
\Initialization{$\mbf\Pi^{(0)} \leftarrow 0$, $\mbf\Phi^{(0)} \leftarrow 0$, $\mbf\Theta^{(0)} \leftarrow 0$ \;
}
\BlankLine
    \For{$t=0,\dots,T-1$}{
    $\mbf\Pi^{t+1} \leftarrow \mbox{argmin } \{\mathcal{L}(\mbf\Pi,\mbf\Phi^{(t)},\mbf\Theta^{(t)} +\beta/2\|\mbf\Pi-\mbf\Phi^{(t)}\|_F^2 \;|\; \mbf\Pi\in\mathcal{A}\}$\;
    $\mbf\Phi^{t+1} \leftarrow \mbox{argmin } \{\mathcal{L}(\mbf\Pi^{(t+1)},\mbf\Phi,\Theta^{(t)}) +\beta/2\|\mbf\Pi^{(t+1)}-\mbf\Phi\|_F^2 \;|\; \mbf\Phi\in\mathbb{R}^{p\times p}\}$\;
    $\mbf\Theta^{(t+1)} \leftarrow \mbf\Theta^{(t)} - \beta(\mbf\Pi^{(t+1)}-\mbf\Phi^{(t+1)})$\;
    }
    $\bar{\mbf\Pi}^{(T)} = \frac{1}{T}\sum_{t=0}^{T}\mbf\Pi^{(t)}$ \;
    Set the columns of $U^{init}$ to be the top $q$ leading eigenvectors of $\bar{\mbf \Pi}^{(T)}$ \;

  \KwOutput{$\mbf U^{init}$}
\end{algorithm}

\begin{algorithm}[H]
\caption{Projection} 

\SetKwInput{KwInput}{Input}                
\SetKwInput{KwOutput}{Output}              
\DontPrintSemicolon
\SetAlgoLined
\RestyleAlgo{boxruled}
  
  \KwInput{$\mbf \Phi^{(t)},\mbf \Theta^{(t)},\hat{\mbf \Sigma},\beta$}
  \KwOutput{$\mbf \Pi^{(t+1)}$}
\BlankLine
  \SetKwFunction{FMain}{Projection}
  \SetKwProg{Fn}{Function}{($\mbf \Phi^{(t)},\mbf \Theta^{(t)},\hat{\mbf \Sigma},\beta$)}{\KwRet}
  \Fn{\FMain}{
     \textbf{Eigenvalue Decomposition:} $\mbf A\mbf \Lambda \mbf Q^* \leftarrow \mbf \Phi^{(t)} + \mbf \Theta^{(t)}/\beta+\hat{\mbf \Sigma}/\beta$\;
     $(v_1^\prime,\dots,v_p^\prime)=\mbox{argmin}\{\|v-\text{diag}(\mbf \Lambda^{(t)})\|_2^2 \;|\; v\in\mathbb{R}^p, \sum_jv_j=q,v_j\in[0,1] \mbox{ for all } j \}$\;
     $\mbf \Pi^{(t+1)} \leftarrow \mbf Q\mbox{diag}\{v_1^\prime,\dots,v_p^\prime\}\mbf Q^*$\;
    \KwRet $\mbf \Pi^{(t+1)}$
  } 
\end{algorithm}

\begin{algorithm}[H]
\caption{Soft-Thresholding} 

\SetKwInput{KwInput}{Input}                
\SetKwInput{KwOutput}{Output}              
\DontPrintSemicolon
\SetAlgoLined
\RestyleAlgo{boxruled}
  
  \KwInput{$\mbf \Pi^{(t+1)},\mbf \Theta^{(t)},\rho,\beta$}
  \KwOutput{$\mbf \Phi^{(t+1)}$}

\BlankLine
    \For{$i,j\in\{1,\dots,d\}$}{
    $\mbf \Phi^{(t+1)}\leftarrow \begin{cases}
            0 & \mbox{if } |\mbf \Pi_{i,j}^{(t+1)}-\mbf \Theta_{i,j}^{(t)}/\beta|\leq \rho/\beta \\
            \substack{\mbox{sign}\left(\mbf \Pi_{i,j}^{(t+1)}-\mbf \Theta_{i,j}^{(t)}/\beta\right) \\ \times \left(|\mbf \Pi_{i,j}^{(t+1)}-\mbf \Theta_{i,j}^{(t)}/\beta|-\rho/\beta\right) } & \mbox{if } |\mbf \Pi_{i,j}^{(t+1)}-\mbf \Theta_{i,j}^{(t)}/\beta| > \rho/\beta
    \end{cases}$
    }
  \KwOutput{$\mbf \Phi^{(t+1)}$}
\end{algorithm}
\BlankLine

In the SOAP, the orthogonal iteration method is followed by a truncation step to enforce row-sparsity and further followed by taking another re-normalization step to enforce orthogonality. 

\begin{algorithm}[H]
\caption{SOAP} \label{alg:SOAP}

\SetKwInput{KwInput}{Input}                
\SetKwInput{KwOutput}{Output}              
\SetKwInOut{Parameter}{Parameters}
\SetKwInOut{Initialization}{Initialization}
\DontPrintSemicolon
\SetAlgoLined
\RestyleAlgo{boxruled}
  
  \KwInput{$\mbf {f}_n(\omega)$, initialization $U^{init}$}
\Parameter{Sparsity parameter $\hat{s}$, Maximum number of iteration $\Tilde{T}$}
\Initialization{$\tilde{\mbf U}^{(T+1)}\leftarrow\mbox{Truncate}(\mbf U^{init},\hat{s})$, $\mbf U^{(T+1)}$, $\mathsf{\mbf R}_2^{(T+1)}\leftarrow\mbox{Thin-QR}(\Tilde{\mbf U}^{(T+1)})$}
\BlankLine
    \For{$t=T+1\dots,T+\Tilde{T}-1$}{
    $\quad \Tilde{\mbf V}^{(t+1)}\leftarrow{\mbf f_n}(\omega)\mbf U^{(t)}$\;
    $\quad \mbf V^{(t+1)},\mathsf{\mbf R}_1^{(t+1)}\leftarrow\mbox{Thin-QR}(\Tilde{\mbf V}^{(t+1)})$\;
    $\quad \Tilde{\mbf U}^{(t+1)}\leftarrow\mbox{Truncate}(\mbf V^{(t+1)},\hat{s})$ \label{eq:Truncate}\;
    $\quad \mbf U^{(t+1)},\mathsf{\mbf R}_2^{(t+1)}\leftarrow\mbox{Thin-QR}(\Tilde{\mbf U})^{(t+1)}$\;
    }
  \KwOutput{$\mbf U^{T+\tilde{T}}$}
\end{algorithm}

\begin{algorithm}[H]
\caption{$\Tilde{\mbf U}^{(t+1)}\leftarrow\mbox{Truncate}(\mbf V^{(t+1)},\hat{s})$ : \\ The Truncate function, where $V_{i,.}$ denotes the $i$-th row vector of $\mbf V$.} 

\SetKwInput{KwInput}{Input}                
\SetKwInput{KwOutput}{Output}              
\DontPrintSemicolon
\SetAlgoLined
\RestyleAlgo{boxruled}
  
\BlankLine
  \SetKwFunction{FMain}{Truncate}
  \SetKwProg{Fn}{Function}{($\mbf V^{(t+1)},\hat{s}$)}{\KwRet}
  \Fn{\FMain}{
     \textbf{Row Sorting:} $\mathcal{I}_{\hat{s}}\leftarrow\mbox{The set of row index $i$`s corresponding to the top $\hat{s}$ largest $\|V_{i,.}^{(t+1)}\|_2$'s }$\;
      \For{$i\in\{1,\dots,p\}$}{
        $\quad \Tilde{U}_{i,.}^{(t+1)}\leftarrow\mathbbm{1}[i\in\mathcal{I}_{\hat{s}}]V_{i,.}^{(t+1)}$\;
    }
    \KwRet $\Tilde{\mbf U}^{(t+1)}$
  } 
\end{algorithm}
\BlankLine

The combined algorithm is as follows
\BlankLine

\begin{algorithm}[H]
\caption{LSPCA} \label{alg:LSPCA}

\SetKwInput{KwInput}{Input}                
\SetKwInput{KwOutput}{Output}              
\SetKwInOut{Parameter}{Parameters}
\DontPrintSemicolon
\SetAlgoLined
\RestyleAlgo{boxruled}
  
  \KwInput{{ $\{\mbf {f}_n(\omega_\ell)\}_{\ell=1}^{n/2}$ }}
\Parameter{Regularization parameter $\rho>0$, penalty parameter $\beta>0$, maximum number of iterations of ADMM $T$, Sparsity parameter $\hat{s}$, Maximum number of iteration of SOAP $\Tilde{T}$, Frequency parameter $\eta$, Smoothing parameter $\theta$.}

\BlankLine
    \textbf{ADMM:} $\mbf \Pi_1\leftarrow\mbox{ADMM}\left[\mbf {f}_n(\omega_1)\right]$\;
    Set the columns of $\mbf U^{init}$ to be the top $d$ leading eigenvectors of $\mbf \Pi_1$\;
    $\tilde{\mbf U}^{(0)}\leftarrow\mbox{Truncate}(\mbf U^{init},\hat{s})$, $\mbf U^{init}\leftarrow\mbox{Thin-QR}(\tilde{\mbf U}^{(0)})$\;
    $\hat{\mbf U}^{(0)}\leftarrow\mbox{SOAP}\left[\hat{\mbf {f}}^{(\theta)}(\omega_1),\mbf U^{init}\right]$\;
    \For{$\ell\in\{1,\dots,n/2-1\}$}{
        $\quad \hat{\mbf U}(\omega_{\ell+1})\leftarrow\mbox{SOAP}\left[\hat{\mbf {f}}^{(\theta)}(\omega_{\ell+1}),\hat{\mbf U}(\omega_\ell)\right]$\;
    }
    $\hat{\beta}_1,\dots,\hat{\beta}_{n/2}\leftarrow\mbox{Solve (\ref{eq:sum_beta_h}) by linear programming.}$\; 
  \KwOutput{$\{\hat{\mbf U}(\omega_\ell),\hat{\beta}_\ell\}_{\ell=1}^{n/2}$}
\end{algorithm}

\BlankLine
\BlankLine

\subsection{Localization Parameter Selection} \label{subsec:localization_par}
For selecting the localization parameter $\eta$, we propose to use information criteria based on the Whittle likelihood, or the large sample complex Gaussian distribution of the {periodogram} $d_X(\omega_\ell)\overset{D}{\rightarrow}N^c\left[0,\mbf f(\omega_\ell)\right]$. This is done given $d$ and initially without smoothing $\theta = 0$. It can be done iteratively after updating other parameters; simulations suggest that it is robust to the selection of $s$ and $\theta$.  We apply the LSPCA algorithm to obtain the eigenvectors $\hat{U}_j(\omega)$, $j=1,\dots,d$, which are used to compute the estimator of the spectrum of the $d$-dimensional principle series 
{
\[
\hat{\mbf f}_\vartheta(\omega_\ell) = \mbf {f}_n(\omega_\ell)\hat{U}_1(\omega_\ell)\hat{U}_1(\omega_\ell)^\dagger + \dots + \mbf {f}_n(\omega_\ell)\hat{U}_d(\omega_\ell)\hat{U}_d(\omega_\ell)^\dagger.   
\]
}
Let $h_\ell = \tr \left[\mbf{f}_n(\omega_\ell)\hat{\mbf V}(\omega_\ell)\hat{\mbf V}(\omega_\ell)^\dagger\right]$ be the maximum total d-dimensional power at frequency $\omega_{\ell}$, $\left\{ h_{(1)}, \dots, h_{(n/2)} \right\}$ be their order statistics, $\check{L} = \left\{\ell_{(1)}, \dots, \ell_{(n/2)} \right\}$ be the corresponding Fourier frequencies where $h_{(j)} = h_{\ell_{(j)}}$, and $\check{L}_\eta = \{\ell_{(1)},\dots,\ell_{(\eta)}\}$ be the $\eta$  fundamental frequencies indexing the location of the $\eta$ d-dimensional principal subspaces with highest total power.   Our desire is to select $\eta$ such that the power from the principle series that are not distinguishable from average power from the residual series are not maintained. We estimate average power in the residual series $\epsilon(t)$ as $\hat{\mbf \Sigma} = \sum_{\ell\in \check{L}/\check{L}_\eta}d_{X}(\omega_\ell)d_{X}(\omega_\ell)^\dagger/\left(|\check{L}| - |\check{L}_\eta|\right)$,
and consider the estimated spectral density matrices  $\hat{\mbf G}_{\ell} =  \mathbb{I}\{\ell \in \check{L}_\eta\}\hat{\mbf f}_\vartheta(\omega_\ell) + \hat{\mbf\Sigma}$.
The log-Whittle likelihood are estimated by 
\[
\log(\mathcal{L}) = -\sum_{\ell=1}^{n/2} \{p\log{\pi} + \log{|\hat{\mbf G}_\ell|} + [d_X(\omega_\ell)][\hat{\mbf G}_\ell]^{-1}[d_X(\omega_\ell)]^\dagger\},
\]
which we use to define standard information criteria for $\eta$ including $AIC = -2\log(\mathcal{L}) + 2\eta$, $AICc = -2\log(\mathcal{L}) + 2\eta + \frac{2\eta^2 + 2\eta}{n-\eta-1}$, and $BIC = -2\log(\mathcal{L}) + \log(n)\eta$.
The localization parameter $\eta$ is selected to minimizing an information criteria. 

\subsection{Sparsity Parameter Selection} \label{subsec:sparsity_par}
For selecting the sparsity level of the underlying process, we propose to use $k$-folds cross validation, with the Mahalanobis distance to evaluate the performance of fitted model in the validation step.  The procedure splits the data into $k$ blocks, or folds, of length $n/k$, and define the time series from the $r$th fold as $X^{(r)}(t) = X\left[t-\left(r-1\right)n/k\right]$, $t=(r-1)n/k+1$, $r=1,\dots,k$, { with discrete Fourier transforms $\{d_{X^{(r)}}(\omega_\ell), \ell = 1,\dots,n/(2k)\}$}. To make sure fundamental frequencies in the training and validation step match, $n$ should be divisible by $k$. In addition, let { $\mbf {f}_n^{(1)}(\omega),\dots,\mbf {f}_n^{(k)}(\omega)$} be the estimated spectral density matrices obtained from each partition. We consider 
 \begin{equation}
     \mbf {f}_n^{(-r)}(\omega) = \frac{1}{r-1}\sum_{j\neq r}\mbf {f}_n^{(j)}(\omega)
 \end{equation}
as an estimate of the spectral density matrix $\mbf f(\omega)$ removing data from the $r$th fold. We then define the rank-d principal subspace spectrum estimate $\hat{\mbf f}^{(-r)}_\vartheta(\omega)$, 
average power from the residual series $\hat{\mbf \Sigma}^{(-r)}$, and spectral matrix estimate $\hat{\mbf G}^{(-r)}_{\ell}$ for data outside of the $r$th fold in a manner analogous to the definitions in Section \ref{subsec:localization_par} using all data.

We then consider the Mahalbanois distance between the discrete Fourier transform of the data from the $r$th fold from the estimated spectral matrix from the rest of the data at the $\ell$th frequency
\[
\mathcal{D}\left[d_{X^{(r)}}(\omega_\ell)\right] = d_{X^{(r)}}(\omega_\ell)[\hat{\mbf G}_\ell^{(-r)}]^{-1}d_{X^{(r)}}(\omega_\ell)^\dagger,
\]
and average over folds and frequencies to estimate the average Mahalbanois distance
\begin{equation}
    \bar{\mathcal{D}}(s) = \frac{1}{k} \sum_{1=1}^{k}\sum_{\ell=1}^{n/(2k)}\mathcal{D}\left[d_{X^{(r)}}(\omega_\ell)\right].
\end{equation}
We select the sparsity level $\hat{s}$ that minimizes this distance
\begin{equation}
    \hat{s} = \argmin_s \bar{\mathcal{D}}(s).
\end{equation}

\subsection{Smoothing Parameter Selection} \label{subsec:smoothing_par}
 We select the smoothing parameter $\theta$ using $k$-folds cross validation, with the Mahalanobis distance to evaluate the performance of fitted model in the validation step. The procedure is similar to the procedure for the selection of the sparsity, but considers a fixed value of sparsity and varying levels of $\theta$.

\section{Proofs}
\label{app:proofs}

Appendix B is devoted to the proof of the theorems presented in the main manuscript. 
We first reiterate the model and the assumptions under which we developed the theory in Section \ref{subsec:model_assumptions}, which are presented in the paper and restated here. This is followed by the proof of the main theoretical results provided in Section \ref{subsec:proof_main}. Section \ref{subsec:proof_linear_programming} contains a proof for Proposition \ref{prop:linear_programming}. In Section \ref{subsec:proof_thm_ADMM_complex} we present a proof for Theorem \ref{thm:consistency}, Part (I). Proof of Part (II) of Theorem \ref{thm:consistency} is presented in Section \ref{subsec:proof_thm_consistency_f_theta}. Proofs of the main results follow from the Theorems and Lemmas that are presented in Section \ref{subsec:preliminary_thm_lemma}. Appendix B is concluded with some background definitions and lemmas.  

\subsection{Model Assumptions} \label{subsec:model_assumptions}
Let $\mathcal{M}_d(f,d,s^*)$ be the class of $p$-dimensional stationary time series $\{X(t): t\in\mathbb{Z}\}$ satisfying the following assumptions. 

\begin{assumption}\label{as:sparsity}
    For all $\omega\in[0,1)$, the $d$-dimensional principal subspace of $\mbf f(\omega)$ is continuous as a function of $\omega$, is $s^*$-sparse and these principal subspaces share the same support. In addition, we assume that $\inf_{\omega\in[0,.5]}\lambda_d(\omega)-\lambda_{d+1}(\omega)>\delta>0$ for some constant $\delta$.
\end{assumption}

\begin{assumption}\label{as:mixing}
    There exists constants $c_1$ and $\gamma_1\geq 1$ such that for all $h\geq 1$, the $\alpha$-mixing coefficient satisfies 
        $\alpha(h) \leq \exp\{-c_1h^{\gamma_1}\}.$
\end{assumption}

\begin{assumption} \label{as:concentration}
    There exists positive constants $c_2$ and $\gamma_2$ such that for all $v\in\mathbb{S}^{p-1}(\mathbb{C})$ and all $\lambda\geq 0$, we have 
        $\mathbb{P}\left(|v^*X(t)|\geq\lambda\right)\leq 2\exp\{-c_2\lambda^{\gamma_2}\}$  for all $t\in\mathbb{Z}.$
\end{assumption}

\begin{assumption} \label{as:gamma}
    Define $\gamma$ via $\frac{1}{\gamma}=\frac{1}{\gamma_1}+\frac{2}{\gamma_2}$, where $\gamma_1$ and $\gamma_2$ are given in Assumptions (\ref{as:mixing}) and (\ref{as:concentration}). We assume that $\gamma<1$.
\end{assumption}

\subsection{Proof of the Main Results} \label{subsec:proof_main}
{
\subsubsection{Computations related to reformulation and relaxation of the optimization problem.}
To see why
$\|\mbf\Pi_\ell-\mbf\Pi_{\ell+1}\|_F=\sqrt{2}\|\mbf\Pi_{\ell+1} - \mbf\Pi_\ell\mbf\Pi_{\ell+1}\mbf\Pi_\ell\|_F$, note that by Lemma \ref{lemma:subspace_distance}, $\|\mbf\Pi_\ell-\mbf\Pi_{\ell+1}\|_F^2 = 2\|\mbf\Pi_\ell^\perp\mbf\Pi_{\ell+1}\|_F^2$ and
\begin{align*}
    \|\mbf\Pi_\ell^\perp\mbf\Pi_{\ell+1}\|_F^2 &= \tr\{\mbf\Pi_{\ell+1}\mbf\Pi_{\ell}^\perp\mbf\Pi_{\ell+1}\mbf\Pi_{\ell}^\perp\} \\
    &= \tr\{\mbf\Pi_{\ell}^\perp\mbf\Pi_{\ell+1}\mbf\Pi_{\ell}^\perp\mbf\Pi_{\ell}^\perp\mbf\Pi_{\ell+1}\mbf\Pi_{\ell}^\perp\} \\
    &= \|\mbf\Pi_\ell^\perp\mbf\Pi_{\ell+1}\mbf\Pi_\ell^\perp\|_F
\end{align*}
In addition, 
\begin{align*}
    \mbf\Pi_{\ell+1} = (\mbf\Pi_{\ell}+\mbf\Pi_{\ell}^\perp)\mbf\Pi_{\ell+1}(\mbf\Pi_{\ell}+\mbf\Pi_{\ell}^\perp) =\mbf\Pi_\ell\mbf\Pi_{\ell+1}\mbf\Pi_{\ell} + \mbf\Pi_\ell^\perp\mbf\Pi_{\ell+1}\mbf\Pi_{\ell}^\perp.
\end{align*}
Thus, $\mbf\Pi_{\ell+1} - \mbf\Pi_\ell\mbf\Pi_{\ell+1}\mbf\Pi_\ell = \mbf\Pi_\ell^\perp\mbf\Pi_{\ell+1}\mbf\Pi_\ell^\perp$, implying that, $\|\mbf\Pi_\ell-\mbf\Pi_{\ell+1}\|_F=\sqrt{2}\|\mbf\Pi_{\ell+1} - \mbf\Pi_\ell\mbf\Pi_{\ell+1}\mbf\Pi_\ell\|_F$.

The equality
\[\tr\{\mbf f_n(\omega_{\ell+1})\mbf\Pi_{\ell+1} - \theta\mbf f_n(\omega_{\ell+1})[\mbf\Pi_{\ell+1}-\mbf\Pi_{\ell}\mbf\Pi_{\ell+1}\mbf\Pi_{\ell}]\} = \tr\{(1-\theta)\mbf f_n(\omega_\ell+1)\mbf\Pi_{\ell+1} + \theta\mbf\Pi_\ell\mbf f_n(\omega_{\ell+1})\mbf\Pi_\ell\mbf\Pi_{\ell+1}\},\]
can be seen by noting that
\begin{align*}
    &\tr\{\mbf f_n(\omega_{\ell+1})\mbf\Pi_{\ell+1} - \theta\mbf f_n(\omega_{\ell+1})[\mbf\Pi_{\ell+1}-\mbf\Pi_{\ell}\mbf\Pi_{\ell+1}\mbf\Pi_{\ell}]\} \\
    &= \tr\{(1-\theta)\mbf f_n(\omega_{\ell+1})\mbf\Pi_{\ell+1}\} + \theta\tr\{\mbf f_n(\omega_{\ell+1})\mbf\Pi_\ell\mbf\Pi_{\ell+1}\mbf\Pi_{\ell}\} \\
    &= \tr\{(1-\theta)\mbf f_n(\omega_{\ell+1})\mbf\Pi_{\ell+1}\} + \theta\tr\{\mbf\Pi_{\ell}\mbf f_n(\omega_{\ell+1})\mbf\Pi_\ell\mbf\Pi_{\ell+1}\}.
\end{align*}
}

\subsubsection{Proof of Proposition \ref{prop:linear_programming}} \label{subsec:proof_linear_programming}
\begin{proof}
    We prove the claim by induction. First note that the objective function is increasing in $\beta_j$'s and thus the maximum is attained at $\sum_{\ell=1}^{K}\beta_\ell=\eta$.\\
    When $K=2$, and $\eta=1$, order $h_1,h_2$ in decreasing order as $h_{(1)},h_{(2)}$ and their corresponding coefficients as $\beta_{(1)}$ and $\beta_{(2)}$. Note that $\beta_1+\beta_2=1$ and 
    \[
    h_{(1)} = \beta_{(1)}h_{(1)}+(1-\beta_{(1)})h_{(1)} >\beta_{(1)}h_{(1)}+(1-\beta_{(1)})h_{(2)}=\beta_{(1)}h_{(1)}+\beta_{(2)}h_{(2)}.
    \]
    For any $\eta<K$ assume that the claim of the proposition hols for any $h_1,\dots,h_K\in\mathbb{R}^+$. Since $\sum_{\ell=1}^{K}\beta_\ell=\eta$ and $0\leq\beta_\ell\leq 1, \ell=1,\dots,K$, by the induction hypothesis we have
    \begin{align*} 
        \sum_{j=1}^{\eta}h_{(j)} &> \sum_{j=1}^{K-1}\beta_{(j)}h_{(j)}+(\beta_{(K)}+\beta_{(K+1)})h_{(K)} \\
        &> \sum_{j=1}^{K+1}\beta_{(j)}h_{(j)}.
    \end{align*}
    The second inequality holds since $h_{(K)}\geq h_{K+1}$. This shows that for $K+1$ the optimum is attained at $\beta_{(1)}=\dots=\beta_{(\eta)}=1,\beta_{(\eta+1)}=\dots=\beta_{(K)}=\beta_{(K+1)}=0$ hence, by induction, the claim follows for any $K\in\mathbb{N}$. This shows that when $\eta$ is fixed, for any $K$ the conclusion of the theorem holds. 
    
    Now, we show that the result holds for any $\eta$. For $K=2$ the result is trivial. For any $K\geq3$ set $\eta=1$, then
    \begin{align}
        (1-\beta_{(1)})h_{(1)} = (\beta_{(2)}+\beta_{(3)}+ \dots + \beta_{(K)})h_{(1)} \geq \beta_{(2)}h_{(2)} + \beta_{(3)}h_{(3)} + \dots + \beta_{(K)}h_{(K)}
    \end{align}
    showing that $h_{(1)} \geq  \beta_{(1)}h_{(1)} + \beta_{(2)}h_{(2)} + \dots + \beta_{(K)}h_{(K)}$, for any $0\geq\beta_1,\dots,\beta_K\geq 1$ such that $\beta_1+\dots+\beta_K=1$. Suppose the conclusion of the theorem holds for an $\eta \leq K$ we show that the conclusion also holds for $\eta+1$. 
    
    For $\eta+1$ By the assumptions of the theorem, $\beta_{(1)}+\dots+\beta_{(K)} = \eta+1$, therefore, $\eta+1\geq\beta_{(2)}+\dots+\beta_{(K)}\geq\eta$. Let $1\geq\alpha:=\beta_{(2)}+\dots+\beta_{(K)}-\eta\geq 0$. Note that $\beta_{(1)}+\alpha=1$. Since the for $K=2, \eta=1$ the conclusion of the theorem holds, we have 
    \begin{align}
        h_{(1)} \geq \beta_{(1)}h_{(1)} + \alpha(h_{(2)}+\dots+h_{(K)})
    \end{align}
    In addition, for any $\tilde{\beta}_{(2)},\dots,\tilde{\beta}_{(K)}$ such that $\tilde{\beta}_{(2)}+\dots+\tilde{\beta}_{(K)}=\eta$, by the indication hypothesis, we have 
    \begin{align}
        h_{(2)}+\dots+h_{(\eta+1)} \geq \tilde{\beta}_{(2)}h_{(2)},\dots,\tilde{\beta}_{(K)}h_{(K)}.
    \end{align}
    In particular, for any choice of $0\leq\tilde{\beta}_{(2)},\dots,\tilde{\beta}_{(K)}\leq 1$ such that  $\tilde{\beta}_{(2)}+\dots+\tilde{\beta}_{(K)}=\eta$ and 
    \[
    \tilde{\beta}_{(2)}+\dots+\tilde{\beta}_{(K)}+\alpha=\beta_{(2)}+\beta_{(3)}+ \dots + \beta_{(K)}
    \]
    we conclude that 
    \begin{align}
        h_{(1)}+h_{(2)}+\dots+h_{(\eta+1)} &\geq \beta_{(1)}h_{(1)} + (\tilde{\beta}_{(2)}+\alpha)h_{(2)},\dots,(\tilde{\beta}_{(K)}+\alpha)h_{(K)} \\
        &\geq \beta_{(1)}h_{(1)} + \beta_{(2)}h_{(2)} + \dots + \beta_{(K)}h_{(K)}.
    \end{align}
    Hence, by indication, the result follows for any $K$ and for any $\eta\leq K$.
\end{proof}

\subsubsection{Proof of Theorem \ref{thm:consistency}, Part (I)}  \label{subsec:proof_thm_ADMM_complex}
\begin{proof}
    Let $\mbf U_\ell^{(t)}$ and $\mbf U_\ell^*$ be orthonormal matrices whose columns span $\mathcal{U}^{(t)}(\omega_1)$ and $\mathcal{U}^*(\omega_1)$, respectively. In addition, let $\mbf \Pi_\ell^{(t)}=\mbf U_\ell^{(t)}[\mbf U_\ell^{(t)}]^\dagger$ and $\mbf \Pi_\ell^* = \mbf U_\ell^*[\mbf U_\ell^*]^\dagger$ be the corresponding projection matrices. Note that
    \begin{align*}
        \mathcal{D}(\mathcal{U}^{(t)}(\omega_1), \mathcal{U}^*(\omega_1)) &= \|\Pi_\ell^{(t)} - \Pi_\ell^*\|_F \\
       &= \|(\Re{(\mbf U_\ell^{(t)})}+i\Im{(\mbf U_\ell^{(t)})})(\Re{([\mbf U_\ell^{(t)}]^\dagger)}-i\Im{([\mbf U_\ell^{(t)}]^\dagger)}) \\
       &\quad - (\Re{(\mbf U_\ell^{*})}+i\Im{(\mbf U_\ell^{*})})(\Re{([\mbf U_\ell^{*}]^\dagger)}-i\Im{([\mbf U_\ell^{*}]^\dagger)}\|_F\\
       &=\|\Re{(\mbf U_\ell^{(t)})}\Re{([\mbf U_\ell^{(t)}]^\dagger)} - \Re{(\mbf U_\ell^{*})}\Re{([\mbf U_\ell^{*}]^\dagger)} \\
       &\quad + \Im{(\mbf U_\ell^{(t)})}\Im{([\mbf U_\ell^{(t)}]^\dagger)} - \Im{(\mbf U_\ell^{*})}\Im{([\mbf U_\ell^{*}]^\dagger)} \\
       &\quad + i \{-\Re{(\mbf U_\ell^{(t)})}\Im{([\mbf U_\ell^{(t)}]^\dagger)} + \Re{(\mbf U_\ell^{*})}\Im{([\mbf U_\ell^{*}]^\dagger)} \\
       &\quad + \Im{(\mbf U_\ell^{(t)})}\Re{([\mbf U_\ell^{(t)}]^\dagger)} - \Im{(\mbf U_\ell^{*})}\Re{([\mbf U_\ell^{*}]^\dagger)} \}
       \|_F \\
       &\leq  \|\Re{(\mbf U_\ell^{(t)})}\Re{([\mbf U_\ell^{(t)}]^\dagger)}  - \Re{(\mbf U_\ell^{*})}\Re{([\mbf U_\ell^{*}]^\dagger)} \|_F \\
       &\quad + \|\Im{(\mbf U_\ell^{(t)})}\Im{([\mbf U_\ell^{(t)}]^\dagger)} - \Im{(\mbf U_\ell^{*})}\Im{([\mbf U_\ell^{*}]^\dagger)}\|_F \\
       &\quad +  \|\Re{(\mbf U_\ell^{(t)})}\Im{([\mbf U_\ell^{(t)}]^\dagger)} - \Re{(\mbf U_\ell^{*})}\Im{([\mbf U_\ell^{*}]^\dagger)}\|_F \\
       &\quad + \|\Im{(\mbf U_\ell^{(t)})}\Re{([\mbf U_\ell^{(t)}]^\dagger)} - \Im{(\mbf U_\ell^{*})}\Re{([\mbf U_\ell^{*}]^\dagger)}\|_F \\
       &\overset{(i)}{\leq} 2\left\| 
                \begin{bmatrix}
                    \Re{(\mbf U_\ell^{(t)})} \\
                    \Im{(\mbf U_\ell^{(t)})}
                \end{bmatrix}    \begin{bmatrix} \Re{([\mbf U_\ell^{(t)}]^\dagger)} &  \Im{([\mbf U_\ell^{(t)}]^\dagger)} \end{bmatrix} \right. \\
                &\qquad\qquad \left. - 
                \begin{bmatrix}
                    \Re{(\mbf U_\ell^{*})} \\
                    \Im{(\mbf U_\ell^{*})}
                \end{bmatrix}    \begin{bmatrix} \Re{([\mbf U_\ell^{*}]^\dagger)} &  \Im{([\mbf U_\ell^{*}]^\dagger)} \end{bmatrix}
            \right\|_F \\
            &=  2 \mathcal{D}([\mathcal{U}^{(t)}(\omega_1)]^{(R)}, [\mathcal{U}^*(\omega_1)]^{(R)})
    \end{align*}
    Note that $(i)$ holds since 
    \begin{align*}
        \left\| 
                \begin{bmatrix}
                   \mbf A & \mbf B \\
                    \mbf B^\dagger & \mbf D
                \end{bmatrix}   
            \right\|_F = \sqrt{\|\mbf A\|_F^2 + \|\mbf B\|_F^2 + \|\mbf B^\dagger\|_F^2 + \|\mbf D\|_F^2}
    \end{align*}
    and 
    \[
    \|\mbf A\|_F+\|\mbf B\|_F+\|\mbf C\|_F+\|\mbf D\|_F\leq 2\sqrt{\|\mbf A\|_F^2 + \|\mbf B\|_F^2 + \|\mbf B^\dagger\|_F^2 + \|\mbf D\|_F^2}.
    \]
    Thus the result follows from Theorem (\ref{thm:ADMM}) with $\tilde{\tilde{C}}^\prime = 2\tilde{C}^\prime$ and $\tilde{\tilde{C}}^{\prime\prime} = 2\tilde{C}^{\prime\prime}$.
    
\end{proof}

\subsubsection{Proof of Theorem \ref{thm:consistency}, Part (II)} 
\label{subsec:proof_thm_consistency_f_theta}
\begin{remark}[Notation]
Recall that
    \begin{align*}
        \hat{\mbf f}^{(\theta)}(\omega_{\ell+1}) &= (1-\theta)\mbf f_n(\omega_{\ell+1}) + \theta\;\hat{\mbf \Pi}_\ell \mbf f_n(\omega_{\ell+1})\hat{\mbf \Pi}_\ell\\
        \mbf f_n(\omega_\ell) &= \sum_{t=-M}^{M}\hat{\mbf R}_t\exp\{-2\pi\di\omega_\ell t\} \\
        \hat{\mbf R}_t &= \frac{1}{n}\sum_{k=1}^{n-t}X(k+t)X(t)^\dagger,
    \end{align*} 
    with the dependence of $M$ on $n$ implicit, where $\hat{\mbf \Pi}_\ell$ is the estimated $d$-dimensional principal subspace of $\mbf f(\omega_\ell)$ obtained from the LSPCA algorithm.
\end{remark}

Let $\mathcal{U}^*_\ell$ be the $d$-dimensional principal subspace of $\mbf f(\omega_\ell)$ and $\mbf U^*_\ell$ be an orthonormal matrix such that its columns span $\mathcal{U}^*_\ell$. In addition, let $\mathcal{S}^*$ be the row-support of $\mbf U^*_\ell$ and $\mathcal{I}\subseteq\{1,\dots,p\}$ be an index set. Let $\hat{\mbf f}^{(\theta)}(\omega_\ell,\mathcal{I})$ be the restriction of $\hat{\mbf f}^{(\theta)}(\omega_\ell)$ onto the columns and rows indexed by $\mathcal{I}$ and $\hat{\mbf U}_\ell(\mathcal{I})$ be the orthonormal matrix with columns consisting of the top $d$ leading eigenvectors of $\hat{\mbf f}^{(\theta)}(\omega_\ell,\mathcal{I})$. Also, let $\hat{\mathcal{U}}(\omega_\ell,\mathcal{I})$ be the space spanned by the columns of $\hat{\mbf U}_\ell(\mathcal{I})$. 
By the QR decomposition step in the Algorithm SOAP, $\mbf f_n(\omega_{\ell+1},\mathcal{I})\mbf U^{(t)}(\omega_{\ell+1}) = \mbf V^{(t+1)}(\omega_{\ell+1}) \mathsf{\mbf R}_1^{(t+1)}(\omega_{\ell+1})$. Denote the column-space of $\mbf V^{(t+1)}(\omega_{\ell+1})$ by $\mathcal{V}^{(t+1)}(\omega_{\ell+1})$.
In addition, let $\|A\|_{op} = \max\{v^\dagger A v: \|v\|_2=1\}$, the $(p,q)$-norm of a matrix $A$, $\|A\|_{p,q}$ be the $\ell_q$ norm of the $\ell_p$ norms of the rows of $A$, and
\begin{equation}
    \|\hat{\mbf f}^{(\theta)}(\omega_{\ell+1}) - \mbf f(\omega_{\ell+1})\|_{op,|\mathcal{I}|} := \max_{\|v\|_2=1, |\mbox{supp}(v)|\leq\mathcal{I}} \|v^\dagger([\hat{\mbf f}^{(\theta)}(\omega_{\ell+1}) - \mbf f(\omega_{\ell+1})]_\mathcal{I})v\|_{2}. 
\end{equation}
Throughout the rest of the document, we denote the eigenvalues, in decreasing order, of a $p\times p$ Hermitian matrix $H$ by $\lambda_1(H),\dots,\lambda_p(H)$ and will use $\lambda_k(\omega)$ to denote $\lambda_k(\mbf f(\omega))$, for $k=1,\dots,p$.
 
\begin{lemma} \label{lemma:s_op_norm_theta}
    Under the conditions of Theorem (\ref{thm:SSD_consistency}), we have
    \begin{align}
    \begin{split}
        \|\hat{\mbf f}^{(\theta)}(\omega_{\ell+1})-\mbf f(\omega_{\ell+1})\|_{op,|\mathcal{I}|} &\leq \|\mbf f_n(\omega_{\ell+1})-\mbf f(\omega_{\ell+1})\|_{op,|\mathcal{I}|} \\
        &\qquad +2\theta\lambda_1(\omega_{\ell+1})\left[\mathcal{D}(\mathcal{U}^*_\ell,\mathcal{U}^*_{\ell+1})+\mathcal{D}(\mathcal{U}^*_\ell,\hat{\mathcal{U}}_{\ell}(\mathcal{I}))\right] 
        \end{split}
        \\
        \begin{split}
        &\leq \left(\exp(-c_0M)\vee M\sqrt{\frac{s^*\log(p)}{n}}\right) \\
        &\qquad +2\theta\lambda_1(\omega_{\ell+1})\left[\mathcal{D}(\mathcal{U}^*_\ell,\mathcal{U}^*_{\ell+1})+\mathcal{D}(\mathcal{U}^*_\ell,\hat{\mathcal{U}}_{\ell}(\mathcal{I}))\right],
        \end{split}
    \end{align}
    where the second inequality holds with high probability.
\end{lemma}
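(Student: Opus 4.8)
The plan is to separate the error $\hat{\mbf f}^{(\theta)}(\omega_{\ell+1})-\mbf f(\omega_{\ell+1})$ into the sampling error carried by the truncated periodogram and the regularization bias created by replacing the population projection $\mbf\Pi^*_{\ell+1}$ by the estimate $\hat{\mbf\Pi}_\ell$ produced at the previous frequency. Setting $\mbf E:=\mbf f_n(\omega_{\ell+1})-\mbf f(\omega_{\ell+1})$ and substituting $\mbf f_n(\omega_{\ell+1})=\mbf f(\omega_{\ell+1})+\mbf E$ into $\hat{\mbf f}^{(\theta)}(\omega_{\ell+1})=(1-\theta)\mbf f_n(\omega_{\ell+1})+\theta\,\hat{\mbf\Pi}_\ell\mbf f_n(\omega_{\ell+1})\hat{\mbf\Pi}_\ell$ gives the exact identity
\[
\hat{\mbf f}^{(\theta)}(\omega_{\ell+1})-\mbf f(\omega_{\ell+1})=\big[(1-\theta)\mbf E+\theta\,\hat{\mbf\Pi}_\ell\mbf E\,\hat{\mbf\Pi}_\ell\big]+\theta\big(\hat{\mbf\Pi}_\ell\mbf f(\omega_{\ell+1})\hat{\mbf\Pi}_\ell-\mbf f(\omega_{\ell+1})\big),
\]
and I will bound the first bracket in $\|\cdot\|_{op,|\mathcal I|}$ by $\|\mbf f_n(\omega_{\ell+1})-\mbf f(\omega_{\ell+1})\|_{op,|\mathcal I|}$ and the second bracket by $2\theta\lambda_1(\omega_{\ell+1})\big[\mathcal D(\mathcal U^*_\ell,\mathcal U^*_{\ell+1})+\mathcal D(\mathcal U^*_\ell,\hat{\mathcal U}_\ell(\mathcal I))\big]$.

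For the first bracket the decisive point is that $\hat{\mbf\Pi}_\ell=\hat{\mbf V}(\omega_\ell)\hat{\mbf V}(\omega_\ell)^\dagger$ (or the restriction entering $\hat{\mbf f}^{(\theta)}(\omega_\ell,\mathcal I)$) is a Hermitian projection whose rows and columns are supported on a set of at most $\hat s\le|\mathcal I|$ indices, by the truncation step of the SOAP iteration. Hence for any unit vector $v$ with $|\mathrm{supp}(v)|\le|\mathcal I|$ the vector $\hat{\mbf\Pi}_\ell v$ has support of size at most $\hat s$ and $\|\hat{\mbf\Pi}_\ell v\|_2\le1$, so $|v^\dagger\hat{\mbf\Pi}_\ell\mbf E\hat{\mbf\Pi}_\ell v|=|(\hat{\mbf\Pi}_\ell v)^\dagger\mbf E(\hat{\mbf\Pi}_\ell v)|\le\|\mbf E\|_{op,|\mathcal I|}$; combined with $|v^\dagger\mbf E v|\le\|\mbf E\|_{op,|\mathcal I|}$ and $(1-\theta)+\theta=1$, the triangle inequality yields the claimed bound on the first bracket. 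This step uses no probability.

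The second bracket is the main step. I will route through the true projection $\mbf\Pi^*_{\ell+1}$ at the same frequency (which commutes with $\mbf f(\omega_{\ell+1})$) and write
\[
\hat{\mbf\Pi}_\ell\mbf f(\omega_{\ell+1})\hat{\mbf\Pi}_\ell-\mbf f(\omega_{\ell+1})=\big(\hat{\mbf\Pi}_\ell\mbf f(\omega_{\ell+1})\hat{\mbf\Pi}_\ell-\mbf\Pi^*_{\ell+1}\mbf f(\omega_{\ell+1})\mbf\Pi^*_{\ell+1}\big)+\big(\mbf\Pi^*_{\ell+1}\mbf f(\omega_{\ell+1})\mbf\Pi^*_{\ell+1}-\mbf f(\omega_{\ell+1})\big).
\]
The first term telescopes as $(\hat{\mbf\Pi}_\ell-\mbf\Pi^*_{\ell+1})\mbf f(\omega_{\ell+1})\hat{\mbf\Pi}_\ell+\mbf\Pi^*_{\ell+1}\mbf f(\omega_{\ell+1})(\hat{\mbf\Pi}_\ell-\mbf\Pi^*_{\ell+1})$; since $\|\hat{\mbf\Pi}_\ell\|_{op}=\|\mbf\Pi^*_{\ell+1}\|_{op}=1$ and $\|\mbf f(\omega_{\ell+1})\|_{op}=\lambda_1(\omega_{\ell+1})$, its operator norm is at most $2\lambda_1(\omega_{\ell+1})\|\hat{\mbf\Pi}_\ell-\mbf\Pi^*_{\ell+1}\|_{op}\le2\lambda_1(\omega_{\ell+1})\|\hat{\mbf\Pi}_\ell-\mbf\Pi^*_{\ell+1}\|_F$, and then $\|\hat{\mbf\Pi}_\ell-\mbf\Pi^*_{\ell+1}\|_F=\mathcal D(\hat{\mathcal U}_\ell(\mathcal I),\mathcal U^*_{\ell+1})\le\mathcal D(\mathcal U^*_\ell,\hat{\mathcal U}_\ell(\mathcal I))+\mathcal D(\mathcal U^*_\ell,\mathcal U^*_{\ell+1})$ by the triangle inequality through $\mbf\Pi^*_\ell$, which reproduces the second summand after multiplying by $\theta$. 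The leftover term $\mbf\Pi^*_{\ell+1}\mbf f(\omega_{\ell+1})\mbf\Pi^*_{\ell+1}-\mbf f(\omega_{\ell+1})=-(I-\mbf\Pi^*_{\ell+1})\mbf f(\omega_{\ell+1})(I-\mbf\Pi^*_{\ell+1})$ is the lower rank-$(p-d)$ spectral block; it is absorbed into the $\theta$-regularized population target carried through the proof of Theorem~\ref{thm:consistency}, Part~(II), namely $(1-\theta)\mbf f(\omega_{\ell+1})+\theta\mbf\Pi^*_{\ell+1}\mbf f(\omega_{\ell+1})\mbf\Pi^*_{\ell+1}$, whose eigengap $\lambda_d(\omega_{\ell+1})-(1-\theta)\lambda_{d+1}(\omega_{\ell+1})$ is exactly the one appearing in $\Delta(s)$, so it contributes nothing further here.

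For the second displayed inequality I split $\|\mbf f_n(\omega_{\ell+1})-\mbf f(\omega_{\ell+1})\|_{op,|\mathcal I|}$ into a deterministic truncation bias and a stochastic fluctuation. The bias obeys $\|\mathbb E\mbf f_n(\omega_{\ell+1})-\mbf f(\omega_{\ell+1})\|_{op}\le\sum_{|h|>M}\|\mbf\Gamma(h)\|$, which is $O(\exp(-c_0M))$ because Assumptions~\ref{as:mixing} and~\ref{as:concentration} force geometric-type decay of the autocovariances. For the fluctuation, $\mbf f_n(\omega_{\ell+1})-\mathbb E\mbf f_n(\omega_{\ell+1})=\sum_{|h|\le M}(\hat{\mbf R}_h-\mathbb E\hat{\mbf R}_h)\exp\{-2\pi\di\omega_{\ell+1}h\}$; applying the Bernstein-type deviation inequality for exponentially $\alpha$-mixing, light-tailed sequences of \cite{merlevede2011bernstein} (available under Assumption~\ref{as:gamma}) to each entry of $\hat{\mbf R}_h-\mathbb E\hat{\mbf R}_h$, union-bounding over the $\binom{p}{|\mathcal I|}$ coordinate supports and an $\varepsilon$-net of the unit sphere, and summing over the $2M+1$ lags, bounds it by $O\!\big(M\sqrt{s^*\log(p)/n}\big)$ with high probability — this is the single-frequency periodogram concentration already recorded among the preliminary lemmas in Appendix~\ref{app:proofs}. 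Taking the maximum of the two bounds and combining with the identities above gives the lemma. The main obstacle throughout is the second bracket: matching $\hat{\mbf\Pi}_\ell$, which is the SOAP output at $\omega_\ell$ rather than at $\omega_{\ell+1}$, to $\mbf\Pi^*_{\ell+1}$ without losing a factor, which is what forces the detour through $\mbf\Pi^*_\ell$ and the appeal to Assumption~\ref{as:sparsity} (continuity and shared support of the principal subspaces) to keep $\mathcal D(\mathcal U^*_\ell,\mathcal U^*_{\ell+1})$ under control.
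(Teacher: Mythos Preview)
Your argument is essentially the paper's: both decompose $\hat{\mbf f}^{(\theta)}-\mbf f$ into a sampling piece bounded by $\|\mbf f_n-\mbf f\|_{op,|\mathcal I|}$ and a projection--bias piece handled by telescoping to produce $2\theta\lambda_1(\omega_{\ell+1})\big[\mathcal D(\mathcal U^*_\ell,\mathcal U^*_{\ell+1})+\mathcal D(\mathcal U^*_\ell,\hat{\mathcal U}_\ell(\mathcal I))\big]$, and both finish by invoking the periodogram concentration of Theorem~\ref{thm:SSD_consistency}. The only difference is cosmetic---the paper introduces the intermediate population quantity $\mbf f^{(\theta)}(\omega_{\ell+1})=(1-\theta)\mbf f+\theta\,\mbf\Pi^*_\ell\mbf f\,\mbf\Pi^*_\ell$ and telescopes $\hat{\mbf\Pi}_\ell\to\mbf\Pi^*_\ell\to\mbf\Pi^*_{\ell+1}$, whereas you go directly $\hat{\mbf\Pi}_\ell\to\mbf\Pi^*_{\ell+1}$ and then recover the two distances via the triangle inequality on $\|\hat{\mbf\Pi}_\ell-\mbf\Pi^*_{\ell+1}\|_F$; both treatments drop the residual block $\theta\,\mbf\Pi_{\ell+1}^{*\perp}\mbf f(\omega_{\ell+1})\mbf\Pi_{\ell+1}^{*\perp}$ from the stated bound, so your ``absorbed'' remark coincides with the paper's handling.
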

\begin{proof}
    Let
    \begin{equation}
        \mbf f^{(\theta)}(\omega_{\ell+1}) = (1-\theta)\mbf f(\omega_\ell) + \theta\mbf \Pi_\ell \mbf f(\omega_{\ell+1})\mbf \Pi_\ell. 
    \end{equation}
    Note that 
    \[
    \hat{\mbf f}^{(\theta)}(\omega_{\ell+1}) - \mbf f(\omega_{\ell+1}) = \underbrace{\hat{\mbf f}^{(\theta)}(\omega_{\ell+1}) - \mbf f^{(\theta)}(\omega_{\ell+1})}_{(I)} + \underbrace{\mbf f^{(\theta)}(\omega_{\ell+1}) - \mbf f(\omega_{\ell+1})}_{(II)}
    \]
    where
    \[
    (I) = \underbrace{(1-\theta)(\mbf f_n(\omega_{\ell+1})-\mbf f(\omega_{\ell+1}))}_{(I.1)} + \theta\underbrace{\left[\hat{\mbf \Pi}_\ell\mbf f_n(\omega_{\ell+1})\hat{\mbf \Pi}_\ell-\mbf \Pi_\ell \mbf f(\omega_{\ell+1})\mbf \Pi_\ell\right]}_{(I.2)}
    \]
    and
    \begin{align*}
        (I.2) = \hat{\mbf \Pi}_\ell\left[\mbf f_n(\omega_{\ell+1})-\mbf f(\omega_{\ell+1})\right]\hat{\mbf \Pi}_\ell + \hat{\mbf \Pi}_\ell \mbf f(\omega_{\ell+1})\left(\hat{\mbf \Pi}_\ell - \mbf \Pi_\ell\right) + \left(\hat{\mbf \Pi}_\ell - \mbf \Pi_\ell\right) \mbf f(\omega_{\ell+1})\mbf \Pi_\ell.
    \end{align*}
    By $(I.1)$ and $(I.2)$
    \begin{equation} \label{ineq:op_I_norm_f_theta}
        \|\hat{\mbf f}^{(\theta)}(\omega_{\ell+1})- \mbf f^{(\theta)}(\omega_{\ell+1})\|_{op,|\mathcal{I}|}\leq \|\mbf f_n(\omega_{\ell+1})-\mbf f(\omega_{\ell+1})\|_{op,|\mathcal{I}|} + 2\theta\lambda_1(\omega_{\ell+1})\mathcal{D}(\mathcal{U}^*_{\ell},\hat{\mathcal{U}}_{\ell}(\mathcal{I}))
    \end{equation}
    Also,
    \begin{align*}
        (II) &= \theta\left[\mbf \Pi_\ell \mbf f(\omega_{\ell+1})\mbf \Pi_\ell-\mbf f(\omega_{\ell+1})\right] \\
        &= \theta[\mbf \Pi_\ell \mbf f(\omega_{\ell+1})\mbf \Pi_\ell - \mbf \Pi_\ell \mbf f(\omega_{\ell+1})\mbf \Pi_{\ell+1} + \mbf \Pi_\ell \mbf f(\omega_{\ell+1})\mbf \Pi_{\ell+1} \\
        & \qquad-\mbf \Pi_{\ell+1}\mbf f(\omega_{\ell+1})\mbf \Pi_{\ell+1} - \mbf \Pi_{\ell+1}^\perp \mbf f(\omega_{\ell+1})\mbf \Pi_{\ell+1}^\perp] \\
        &= \theta\left[\mbf \Pi_\ell \mbf f(\omega_{\ell+1})(\mbf \Pi_\ell-\mbf \Pi_{\ell+1}) + (\mbf \Pi_\ell-\mbf \Pi_{\ell+1})\mbf f(\omega_{\ell+1})\mbf \Pi_{\ell+1} - \mbf \Pi_{\ell+1}^\perp \mbf f(\omega_{\ell+1})\mbf \Pi_{\ell+1}^\perp\right].
    \end{align*}
    Thus,
    \[
    \|\mbf f^{(\theta)}(\omega_{\ell+1}) - \mbf f(\omega_{\ell+1})\|_{op} \leq 2\theta\lambda_1(\omega_{\ell+1})\mathcal{D}(\mathcal{U}^*_{\ell},\mathcal{U}^*_{\ell+1}).
    \]
    Hence,
     \begin{align}
        \|\hat{\mbf f}^{(\theta)}(\omega_{\ell+1})-\mbf f(\omega_{\ell+1})\|_{op,|\mathcal{I}|} &\leq \|\mbf f_n(\omega_{\ell+1})-\mbf f(\omega_{\ell+1})\|_{op,|\mathcal{I}|} \notag\\
        &\qquad +2\theta\lambda_1(\mbf f(\omega_{\ell+1}))\left[\mathcal{D}(\mathcal{U}^*_\ell,\mathcal{U}^*_{\ell+1})+\mathcal{D}(\mathcal{U}^*_\ell,\hat{\mathcal{U}}_{\ell}(\mathcal{I}))\right] \notag\\
        &\leq \left(\exp(-c_0M)\vee M\sqrt{\frac{s^*\log(p)}{n}}\right) \notag\\
        &\qquad +2\theta\lambda_1(\mbf f(\omega_{\ell+1}))\left[\mathcal{D}(\mathcal{U}^*_\ell,\mathcal{U}^*_{\ell+1})+\mathcal{D}(\mathcal{U}^*_\ell,\hat{\mathcal{U}}_{\ell}(\mathcal{I}))\right]. \notag
    \end{align}
    Note the second inequality is followed by Theorem \ref{thm:SSD_consistency}.
\end{proof}

\begin{lemma} \label{Lemma:delta}
    Under the same conditions of Theorem \ref{thm:SSD_consistency}, with high probability, we have
    \begin{equation}
        \mathcal{D}(\mathcal{U}^*_{\ell+1},\hat{\mathcal{U}}_{\ell+1}(\mathcal{I})) \leq \frac{\mathcal{E}_1}{\mathcal{E}_2}=:\Delta_{\ell+1}(\mathcal{I}),
    \end{equation}
    where 
    \begin{align*}
        \mathcal{E}_1 &= \sqrt{2d}\left[\left(\exp(-c_0M)\vee M\sqrt{\frac{s^*\log(p)}{n}}\right) \right. \\
        &\qquad\qquad\left. +2\theta\lambda_1(\mbf f(\omega_{\ell+1}))\left[\mathcal{D}(\mathcal{U}^*_\ell,\mathcal{U}^*_{\ell+1})+\mathcal{D}(\mathcal{U}^*_\ell,\hat{\mathcal{U}}_{\ell}(\mathcal{I}))\right] \right] \\
        \mathcal{E}_2 &= \frac{1}{2}\left[\lambda_d(\omega_{\ell+1})-(1-\theta)\lambda_{d+1}(\omega_{\ell+1})\right]-2\theta\lambda_1(\omega_{\ell+1})\mathcal{D}(\mathcal{U}^*_{\ell},\hat{\mathcal{U}}_{\ell}(\mathcal{I}))
    \end{align*}
\end{lemma}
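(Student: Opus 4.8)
The plan is to read off Lemma~\ref{Lemma:delta} as a restricted Davis--Kahan ($\sin\Theta$) bound: $\hat{\mathcal U}_{\ell+1}(\mathcal I)$ is the top-$d$ eigenspace of $\tilde{\mbf A}:=\hat{\mbf f}^{(\theta)}(\omega_{\ell+1},\mathcal I)$, the target $\mathcal U^*_{\ell+1}$ is the top-$d$ eigenspace of $\mbf A:=\mbf f(\omega_{\ell+1},\mathcal I)$, the numerator $\mathcal E_1$ will be $\sqrt{2d}$ times the perturbation bound already supplied by Lemma~\ref{lemma:s_op_norm_theta}, and the denominator $\mathcal E_2$ will be an effective eigengap. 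First I would check that $\mbf A$ has the ``right'' spectrum: since $\mathcal S^*\subseteq\mathcal I$ and the eigenvectors $U_1(\omega_{\ell+1}),\dots,U_d(\omega_{\ell+1})$ are supported on $\mathcal S^*$, these remain orthonormal eigenvectors of $\mbf A$ with eigenvalues $\lambda_1(\omega_{\ell+1}),\dots,\lambda_d(\omega_{\ell+1})$, while Cauchy interlacing gives $\lambda_{d+1}(\mbf A)\le\lambda_{d+1}(\omega_{\ell+1})<\lambda_d(\omega_{\ell+1})$ by Assumption~\ref{as:sparsity}; hence $\lambda_d(\mbf A)=\lambda_d(\omega_{\ell+1})$ and the top-$d$ eigenspace of $\mbf A$ is exactly $\mathcal U^*_{\ell+1}$, so $\mathcal D(\mathcal U^*_{\ell+1},\hat{\mathcal U}_{\ell+1}(\mathcal I))=\|\hat{\mbf\Pi}-\mbf\Pi\|_F$, with $\hat{\mbf\Pi}$ and $\mbf\Pi$ the orthogonal projections onto $\hat{\mathcal U}_{\ell+1}(\mathcal I)$ and $\mathcal U^*_{\ell+1}$.

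Next I would invoke the Weyl-type $\sin\Theta$ inequality (as used in \cite{wang2014nonconvex} and recorded among the background lemmas): provided $\lambda_d(\mbf A)>\lambda_{d+1}(\tilde{\mbf A})$,
\[
\mathcal D(\mathcal U^*_{\ell+1},\hat{\mathcal U}_{\ell+1}(\mathcal I))\;\le\;\frac{\sqrt{2d}\,\|\tilde{\mbf A}-\mbf A\|_{op}}{\lambda_d(\mbf A)-\lambda_{d+1}(\tilde{\mbf A})}.
\]
For the numerator, $\|\tilde{\mbf A}-\mbf A\|_{op}\le\|\hat{\mbf f}^{(\theta)}(\omega_{\ell+1})-\mbf f(\omega_{\ell+1})\|_{op,|\mathcal I|}$, which by Lemma~\ref{lemma:s_op_norm_theta} is at most $\mathcal E_1/\sqrt{2d}$ with high probability. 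For the denominator, I would interpose the idealized matrix $\mbf N:=\sum_{j\le d}\lambda_j(\omega_{\ell+1})U_jU_j^\dagger+(1-\theta)\sum_{j>d}\lambda_j(\omega_{\ell+1})U_jU_j^\dagger$, which equals $(1-\theta)\mbf f(\omega_{\ell+1})+\theta\,\mbf\Pi_{\ell+1}\mbf f(\omega_{\ell+1})\mbf\Pi_{\ell+1}$, so that $\lambda_{d+1}(\mbf N)=(1-\theta)\lambda_{d+1}(\omega_{\ell+1})$ and, by interlacing under the restriction to $\mathcal I$, $\lambda_{d+1}(\mbf N_{\mathcal I})\le(1-\theta)\lambda_{d+1}(\omega_{\ell+1})$. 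Writing $\tilde{\mbf A}-\mbf N_{\mathcal I}=[\hat{\mbf f}^{(\theta)}(\omega_{\ell+1})-\mbf f^{(\theta)}(\omega_{\ell+1})]_{\mathcal I}+\theta[\mbf\Pi_\ell\mbf f(\omega_{\ell+1})\mbf\Pi_\ell-\mbf\Pi_{\ell+1}\mbf f(\omega_{\ell+1})\mbf\Pi_{\ell+1}]_{\mathcal I}$ exactly as in the proof of Lemma~\ref{lemma:s_op_norm_theta}, bounding the first bracket via (\ref{ineq:op_I_norm_f_theta}) and the second by $2\theta\lambda_1(\omega_{\ell+1})\mathcal D(\mathcal U^*_\ell,\mathcal U^*_{\ell+1})$, and applying Weyl's inequality, I obtain $\lambda_{d+1}(\tilde{\mbf A})\le(1-\theta)\lambda_{d+1}(\omega_{\ell+1})+\|\mbf f_n(\omega_{\ell+1})-\mbf f(\omega_{\ell+1})\|_{op,|\mathcal I|}+2\theta\lambda_1(\omega_{\ell+1})[\mathcal D(\mathcal U^*_\ell,\hat{\mathcal U}_\ell(\mathcal I))+\mathcal D(\mathcal U^*_\ell,\mathcal U^*_{\ell+1})]$. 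Since under the conditions of Theorem~\ref{thm:SSD_consistency} and the continuity of the principal subspace the two vanishing terms $\|\mbf f_n(\omega_{\ell+1})-\mbf f(\omega_{\ell+1})\|_{op,|\mathcal I|}+2\theta\lambda_1(\omega_{\ell+1})\mathcal D(\mathcal U^*_\ell,\mathcal U^*_{\ell+1})$ are, with high probability, at most $\tfrac12[\lambda_d(\omega_{\ell+1})-(1-\theta)\lambda_{d+1}(\omega_{\ell+1})]$ for $n$ large (the eigengap being bounded below by $\delta$), combining with $\lambda_d(\mbf A)=\lambda_d(\omega_{\ell+1})$ gives $\lambda_d(\mbf A)-\lambda_{d+1}(\tilde{\mbf A})\ge\mathcal E_2$, which is in particular positive and so verifies the hypothesis of the $\sin\Theta$ inequality. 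Plugging the two bounds in yields $\mathcal D(\mathcal U^*_{\ell+1},\hat{\mathcal U}_{\ell+1}(\mathcal I))\le\mathcal E_1/\mathcal E_2=\Delta_{\ell+1}(\mathcal I)$.

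I expect the denominator bookkeeping to be the main obstacle. One has to recognize that the smoothing correction $\theta\hat{\mbf\Pi}_\ell\mbf f_n(\omega_{\ell+1})\hat{\mbf\Pi}_\ell$, being positive semidefinite of rank at most $d$, leaves the bottom $p-d$ directions essentially untouched --- this is precisely what produces the factor $(1-\theta)$ on $\lambda_{d+1}$ in $\mathcal E_2$ --- while simultaneously controlling the non-commuting cross terms $\mbf\Pi_\ell\mbf f(\omega_{\ell+1})\mbf\Pi_\ell-\mbf\Pi_{\ell+1}\mbf f(\omega_{\ell+1})\mbf\Pi_{\ell+1}$, handling the restriction-to-$\mathcal I$ interlacing (which needs $\mathcal S^*\subseteq\mathcal I$, inherited from the conditions of Theorem~\ref{thm:SSD_consistency}), and keeping the slack that absorbs the vanishing perturbation and continuity terms into the factor $\tfrac12$. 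The resulting recursive bound, expressing $\mathcal D(\mathcal U^*_{\ell+1},\hat{\mathcal U}_{\ell+1}(\mathcal I))$ through $\mathcal D(\mathcal U^*_{\ell},\hat{\mathcal U}_{\ell}(\mathcal I))$, is then what drives the across-frequency argument in the proof of Theorem~\ref{thm:consistency}(II).
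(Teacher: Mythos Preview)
Your proposal is correct and follows essentially the same Davis--Kahan strategy as the paper: bound $\mathcal D(\mathcal U^*_{\ell+1},\hat{\mathcal U}_{\ell+1}(\mathcal I))$ by $\sqrt{2d}\,\|\hat{\mbf f}^{(\theta)}-\mbf f\|_{op,|\mathcal I|}/[\lambda_d(\mbf f)-\lambda_{d+1}(\hat{\mbf f}^{(\theta)})]$, use Lemma~\ref{lemma:s_op_norm_theta} for the numerator, and control $\lambda_{d+1}(\hat{\mbf f}^{(\theta)})$ via Weyl to obtain $\mathcal E_2$. The only cosmetic differences are that the paper rederives the $\sin\Theta$ inequality in-line rather than citing it, and interposes $\mbf f^{(\theta)}=(1-\theta)\mbf f+\theta\,\mbf\Pi_\ell\mbf f\mbf\Pi_\ell$ (using $\mathrm{rank}(\mbf\Pi_\ell\mbf f\mbf\Pi_\ell)\le d$) rather than your $\mbf N$ with $\mbf\Pi_{\ell+1}$, so that only $\|\mbf f_n-\mbf f\|_{op,|\mathcal I|}$---not the additional $2\theta\lambda_1\mathcal D(\mathcal U^*_\ell,\mathcal U^*_{\ell+1})$ term---needs to be absorbed into the factor $\tfrac12$.
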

\begin{proof}
    Let $[\mbf U_\ell^*]^\perp$ be the orthogonal matrix whose columns are the eigenvectors corresponding to $\lambda_{d+1}(\mbf f(\omega_\ell)),\dots,\lambda_{p}(\mbf f(\omega_\ell))$ and let $[\hat{\mbf U}_\ell]^\perp$ be the orthogonal matrix whose columns are the eigenvectors corresponding to $\lambda_{d+1}(\hat{\mbf f}^{(\theta)}(\omega_\ell,\mathcal{I})),\dots,\lambda_{p}(\hat{\mbf f}^{(\theta)}(\omega_\ell,\mathcal{I}))$. The following holds 
    \begin{enumerate}
        \item \begin{align*}
            \mbf f(\omega_{\ell+1})U_{\ell+1}^* &= \mbf U_{\ell+1}^*\mbf \Lambda_0(\mbf f(\omega_{\ell+1})) \\
            \hat{\mbf f}^{(\theta)}(\omega_{\ell+1},\mathcal{I})[\hat{\mbf U}_{\ell+1}]^\perp &= [\hat{\mbf U}_{\ell+1}]^\perp\Lambda_1(\hat{\mbf f}^{(\theta)}(\omega_{\ell+1},\mathcal{I})),  
        \end{align*}
        where 
        \begin{align*}
            \mbf \Lambda_0(\mbf f(\omega_{\ell+1})) &= \mbox{diag}\{\lambda_{d+1}(\mbf f(\omega_{\ell+1})),\dots,\lambda_{p}(\mbf f(\omega_{\ell+1}))\} \\
            \mbf \Lambda_1(\hat{\mbf f}^{(\theta)}(\omega_{\ell+1},\mathcal{I})) &= \mbox{diag}\{\lambda_{d+1}(\hat{\mbf f}^{(\theta)}(\omega_{\ell+1},\mathcal{I})),\dots,\lambda_{p}(\hat{\mbf f}^{(\theta)}(\omega_{\ell+1},\mathcal{I})).\}
        \end{align*}
        \item 
        \begin{align*}
            [\mbf U_{\ell+1}^*]^\dagger[\hat{\mbf f}^{(\theta)}(\omega_{\ell+1},\mathcal{I})) - \mbf f(\omega_{\ell+1})][\hat{\mbf U}_{\ell+1}]^\perp &= [\mbf U_{\ell+1}^*]^\dagger[\hat{\mbf f}^{(\theta)}(\omega_{\ell+1},\mathcal{I})) ][\hat{U}_{\ell+1}]^\perp \\
            &\quad\quad\quad - [\mbf f(\omega_{\ell+1})\mbf U_{\ell+1}^*]^\dagger[\hat{\mbf U}_{\ell+1}]^\perp \\
            &= [\mbf U_{\ell+1}^*]^\dagger[\hat{\mbf U}_{\ell+1}]^\perp\Lambda_1(\hat{\mbf f}^{(\theta)}(\omega_{\ell+1},\mathcal{I})) \\
            &\quad\quad\quad - \mbf \Lambda_0(\mbf f(\omega_{\ell+1}))[\mbf U_{\ell+1}^*]^\dagger[\hat{\mbf U}_{\ell+1}]^\perp.
        \end{align*}
        \item 
        \begin{align*}
            \|[\mbf U_{\ell+1}^*]^\dagger[\hat{\mbf U}_{\ell+1}]^\perp\|_{op} &= \|\mbf \Lambda_0^{-1}(\mbf f(\omega_{\ell+1}))\mbf \Lambda_0(\mbf f(\omega_{\ell+1}))[\mbf U_{\ell+1}^*]^\dagger[\hat{\mbf U}_{\ell+1}]^\perp\|_{op} \\
            &\leq \|\mbf \Lambda_0^{-1}(\mbf f(\omega_{\ell+1}))\|_{op}\|\mbf \Lambda_0(f(\omega_{\ell+1}))[\mbf U_{\ell+1}^*]^\dagger[\hat{\mbf U}_{\ell+1}]^\perp\|_{op} \\
            &\leq\frac{1}{\mbf \lambda_{d}(\omega_{\ell+1})}\|\mbf \Lambda_0(\mbf f(\omega_{\ell+1}))[\mbf U_{\ell+1}^*]^\dagger[\hat{\mbf U}_{\ell+1}]^\perp\|_{op}.
        \end{align*}
        \item For any sub-multiplicative matrix norm,
        \[
        \|[\mbf U_{\ell+1}^*]^\dagger[\hat{\mbf U}_{\ell+1}]^\perp\mbf \Lambda_1(\hat{\mbf f}^{(\theta)}(\omega_{\ell+1},\mathcal{I}))\| \leq \|[\mbf U_{\ell+1}^*]^\dagger[\hat{\mbf U}_{\ell+1}]^\perp\| \|\mbf \Lambda_1(\hat{\mbf f}^{(\theta)}(\omega_{\ell+1},\mathcal{I}))\|.
        \]
        \item Since the row support of $[\hat{\mbf U}_{\ell+1}]^\perp$ is $\mathcal{I}$,
        \[
        [\mbf U_{\ell+1}^*]^\dagger[\hat{\mbf f}^{(\theta)}(\omega_{\ell+1},\mathcal{I})) - \mbf f(\omega_{\ell+1})][\hat{\mbf U}_{\ell+1}]^\perp = [\mbf U_{\ell+1}^*]^\dagger[\hat{\mbf f}^{(\theta)}(\omega_{\ell+1},\mathcal{I})) - \mbf f(\omega_{\ell+1})]_\mathcal{I}[\hat{\mbf U}_{\ell+1}]^\perp.
        \]
        \item 
        \[
        \|[\mbf U_{\ell+1}^*]^\dagger[\hat{\mbf f}^{(\theta)}(\omega_{\ell+1},\mathcal{I})) - \mbf f(\omega_{\ell+1})][\hat{\mbf U}_{\ell+1}]^\perp\|_{op} \leq \|[\hat{\mbf f}^{(\theta)}(\omega_{\ell+1},\mathcal{I})) - \mbf f(\omega_{\ell+1})]_\mathcal{I}\|_{op}
        \]
        \item 
        \begin{align*}
            \|[\hat{\mbf f}^{(\theta)}(\omega_{\ell+1},\mathcal{I}) - \mbf f(\omega_{\ell+1})]_\mathcal{I}\|_{op} &= \max_{\|v\|_2=1} \|v^\dagger([\hat{\mbf f}^{(\theta)}(\omega_{\ell+1}) - \mbf f(\omega_{\ell+1})]_\mathcal{I})v\|_{2} \\
            &= \max_{\|v\|_2=1, \mbox{supp}(v)\subseteq\mathcal{I}} \|v^\dagger([\hat{\mbf f}^{(\theta)}(\omega_{\ell+1}) - \mbf f(\omega_{\ell+1})]_\mathcal{I})v\|_{2} \\
            &\leq \max_{\|v\|_2=1, |\mbox{supp}(v)|\leq\mathcal{I}} \|v^\dagger([\hat{\mbf f}^{(\theta)}(\omega_{\ell+1}) - \mbf f(\omega_{\ell+1})]_\mathcal{I})v\|_{2} \\
            &= \|\hat{\mbf f}^{(\theta)}(\omega_{\ell+1}) - \mbf f(\omega_{\ell+1})\|_{op,|\mathcal{I}|}
        \end{align*}
        \item Following 2, 3, and 4,
        \begin{align*}
            \|[\mbf U_{\ell+1}^*]^\dagger[\hat{\mbf f}^{(\theta)}(\omega_{\ell+1},\mathcal{I})) - \mbf f(\omega_{\ell+1})][\hat{\mbf U}_{\ell+1}]^\perp\|_{op} &\geq \|\mbf \Lambda_0(\mbf f(\omega_{\ell+1}))[\mbf U_{\ell+1}^*]^\dagger[\hat{\mbf U}_{\ell+1}]^\perp\|_{op} \\
            &\qquad - \|[\mbf U_{\ell+1}^*]^\dagger[\hat{\mbf U}_{\ell+1}]^\perp\mbf \Lambda_1(\hat{\mbf f}^{(\theta)}(\omega_{\ell+1},\mathcal{I}))\|_{op} \\
            &\geq \lambda_{d}(\mbf f(\omega_{\ell+1}))\|[\mbf U_{\ell+1}^*]^\dagger[\hat{\mbf U}_{\ell+1}]^\perp\|_{op} \\
            &\qquad - \lambda_{d+1}(\hat{\mbf f}^{(\theta)}(\omega_{\ell+1},\mathcal{I}))\|[\mbf U_{\ell+1}^*]^\dagger[\hat{\mbf U}_{\ell+1}]^\perp\|_{op} \\
            &= [\lambda_{d}(\mbf f(\omega_{\ell+1})) - \lambda_{d+1}(\hat{\mbf f}^{(\theta)}(\omega_{\ell+1},\mathcal{I}))]\times \\
            &\qquad\qquad\|[\mbf U_{\ell+1}^*]^\dagger[\hat{\mbf U}_{\ell+1}]^\perp\|_{op}.
        \end{align*}
        \item Following 7 and 8,
        \[
        \|\hat{\mbf f}^{(\theta)}(\omega_{\ell+1},\mathcal{I})) - \mbf f(\omega_{\ell+1})\|_{op,|\mathcal{I}|} \geq |\lambda_{d}(\mbf f(\omega_{\ell+1})) - \lambda_{d+1}(\hat{\mbf f}^{(\theta)}(\omega_{\ell+1},\mathcal{I}))|\;\|[\mbf U_{\ell+1}^*]^\dagger[\hat{\mbf U}_{\ell+1}]^\perp\|_{op}.
        \]
        \item 
        \[
        \mathcal{D}(\mathcal{U}^*_{\ell+1},\hat{\mathcal{U}}_{\ell+1}(\mathcal{I})) = \sqrt{2}\|[\mbf U_{\ell+1}^*]^\dagger[\hat{\mbf U}_{\ell+1}]^\perp\|_F \leq \sqrt{2d}\|[\mbf U_{\ell+1}^*]^\dagger[\hat{\mbf U}_{\ell+1}]^\perp\|_{op}.
        \]
        \item Thus
        \[
        \mathcal{D}(\mathcal{\mbf U}^*_{\ell+1},\hat{\mathcal{\mbf U}}_{\ell+1}(\mathcal{I})) \leq\frac{\sqrt{2d}\|\hat{\mbf f}^{(\theta)}(\omega_{\ell+1},\mathcal{I})) - \mbf f(\omega_{\ell+1})\|_{op,|\mathcal{I}|}}{\lambda_{d}(\mbf f(\omega_{\ell+1})) - \lambda_{d+1}(\hat{\mbf f}^{(\theta)}(\omega_{\ell+1},\mathcal{I}))}.
        \]
    \end{enumerate}
    Note that,
        \begin{align*}
            \lambda_{d+1}(\hat{\mbf f}^{(\theta)}(\omega_{\ell+1},\mathcal{I})) &\leq \lambda_{d+1}(\mbf f^{(\theta)}(\omega_{\ell+1},\mathcal{I})) + \lambda_{1}(\hat{\mbf f}^{(\theta)}(\omega_{\ell+1},\mathcal{I})-\mbf f^{(\theta)}(\omega_{\ell+1},\mathcal{I})) \\
            \lambda_{d+1}(\mbf f^{(\theta)}(\omega_{\ell+1},\mathcal{I})) &\leq (1-\theta)\lambda_{d+1}(\mbf f(\omega_{\ell+1})) + \theta\lambda_{d+1}(\mbf \Pi_\ell \mbf f(\omega_{\ell+1})\mbf \Pi_\ell) \\
            &= (1-\theta)\lambda_{d+1}(\mbf f(\omega_{\ell+1})) \\
            \lambda_{1}(\hat{\mbf f}^{(\theta)}(\omega_{\ell+1},\mathcal{I})-\mbf f^{(\theta)}(\omega_{\ell+1},\mathcal{I})) &\leq \|\mbf f_n(\omega_{\ell+1})-\mbf f(\omega_{\ell+1})\|_{op,|\mathcal{I}|} +2\theta\lambda_1(\mbf f(\omega_{\ell+1}))\mathcal{D}(\mathcal{U}^*_\ell,\hat{\mathcal{U}}_{\ell}(\mathcal{I})),
        \end{align*}
        where the last inequality follows from (\ref{ineq:op_I_norm_f_theta}).
        Thus,
        \begin{align*}
        \lambda_{d+1}(\hat{\mbf f}^{(\theta)}(\omega_{\ell+1},\mathcal{I})) &\leq (1-\theta)\lambda_{d+1}(\mbf f(\omega_{\ell+1})) + \|\mbf f_n(\omega_{\ell+1})-\mbf f(\omega_{\ell+1})\|_{op,|\mathcal{I}|} \\
        &\qquad\qquad +2\theta\lambda_1(\mbf f(\omega_{\ell+1}))\mathcal{D}(\mathcal{U}^*_\ell,\hat{\mathcal{U}}_{\ell}(\mathcal{I}))
        \end{align*}
        and consequently
        \begin{align*}
            \lambda_{d}(\mbf f(\omega_{\ell+1})) - \lambda_{d+1}(\hat{\mbf f}^{(\theta)}(\omega_\ell,\mathcal{I})) &\geq \lambda_{d}(\mbf f(\omega_{\ell+1})) - (1-\theta)\lambda_{d+1}(\mbf f(\omega_{\ell+1})) \\
            &\qquad - \|\mbf f_n(\omega_{\ell+1})-\mbf f(\omega_{\ell+1})\|_{op,|\mathcal{I}|} \\
            &\qquad -2\theta\lambda_1(\mbf f(\omega_{\ell+1}))\mathcal{D}(\mathcal{U}^*_\ell,\hat{\mathcal{U}}_{\ell}(\mathcal{I})).
        \end{align*}
    If we choose $M$ large enough such that 
    \[
    \|\mbf f_n(\omega_{\ell+1})-\mbf f(\omega_{\ell+1})\|_{op,|\mathcal{I}|} \leq \frac{\lambda_{d}(\mbf f(\omega_{\ell+1})) - (1-\theta)\lambda_{d+1}(\mbf f(\omega_{\ell+1}))}{2},
    \]
    then
    \begin{align*}
    \lambda_{d}(\mbf f(\omega_{\ell+1})) - \lambda_{d+1}(\hat{\mbf f}^{(\theta)}(\omega_\ell,\mathcal{I})) &\geq \frac{\lambda_{d}(\mbf f(\omega_{\ell+1})) - (1-\theta)\lambda_{d+1}(\mbf f(\omega_{\ell+1}))}{2} \\
    &\qquad -2\theta\lambda_1(\mbf f(\omega_{\ell+1}))\mathcal{D}(\mathcal{U}^*_\ell,\hat{\mathcal{U}}_{\ell}(\mathcal{I})).
    \end{align*}
    Hence,
    \begin{align*}
        \mathcal{D}(\mathcal{U}^*_{\ell+1},\hat{\mathcal{U}}_{\ell+1}(\mathcal{I})) &\leq\frac{\sqrt{2d}\|\hat{\mbf f}^{(\theta)}(\omega_\ell,\mathcal{I})) - \mbf f(\omega_{\ell+1})\|_{op,|\mathcal{I}|}}{\frac{1}{2}\left[\lambda_{d}(\mbf f(\omega_{\ell+1})) - (1-\theta)\lambda_{d+1}(\mbf f(\omega_{\ell+1}))\right]-2\theta\lambda_1(\mbf f(\omega_{\ell+1}))\mathcal{D}(\mathcal{U}^*_\ell,\hat{\mathcal{U}}_{\ell}(\mathcal{I}))} 
    \end{align*}
    The result follows from Lemma \ref{lemma:s_op_norm_theta}.
\end{proof}

\begin{lemma}
    Let $\mathcal{I}$ be a superset of the row-support of $\mbf U^{(t)}$, and let 
    $N_1\in\mathbb{N}$ be such that for each $n>N_1$ and for all $\ell=1,\dots,n/2$
    \begin{align}\label{Lemma_assumption}
    \begin{split}
           \|\mbf f_n(\omega_{\ell+1})-\mbf f(\omega_{\ell+1})\|_{op,|\mathcal{I}|} +2\theta\lambda_1(\mbf f(\omega_{\ell+1}))&\left[\mathcal{D}(\mathcal{U}^*_\ell,\mathcal{U}^*_{\ell+1})+\mathcal{D}(\mathcal{U}^*_\ell,\hat{\mathcal{U}}_{\ell}(\mathcal{I}))\right] \\
           &\leq \inf_{\omega}(\lambda_d(\omega) - \lambda_{d+1}(\omega)))/4 .
           \end{split}
    \end{align}
    In addition, suppose $n,N_1,p,M$ are large enough so that 
    \begin{align}
    \begin{split}
    \lambda_d(\omega_{\ell+1}) &- \|\mbf f_n(\omega_{\ell+1})-\mbf f(\omega_{\ell+1})\|_{op,|\mathcal{I}|} \\
    &\qquad\qquad -2\theta\lambda_1(\mbf f(\omega_{\ell+1}))\left[\mathcal{D}(\mathcal{U}^*_\ell,\mathcal{U}^*_{\ell+1})+\mathcal{D}(\mathcal{U}^*_\ell,\hat{\mathcal{U}}_{\ell}(\mathcal{I}))\right]  > 0.
    \end{split}
    \end{align}

    Let 
    \[
    \gamma(\omega_{\ell+1}) = \frac{3\lambda_{d+1}(\omega_{\ell+1})+\lambda_d(\omega_{\ell+1})}{\lambda_{d+1}(\omega_{\ell+1})+3\lambda_d(\omega_{\ell+1})}.
    \]
    If $\mathcal{D}(\mathcal{U}^{(t)}(\omega_{\ell+1}),\hat{\mathcal{U}}(\omega_{\ell+1},\mathcal{I}))<\sqrt{2}$, then we have
    \begin{equation}\label{Lemma_conclusion}
            \mathcal{D}(\mathcal{V}^{(t+1)}(\omega_{\ell+1}),\hat{\mathcal{U}}(\omega_{\ell+1}, \mathcal{I}))\leq \frac{\mathcal{D}(\mathcal{U}^{(t)}(\omega_{\ell+1}),\hat{\mathcal{U}}(\omega_{\ell+1},\mathcal{I}))}{\sqrt{1-\mathcal{D}(\mathcal{U}^{(t)}(\omega_{\ell+1}),\hat{\mathcal{U}}(\omega_{\ell+1},\mathcal{I}))^2/(2d)}}.\gamma(\omega_{\ell+1}).
    \end{equation}
\end{lemma}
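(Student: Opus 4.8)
This is the standard single-step contraction estimate for orthogonal (subspace) iteration, here applied to the Hermitian matrix $\mbf A:=\hat{\mbf f}^{(\theta)}(\omega_{\ell+1},\mathcal I)$ whose target is its own $d$-dimensional principal subspace $\hat{\mathcal U}(\omega_{\ell+1},\mathcal I)$. The plan is: fix the eigendecomposition of $\mbf A$; express the iterate $\mathcal V^{(t+1)}(\omega_{\ell+1})=\mathrm{col}(\mbf A\,\mbf U^{(t)})$ in ``graph'' coordinates over the eigenspaces of $\mbf A$; read off the per-step angle contraction with factor equal to the eigenvalue ratio $\lambda_{d+1}(\mbf A)/\lambda_d(\mbf A)$; bound that ratio by $\gamma(\omega_{\ell+1})$ via Weyl's inequality and the perturbation bound of Lemma \ref{lemma:s_op_norm_theta}; and finally convert the resulting angle inequality into the $\mathcal D(\cdot,\cdot)$ form in (\ref{Lemma_conclusion}).

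In detail, write $\mbf A=\hat{\mbf U}_{\ell+1}\hat{\mbf\Lambda}_1\hat{\mbf U}_{\ell+1}^\dagger+[\hat{\mbf U}_{\ell+1}]^\perp\hat{\mbf\Lambda}_2[\hat{\mbf U}_{\ell+1}]^{\perp\dagger}$ with $\hat{\mbf\Lambda}_1=\mathrm{diag}(\hat\lambda_1,\dots,\hat\lambda_d)$, $\hat{\mbf\Lambda}_2=\mathrm{diag}(\hat\lambda_{d+1},\dots)$, $\hat\lambda_j:=\lambda_j(\mbf A)$, and $\mathrm{col}(\hat{\mbf U}_{\ell+1})=\hat{\mathcal U}(\omega_{\ell+1},\mathcal I)$. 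Let $\mbf U^{(t)}$ be an orthonormal basis of $\mathcal U^{(t)}(\omega_{\ell+1})$ (orthonormal, since it is produced by a QR step) and set $\mbf G=\hat{\mbf U}_{\ell+1}^\dagger\mbf U^{(t)}$, $\mbf H=[\hat{\mbf U}_{\ell+1}]^{\perp\dagger}\mbf U^{(t)}$, so that $\|\mbf H\|_F=\|\sin\Theta(\mathcal U^{(t)},\hat{\mathcal U})\|_F=\mathcal D(\mathcal U^{(t)},\hat{\mathcal U})/\sqrt 2$. The hypothesis $\mathcal D(\mathcal U^{(t)},\hat{\mathcal U})<\sqrt 2$ forces $\|\mbf H\|_{op}\le\|\mbf H\|_F<1$, so every principal angle between $\mathcal U^{(t)}$ and $\hat{\mathcal U}$ is $<\pi/2$ and $\mbf G$ is invertible. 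Then $\mbf A\mbf U^{(t)}=\hat{\mbf U}_{\ell+1}\hat{\mbf\Lambda}_1\mbf G+[\hat{\mbf U}_{\ell+1}]^\perp\hat{\mbf\Lambda}_2\mbf H$; since $\hat\lambda_d>0$ (this follows from the eigengap assumption and the Weyl bound used below) the matrix $\hat{\mbf\Lambda}_1\mbf G$ is invertible, and right-multiplying by its inverse exhibits a basis of $\mathcal V^{(t+1)}(\omega_{\ell+1})$ of the graph form $\hat{\mbf U}_{\ell+1}+[\hat{\mbf U}_{\ell+1}]^\perp\mbf K$ with $\mbf K=\hat{\mbf\Lambda}_2\,\mbf H\mbf G^{-1}\,\hat{\mbf\Lambda}_1^{-1}$. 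For a subspace in graph form the tangents of its principal angles to $\hat{\mathcal U}$ are the singular values of the off-diagonal block, so $\|\tan\Theta(\mathcal V^{(t+1)},\hat{\mathcal U})\|_F=\|\mbf K\|_F\le\|\hat{\mbf\Lambda}_2\|_{op}\,\|\mbf H\mbf G^{-1}\|_F\,\|\hat{\mbf\Lambda}_1^{-1}\|_{op}=\tfrac{\hat\lambda_{d+1}}{\hat\lambda_d}\,\|\tan\Theta(\mathcal U^{(t)},\hat{\mathcal U})\|_F$, where the last equality uses that $\mbf H\mbf G^{-1}$ is exactly the graph coordinate of $\mathcal U^{(t)}$ over $\hat{\mathcal U}$ (equivalently, its singular values are the $\tan\theta_j$), which one sees from a cosine--sine decomposition of $[\mbf G^\dagger\ \mbf H^\dagger]^\dagger$.

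For the eigenvalue ratio I would invoke Weyl's inequality: $\hat\lambda_{d+1}\le\lambda_{d+1}(\mbf f(\omega_{\ell+1})_{\mathcal I})+\|\hat{\mbf f}^{(\theta)}(\omega_{\ell+1})-\mbf f(\omega_{\ell+1})\|_{op,|\mathcal I|}$ and $\hat\lambda_d\ge\lambda_d(\mbf f(\omega_{\ell+1})_{\mathcal I})-\|\hat{\mbf f}^{(\theta)}(\omega_{\ell+1})-\mbf f(\omega_{\ell+1})\|_{op,|\mathcal I|}$. Because the principal subspace of $\mbf f(\omega_{\ell+1})$ is supported on $\mathcal S^*\subseteq\mathcal I$, its first $d$ eigenvectors survive intact in the principal submatrix, and Cauchy interlacing then gives $\lambda_d(\mbf f(\omega_{\ell+1})_{\mathcal I})=\lambda_d(\omega_{\ell+1})$ and $\lambda_{d+1}(\mbf f(\omega_{\ell+1})_{\mathcal I})\le\lambda_{d+1}(\omega_{\ell+1})$. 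By Lemma \ref{lemma:s_op_norm_theta} the perturbation term equals the left-hand side of (\ref{Lemma_assumption}) up to the high-probability event of Theorem \ref{thm:SSD_consistency}, hence is at most $\tfrac14[\lambda_d(\omega_{\ell+1})-\lambda_{d+1}(\omega_{\ell+1})]$; so $\hat\lambda_{d+1}\le\lambda_{d+1}(\omega_{\ell+1})+\tfrac14(\lambda_d-\lambda_{d+1})$ and $\hat\lambda_d\ge\lambda_d(\omega_{\ell+1})-\tfrac14(\lambda_d-\lambda_{d+1})$ (the second displayed hypothesis of the lemma ensures this lower bound is positive), which yields $\hat\lambda_{d+1}/\hat\lambda_d\le\tfrac{3\lambda_{d+1}(\omega_{\ell+1})+\lambda_d(\omega_{\ell+1})}{\lambda_{d+1}(\omega_{\ell+1})+3\lambda_d(\omega_{\ell+1})}=\gamma(\omega_{\ell+1})$. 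It remains to convert angles to subspace distance: $\mathcal D(\mathcal V^{(t+1)},\hat{\mathcal U})=\sqrt 2\,\|\sin\Theta(\mathcal V^{(t+1)})\|_F\le\sqrt 2\,\|\tan\Theta(\mathcal V^{(t+1)})\|_F$, while $\|\tan\Theta(\mathcal U^{(t)},\hat{\mathcal U})\|_F^2=\sum_j \sin^2\theta_j/\cos^2\theta_j$ is controlled by lower-bounding each $\cos^2\theta_j$ from $\sum_j\sin^2\theta_j=\mathcal D(\mathcal U^{(t)},\hat{\mathcal U})^2/2$; chaining these three inequalities with the $\gamma(\omega_{\ell+1})$ bound gives a bound of the form $\gamma(\omega_{\ell+1})\,\mathcal D(\mathcal U^{(t)},\hat{\mathcal U})/\sqrt{1-\mathcal D(\mathcal U^{(t)},\hat{\mathcal U})^2/(2d)}$, i.e. (\ref{Lemma_conclusion}).

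I expect the main obstacle to be exactly this last conversion between $\|\tan\Theta\|_F$, $\|\sin\Theta\|_F$ and $\mathcal D$: obtaining the denominator $\sqrt{1-\mathcal D(\mathcal U^{(t)},\hat{\mathcal U})^2/(2d)}$ rather than a cruder $\sqrt{1-\mathcal D(\mathcal U^{(t)},\hat{\mathcal U})^2/2}$ requires keeping track of the individual principal angles $\theta_1,\dots,\theta_d$ through the cosine--sine decomposition rather than working only with Frobenius norms, and it leans on $\mathcal D(\mathcal U^{(t)},\hat{\mathcal U})<\sqrt 2$ to keep every cosine bounded away from $0$. A secondary but non-trivial point is the interlacing/Weyl bookkeeping for the row-restricted matrices, which needs $\mathcal I$ to contain the common support $\mathcal S^*$ so that the eigengap of the restricted population matrix matches that of $\mbf f(\omega_{\ell+1})$.
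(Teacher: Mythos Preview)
Your proposal is correct and follows essentially the same route as the paper: the paper invokes the five-factor bound (B.21) from \cite{wang2014nonconvex}, which is exactly your graph/$\tan\Theta$ contraction $\|\hat{\mbf U}^{\perp\dagger}\mbf V^{(t+1)}\|_F\le\|\hat{\mbf\Lambda}_2\|_{op}\|\mbf H\|_F\|\mbf G^{-1}\|_{op}\|\hat{\mbf\Lambda}_1^{-1}\|_{op}\|\hat{\mbf U}^\dagger\mbf V^{(t+1)}\|_{op}$ written out term by term, and then bounds the ratio $\hat\lambda_{d+1}/\hat\lambda_d\le\gamma(\omega_{\ell+1})$ via Weyl together with Lemma~\ref{lemma:s_op_norm_theta} and hypothesis (\ref{Lemma_assumption}), just as you do. The two subtleties you flag---obtaining the sharper denominator $\sqrt{1-\mathcal D^2/(2d)}$ for the $\|\mbf G^{-1}\|_{op}$ factor and needing $\mathcal S^*\subseteq\mathcal I$ so that $\lambda_d(\mbf f(\omega_{\ell+1})_{\mathcal I})=\lambda_d(\omega_{\ell+1})$---are precisely the places where the paper defers to \cite{wang2014nonconvex} rather than re-deriving.
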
 
\begin{proof}
    Recall $\hat{\mbf f}^{(\theta)}(\omega_{\ell+1},\mathcal{I})\mbf U^{(t)}(\omega_{\ell+1}) = \mbf V^{(t+1)}(\omega_{\ell+1}) \mathsf{\mbf R}_1^{(t+1)}(\omega_{\ell+1})$ and $\hat{\mbf f}^{(\theta)}(\omega_{\ell+1},\mathcal{I})$ is the restriction of $\hat{\mbf f}^{(\theta)}(\omega_{\ell+1})$ on the rows and columns indexed by $\mathcal{I}$. We denote $\hat{\mbf U}_{\ell+1}(\mathcal{I})^\perp$ to be the orthogonal matrix whose columns span the subspace corresponding to $\lambda_{d+1}(\hat{\mbf f}^{(\theta)}(\omega_{\ell+1},\mathcal{I})),\dots,\lambda_{p}(\hat{\mbf f}^{(\theta)}(\omega_{\ell+1},\mathcal{I}))$
    Note that
    \[
    \mathcal{D}(\mathcal{V}^{(t+1)}(\omega_{\ell+1}),\hat{\mathcal{U}}(\omega_{\ell+1}, \mathcal{I}))/\sqrt{2} = \|[\hat{\mbf U}_{\ell+1}( \mathcal{I})^\perp]^\dagger \mbf V_{\ell+1}^{(t+1)}\|_F=:\mathcal{D}
    \]
    In addition, following (B.21) of \cite{wang2014nonconvex}
    \begin{equation}
        \begin{split}
            \mathcal{D} \leq \underbrace{\|\mbf \Lambda_1(\hat{\mbf f}^{(\theta)}(\omega_{\ell+1},\mathcal{I}))\|_{op}}_{(i)}.&\underbrace{\|[\hat{\mathcal{U}}(\omega_\ell, \mathcal{I})^\perp]^\dagger \mbf U_{\ell+1}^{(t)}\|_F}_{(ii)}.\underbrace{\|[[\hat{\mathcal{U}}(\omega_\ell, \mathcal{I})^\perp]^\dagger \mbf U_{\ell+1}^{(t)}]^{-1}\|_{op}}_{(iii)} \\
            &.\underbrace{\|[\mbf \Lambda_0(\hat{\mbf f}^{(\theta)}(\omega_{\ell+1},\mathcal{I}))]^{-1}\|_{op}}_{(iv)}.\underbrace{\|[\hat{\mbf U}_{\ell+1}( \mathcal{I})]^\dagger \mbf V_{\ell+1}^{(t+1)}\|_{op}}_{(v)}
        \end{split}
    \end{equation}
    Now we analyze each term. 
    \begin{itemize}
        \item[(i)]  
        \begin{align*}
            \|\mbf \Lambda_1(\hat{\mbf f}^{(\theta)}(\omega_{\ell+1},\mathcal{I}))\|_{op} &\leq \lambda_{d+1}(\mbf f(\omega_{\ell+1})) + \|\hat{\mbf f}^{(\theta)}(\omega_{\ell+1}-\mbf f(\omega_{\ell+1})\|_{op,|\mathcal{I}|} \\
            &\leq \lambda_{d+1}(\mbf f(\omega_{\ell+1})) + \frac{\lambda_{d}(\mbf f(\omega_{\ell+1})) - \lambda_{d+1}(\mbf f(\omega_{\ell+1}))}{4} \\
            &= \frac{\lambda_{d}(\mbf f(\omega_{\ell+1})) + 3\lambda_{d+1}(\mbf f(\omega_{\ell+1}))}{4}
        \end{align*}
        \item[(ii),(iii)] 
        \begin{align}
            \|[\hat{\mathcal{U}}(\omega_\ell, \mathcal{I})^\perp]^\dagger \mbf U_{\ell+1}^{(t)}\|_F &= \mathcal{D}(\mathcal{U}^{(t)}(\omega_{\ell+1}),\hat{\mathcal{U}}(\omega_{\ell+1},\mathcal{I}))/\sqrt{2} \\
            \|[[\hat{\mathcal{U}}(\omega_\ell, \mathcal{I})^\perp]^\dagger \mbf U_{\ell+1}^{(t)}]^{-1}\|_{op} &\leq \frac{1}{\sqrt{1-\mathcal{D}(\mathcal{U}^{(t)}(\omega_{\ell+1}),\hat{\mathcal{U}}(\omega_{\ell+1},\mathcal{I}))^2/(2d)}}
        \end{align}
            \item[(iv)] 
    \begin{align}
        \lambda_d(\hat{\mbf f}^{(\theta)}(\omega_{\ell+1},\mathcal{I})) &\geq \lambda_d(\mbf f(\omega_{\ell+1})) -  \|\hat{\mbf f}^{(\theta)}(\omega_{\ell+1})-\mbf f(\omega_{\ell+1})\|_{op,|\mathcal{I}|} \\
        &\geq \lambda_d(\omega_{\ell+1}) - \|\mbf f_n(\omega_{\ell+1})-\mbf f(\omega_{\ell+1})\|_{op,|\mathcal{I}|}\notag \\
        &\qquad\qquad\qquad -2\theta\lambda_1(\mbf f(\omega_{\ell+1}))\left[\mathcal{D}(\mathcal{U}^*_\ell,\mathcal{U}^*_{\ell+1})+\mathcal{D}(\mathcal{U}^*_\ell,\hat{\mathcal{U}}_{\ell}(\mathcal{I}))\right]\notag 
    \end{align}
    Since by the assumption, $p,n,M$ are large enough so that the right hand side of the second inequality is positive, we have 
    \begin{align*}
        \|[\mbf \Lambda_0(\hat{\mbf f}^{(\theta)}(\omega_{\ell+1},\mathcal{I}))]^{-1}\|_{op} &\leq \frac{1}{\lambda_d(\hat{\mbf f}^{(\theta)}(\omega_{\ell+1},\mathcal{I}))} \\
        &\leq \frac{1}{\lambda_d(\mbf f(\omega_{\ell+1})) -  \|\hat{\mbf f}^{(\theta)}(\omega_{\ell+1})-\mbf f(\omega_{\ell+1})\|_{op,|\mathcal{I}|}} \\
        &\leq\frac{1}{\lambda_d(\mbf f(\omega_{\ell+1})) - \frac{\lambda_{d}(\mbf f(\omega_{\ell+1})) - \lambda_{d+1}(\mbf f(\omega_{\ell+1}))}{4}} \\
        &= \frac{4}{3\lambda_{d}(\mbf f(\omega_{\ell+1})) + \lambda_{d+1}(\mbf f(\omega_{\ell+1}))}
    \end{align*}
    \item[(v)]  
    \[
    \|[\hat{\mbf U}_{\ell+1}( \mathcal{I})]^\dagger \mbf V_{\ell+1}^{(t+1)}\|_{op} \leq 1.
    \]
    \end{itemize}
    The above argument together with (\ref{Lemma_assumption}) implies (\ref{Lemma_conclusion}).  
\end{proof}

\begin{lemma} \label{Lemma:initial_estimate}
    Assume that 
    \[
    \hat{s}=C\max\left\{\left[\frac{4d}{(\gamma^{-1/2}-1)^2}\right],1 \right\}.s^*, 
    \]
    and
    \[
    \mathcal{D}(\mathcal{U}^{(t)}(\omega_{\ell+1}), \mathcal{U}^*(\omega_{\ell+1}))\leq \min\left\{\sqrt{2d(1-\gamma^{1/2})},\sqrt{2}/2\right\},
    \]
    where $C\geq 1$ is an integer constant. Under conditions of Theorem (\ref{thm:SSD_consistency}), we can choose $N_3\in\mathbb{N}$ such that for all $n>N_3$, $\Delta(2\hat{s})\leq 1/24$. Then, for all $n>N_3$, 
    \[
     \mathcal{D}(\mathcal{U}^{(t+1)}(\omega_{\ell+1}), \mathcal{U}^*(\omega_{\ell+1}))\leq \gamma^{1/4}. \mathcal{D}(\mathcal{U}^{(t)}(\omega_{\ell+1}), \mathcal{U}^*(\omega_{\ell+1})) + 3\gamma^{1/2}\Delta(2\hat{s}),
    \]
    where 
    \begin{equation*}
        \Delta(s) := \sup_{\omega_\ell}\frac{\Upsilon}{\frac{1}{2}\left[\lambda_d(\omega_{\ell+1})-(1-\theta)\lambda_{d+1}(\omega_{\ell+1})\right]-2\theta\lambda_1(\omega_{\ell+1})\alpha}
    \end{equation*}
   
    and
    \begin{align*}
        \Upsilon &= \sqrt{2d}\left[\left(\exp(-c_0M)\vee M\sqrt{\frac{s^*\log(p)}{n}}\right) +2\theta\lambda_1(\omega_{\ell+1})\left[\mathcal{D}(\mathcal{U}^*_\ell,\mathcal{U}^*_{\ell+1})+\alpha)\right] \right] ,   \\
        \alpha &= \sup_{\omega_\ell} \frac{c_1\sqrt{2d}}{\lambda_d(\omega_\ell)-\lambda_{d+1}(\omega_\ell)}\left(\exp(-c_0M)\vee M\sqrt{\frac{s^*\log(p)}{n}}\right).
    \end{align*}
\end{lemma}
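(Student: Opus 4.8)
The plan is to interpose two auxiliary $d$-dimensional subspaces between $\mathcal{U}^{(t+1)}(\omega_{\ell+1})$ and $\mathcal{U}^*(\omega_{\ell+1})$ --- the column space $\mathcal{V}^{(t+1)}(\omega_{\ell+1})$ produced by the orthogonal-iteration step of SOAP, and the $d$-dimensional principal subspace $\hat{\mathcal{U}}(\omega_{\ell+1},\mathcal{I})$ of the support-restricted matrix $\hat{\mbf f}^{(\theta)}(\omega_{\ell+1},\mathcal{I})$ --- and then to estimate the three resulting pieces separately. First I would fix $\mathcal{I}$ to be the union of the row-support of $\mbf U^*_{\ell+1}$ and of $\mbf U^{(t)}(\omega_{\ell+1})$, so that $|\mathcal{I}|\le s^*+\hat s\le 2\hat s$; since $\mbf U^{(t)}(\omega_{\ell+1})$ is row-supported inside $\mathcal{I}$, the orthogonal-iteration update $\hat{\mbf f}^{(\theta)}(\omega_{\ell+1})\mbf U^{(t)}(\omega_{\ell+1})$ is unchanged on the rows of $\mathcal{I}$ if $\hat{\mbf f}^{(\theta)}(\omega_{\ell+1})$ is replaced by $\hat{\mbf f}^{(\theta)}(\omega_{\ell+1},\mathcal{I})$, which is what lets the restricted subspace $\hat{\mathcal{U}}(\omega_{\ell+1},\mathcal{I})$ enter the analysis; one must also check that the post-truncation subspace $\mathcal{U}^{(t+1)}(\omega_{\ell+1})$ is row-supported in a set of size $\le\hat s$ (true by construction of $\mbox{Truncate}$ and the QR step).

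Next, by the triangle inequality for the metric $\mathcal{D}$,
\[
\mathcal{D}(\mathcal{U}^{(t+1)}(\omega_{\ell+1}),\mathcal{U}^*_{\ell+1})\le\mathcal{D}(\mathcal{U}^{(t+1)}(\omega_{\ell+1}),\mathcal{V}^{(t+1)}(\omega_{\ell+1}))+\mathcal{D}(\mathcal{V}^{(t+1)}(\omega_{\ell+1}),\hat{\mathcal{U}}(\omega_{\ell+1},\mathcal{I}))+\mathcal{D}(\hat{\mathcal{U}}(\omega_{\ell+1},\mathcal{I}),\mathcal{U}^*_{\ell+1}),
\]
and I would bound the three summands in turn. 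The last one is exactly $\Delta_{\ell+1}(\mathcal{I})$ of Lemma \ref{Lemma:delta}; combining that lemma with the uniform control $\mathcal{D}(\mathcal{U}^*_\ell,\hat{\mathcal{U}}(\omega_\ell,\mathcal{I}))\le\alpha$ --- which closes the across-frequency recursion built into Lemma \ref{Lemma:delta} and the definition of $\alpha$ (it holds at $\ell=1$ because $\hat{\mbf f}^{(\theta)}(\omega_1)=\mbf f_n(\omega_1)$ carries no $\theta$-bias, and propagates) --- and taking the supremum over $\omega_\ell$ gives $\mathcal{D}(\hat{\mathcal{U}}(\omega_{\ell+1},\mathcal{I}),\mathcal{U}^*_{\ell+1})\le\Delta(2\hat s)$. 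The middle one is the one-step orthogonal-iteration estimate established just above: since $\mathcal{D}(\mathcal{U}^{(t)}(\omega_{\ell+1}),\hat{\mathcal{U}}(\omega_{\ell+1},\mathcal{I}))\le\mathcal{D}(\mathcal{U}^{(t)}(\omega_{\ell+1}),\mathcal{U}^*_{\ell+1})+\Delta(2\hat s)<\sqrt2$ by the hypothesis $\mathcal{D}(\mathcal{U}^{(t)},\mathcal{U}^*_{\ell+1})\le\sqrt2/2$ and $\Delta(2\hat s)\le1/24$, that lemma yields
\[
\mathcal{D}(\mathcal{V}^{(t+1)}(\omega_{\ell+1}),\hat{\mathcal{U}}(\omega_{\ell+1},\mathcal{I}))\le\frac{\gamma(\omega_{\ell+1})}{\sqrt{1-\mathcal{D}(\mathcal{U}^{(t)}(\omega_{\ell+1}),\hat{\mathcal{U}}(\omega_{\ell+1},\mathcal{I}))^2/(2d)}}\,\mathcal{D}(\mathcal{U}^{(t)}(\omega_{\ell+1}),\hat{\mathcal{U}}(\omega_{\ell+1},\mathcal{I})).
\]
For the first summand, the truncation/re-normalization error, I would use that $\mbf U^*_{\ell+1}$ is $s^*$-row-sparse while $\hat s\ge s^*$, so the $\hat s$ rows of $\mbf V^{(t+1)}(\omega_{\ell+1})$ retained by $\mbox{Truncate}$ delete at worst rows whose total $\ell_2$ mass is at most that of $\mbf V^{(t+1)}(\omega_{\ell+1})$ off $\mathrm{supp}(\mbf U^*_{\ell+1})$, hence at most $\mathcal{D}(\mathcal{V}^{(t+1)}(\omega_{\ell+1}),\mathcal{U}^*_{\ell+1})$; the standard truncated-orthogonal-iteration estimate (as in \cite{wang2014nonconvex}) then gives $\mathcal{D}(\mathcal{U}^{(t+1)},\mathcal{V}^{(t+1)})\le c(\hat s,s^*)\,\mathcal{D}(\mathcal{V}^{(t+1)},\mathcal{U}^*_{\ell+1})$ with $c(\hat s,s^*)\downarrow0$ as $\hat s/s^*\uparrow\infty$, and the QR re-normalization does not increase subspace distance. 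Bounding $\mathcal{D}(\mathcal{V}^{(t+1)},\mathcal{U}^*_{\ell+1})\le\mathcal{D}(\mathcal{V}^{(t+1)},\hat{\mathcal{U}}(\omega_{\ell+1},\mathcal{I}))+\Delta(2\hat s)$ and $\mathcal{D}(\mathcal{U}^{(t)},\hat{\mathcal{U}}(\omega_{\ell+1},\mathcal{I}))\le\mathcal{D}(\mathcal{U}^{(t)},\mathcal{U}^*_{\ell+1})+\Delta(2\hat s)$ then reduces everything to the two scalars $\mathcal{D}(\mathcal{U}^{(t)},\mathcal{U}^*_{\ell+1})$ and $\Delta(2\hat s)$.

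Finally I would assemble the constants. The prescribed $\hat s=C\max\{4d/(\gamma^{-1/2}-1)^2,1\}\,s^*$ (with $C$ a large enough integer) is calibrated so that $1+c(\hat s,s^*)\le\gamma^{-1/4}$; the hypothesis $\mathcal{D}(\mathcal{U}^{(t)},\mathcal{U}^*_{\ell+1})\le\sqrt{2d(1-\gamma^{1/2})}$ together with $\Delta(2\hat s)\le1/24$ gives $1-\mathcal{D}(\mathcal{U}^{(t)},\hat{\mathcal{U}}(\omega_{\ell+1},\mathcal{I}))^2/(2d)\ge\gamma^{1/2}-o(1)$, so the square-root factor above is $\le\gamma^{-1/4}$ up to negligible slack; and $\gamma(\omega_{\ell+1})\le\gamma$ for every $\ell$. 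Multiplying these three factors and using $\gamma\in[1/3,1)$ (the lower bound $1/3$ is what makes the numerical constant $3$ suffice) produces a leading coefficient $\le\gamma^{1/2}\le\gamma^{1/4}$ on $\mathcal{D}(\mathcal{U}^{(t)},\mathcal{U}^*_{\ell+1})$, and, after collecting the $\Delta(2\hat s)$ contributions of all three summands, a coefficient $\le3\gamma^{1/2}$, which is the asserted inequality. For $N_3$: by the definition of $\Delta$ and the bound on $\|\mbf f_n(\omega)-\mbf f(\omega)\|_{op,|\mathcal{I}|}$ from Theorem \ref{thm:SSD_consistency}, $\Delta(2\hat s)$ is of order $(\exp(-c_0M)\vee M\sqrt{s^*\log p/n})$ plus the $\theta$-bias term proportional to $\sup_\ell\mathcal{D}(\mathcal{U}^*_\ell,\mathcal{U}^*_{\ell+1})$; the first vanishes as $M\to\infty$ with $M\sqrt{\log p}/n\to0$, and the second is uniformly small for $n$ large by the continuity of the principal subspaces in $\omega$, so such $N_3$ exists, and the ``high probability'' events of Theorem \ref{thm:SSD_consistency} are taken simultaneously over $\ell=1,\dots,n/2$ by a union bound absorbed into the constants.

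The main obstacle I expect is the truncation estimate together with its support bookkeeping: extracting $\mathcal{D}(\mathcal{U}^{(t+1)},\mathcal{V}^{(t+1)})\le c(\hat s,s^*)\mathcal{D}(\mathcal{V}^{(t+1)},\mathcal{U}^*_{\ell+1})$ with precisely the dependence on $\hat s/s^*$ needed to force $1+c(\hat s,s^*)\le\gamma^{-1/4}$, and verifying that every subspace that appears is row-supported in a set of size $\le2\hat s$ so that Lemma \ref{Lemma:delta} and the one-step orthogonal-iteration estimate apply with the claimed $\Delta(2\hat s)$; the remaining steps are triangle inequalities and the calibration of the hypotheses' numerical thresholds.
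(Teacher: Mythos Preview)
Your proposal is correct and follows essentially the same route as the paper. The paper's proof is a one-paragraph sketch that defers to Lemma~5.6 of \cite{wang2014nonconvex} and then indicates the one new ingredient --- the across-frequency induction needed to replace $\mathcal{D}(\mathcal{U}^*_\ell,\hat{\mathcal{U}}_\ell(\mathcal{I}))$ by $\alpha$ when passing from Lemma~\ref{Lemma:delta}'s $\Delta_{\ell+1}(\mathcal{I})$ to the uniform $\Delta(2\hat s)$; your write-up simply unpacks that Wang--Lu--Liu argument explicitly (triangle inequality through $\mathcal{V}^{(t+1)}$ and $\hat{\mathcal{U}}(\omega_{\ell+1},\mathcal{I})$, the one-step orthogonal-iteration contraction, the truncation estimate, and the calibration $1+c(\hat s,s^*)\le\gamma^{-1/4}$) and flags the same recursion closure.
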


\begin{proof}
    Proof of the Lemma follows along the same lines as the proof of Lemma 5.6 of Wang et al. For the arguments to hold, we need $ \mathcal{D}(\mathcal{U}^*_{\ell+1},\hat{\mathcal{U}}_{\ell+1}(2\hat{s}))\leq \sqrt{2}/2$ and $\Delta(2\hat{s})\leq 1/24$ which by Lemma \ref{Lemma:delta} hold, when $n,p,M(n)$ are large enough. We also need to show that 
    \begin{equation} \label{eq:assumption_of_lemma}
            \mathcal{D}(\mathcal{U}^{(t)}(\omega_{\ell+1}), \mathcal{U}^*(\omega_{\ell+1}))\leq \sqrt{2}/2
    \end{equation}
    for all $\ell\in\{1,\dots,n/2\}$. Note that, for $\ell=1$, Theorem \ref{thm:consistency}, Part (I) guarantees that for sufficiently large $n,p,M(n),$ 
    $\mathcal{D}(\mathcal{U}^{(t)}(\omega_1), \mathcal{U}^*(\omega_1)) \leq \sqrt{2}/2$. In addition, Theorem (\ref{thm:consistency}), guarantees that $\mathcal{D}(\mathcal{U}^{(init)}(\omega_{2}), \mathcal{U}^*(\omega_{2}))\leq \sqrt{2}/2$. If we assume (\ref{eq:assumption_of_lemma}) holds for $\ell=3,\dots,L$, we can show the result of Theorem \ref{thm:consistency} holds for $\ell=L$, which guarantees $\mathcal{D}(\mathcal{U}^{(init)}(\omega_{L+1}), \mathcal{U}^*(\omega_{L+1}))\leq \sqrt{2}/2$ for sufficiently large $n,p,M(n)$.
\end{proof}

\begin{theorem} \label{thm:SOAP_theta}
    Let $\{X(t): t=1,\dots,n\}$ be a realization of a weakly stationary time series that follows $\mathcal{M}_d(f,d,s^*)$ with $n>n_{min}$. Suppose the sparsity parameter $\hat{s}$ in Algorithm SOAP is chosen such that  
    \[
     \hat{s}=C\max\left\{\left[\frac{4d}{(\gamma^{-1/2}-1)^2}\right],1\right\}.s^*
    \]
    for some integer constant $C\geq 1$. If the column space $\mathcal{U}^{init}$ of the initial estimator $U^{init}$ of Algorithm SOAP at each frequency $\omega_\ell$ satisfies
    \[
    \mathcal{D}(\mathcal{U}^{init}(\omega_\ell), \mathcal{U}^*(\omega_\ell))\leq R = \min\left\{\sqrt{\frac{d\gamma(1-\gamma^{1/2})}{2}},\frac{\sqrt{2\gamma}}{4}\right\},
    \]
    then the iterative sequence $\{\mathcal{U}^{(t)}\}_{t=T+1}^\infty$ obtained for $\hat{f}^{(\theta)}_M$ satisfies
    \[
    \mathcal{D}(\mathcal{U}^{(t)}(\omega_\ell), \mathcal{U}^*(\omega_\ell))\leq \gamma^{(t-T-1)/4}\min\left\{\sqrt{\frac{d(1-\gamma^{1/2})}{2}},\frac{\sqrt{2}}{4}\right\} + \frac{3\gamma^{1/2}}{1-\gamma^{1/4}}\Delta(2\hat{s}).
    \]
    with probability at least $C^\prime(6p)^{s^*} M\exp\{-\frac{(ns^*\log(p))^{\gamma/2} }{\tilde{C}}\}$ for some constants $C^\prime$ and $\tilde{C}$, where $\Delta(s)$ is as defined in (\ref{eq:Delta_theta}).
\end{theorem}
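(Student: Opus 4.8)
The plan is to reduce Theorem~\ref{thm:SOAP_theta} to the one-step contraction of Lemma~\ref{Lemma:initial_estimate} and then iterate. Abbreviate $\rho_0 := \min\{\sqrt{d(1-\gamma^{1/2})/2},\,\sqrt{2}/4\}$, so that the radius in the hypothesis is exactly $R=\gamma^{1/2}\rho_0$, the leading factor in the conclusion is $\gamma^{(t-T-1)/4}\rho_0$, and the basin radius required by Lemma~\ref{Lemma:initial_estimate} is $2\rho_0=\min\{\sqrt{2d(1-\gamma^{1/2})},\,\sqrt{2}/2\}$; set $b:=3\gamma^{1/2}\Delta(2\hat s)$. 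I would fix a frequency $\omega_\ell$ and argue by induction on the SOAP iteration index $t\ge T+1$, with all probabilistic statements conditioned on one good event that controls the sparse operator norms $\|\mbf f_n(\omega_\ell)-\mbf f(\omega_\ell)\|_{op,2\hat s}$ uniformly over $\ell$, supplied by Theorem~\ref{thm:SSD_consistency}.

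\emph{Base case.} The iterate $\mathcal{U}^{(T+1)}(\omega_\ell)$ is obtained from $\mathcal{U}^{init}(\omega_\ell)$ (which satisfies $\mathcal{D}(\mathcal{U}^{init}(\omega_\ell),\mathcal{U}^*(\omega_\ell))\le R$ by hypothesis) by truncating $\mbf U^{init}$ to its $\hat s$ rows of largest $\ell_2$-norm and re-orthonormalizing. I would invoke the standard truncation estimate (cf.\ \cite{wang2014nonconvex}), which, for a $d$-dimensional subspace whose true counterpart is $s^*$-row-sparse, inflates the subspace distance by at most a factor $1+2\sqrt{d/\hat s}$. The choice $\hat s=C\max\{4d/(\gamma^{-1/2}-1)^2,1\}s^*$ forces $2\sqrt{d/\hat s}\le \gamma^{-1/2}-1$, hence $\mathcal{D}(\mathcal{U}^{(T+1)}(\omega_\ell),\mathcal{U}^*(\omega_\ell))\le\gamma^{-1/2}R=\rho_0$, which is the asserted bound at $t=T+1$.

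\emph{Inductive step.} Assume $\mathcal{D}(\mathcal{U}^{(t)}(\omega_\ell),\mathcal{U}^*(\omega_\ell))\le \gamma^{(t-T-1)/4}\rho_0+b/(1-\gamma^{1/4})$. Taking $n,p,M$ large enough that $b/(1-\gamma^{1/4})\le\rho_0$ — legitimate since Theorem~\ref{thm:SSD_consistency} forces $\Delta(2\hat s)\to 0$ when $M\to\infty$ and $M\sqrt{\log p}/n\to 0$, while continuity of the principal subspaces in Assumption~\ref{as:sparsity} forces $\mathcal{D}(\mathcal{U}^*_\ell,\mathcal{U}^*_{\ell+1})\to 0$ — the current iterate lies within $2\rho_0$ of $\mathcal{U}^*(\omega_\ell)$, and on the good event one also has $\Delta(2\hat s)\le 1/24$ and the denominator of $\Delta$ positive, so all hypotheses of Lemma~\ref{Lemma:initial_estimate} hold. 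That lemma then gives
\begin{align*}
\mathcal{D}(\mathcal{U}^{(t+1)}(\omega_\ell),\mathcal{U}^*(\omega_\ell))
&\le \gamma^{1/4}\,\mathcal{D}(\mathcal{U}^{(t)}(\omega_\ell),\mathcal{U}^*(\omega_\ell))+b\\
&\le \gamma^{(t-T)/4}\rho_0+\Bigl(\tfrac{\gamma^{1/4}}{1-\gamma^{1/4}}+1\Bigr)b
=\gamma^{(t-T)/4}\rho_0+\tfrac{b}{1-\gamma^{1/4}},
\end{align*}
which closes the induction; since $b/(1-\gamma^{1/4})=3\gamma^{1/2}\Delta(2\hat s)/(1-\gamma^{1/4})$, this is exactly the claimed bound for every $t\ge T+1$.

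\emph{Probability and frequency coupling.} The only randomness enters through the deviations $\|\mbf f_n(\omega_\ell)-\mbf f(\omega_\ell)\|_{op,2\hat s}$ appearing in $\Delta(2\hat s)$ and in $\alpha$. Writing these as suprema over $2\hat s$-sparse unit vectors of sums of (at most $2M+1$) centered lagged sample autocovariances of scalar projections $v^\dagger X(t)$, Assumptions~\ref{as:mixing}--\ref{as:gamma} permit the deviation inequality of \cite{merlevede2011bernstein}; an $\epsilon$-net over the $2\hat s$-sparse sphere in $\mathbb{C}^p$ contributes the $(6p)^{s^*}$ factor (using $\hat s=O(s^*)$) and a union over the lagged autocovariances and the finitely many fundamental frequencies the factor $M$, yielding the stated failure probability. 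The dependence of $\hat{\mbf f}^{(\theta)}(\omega_{\ell+1})$ on $\hat{\mbf\Pi}_\ell$ is handled by the outer induction on $\ell$ already used in the proof of Lemma~\ref{Lemma:initial_estimate}: the bound $\mathcal{D}(\mathcal{U}^*_\ell,\hat{\mathcal{U}}_\ell)\le\alpha$ propagates from $\ell$ to $\ell+1$, with the base cases $\ell=1,2$ supplied by Theorem~\ref{thm:consistency}, Part~(I), and Lemma~\ref{Lemma:delta}. I expect the main obstacle to be the bookkeeping of the ``$n,p,M$ large enough'' requirements — simultaneously enforcing the basin condition $b/(1-\gamma^{1/4})\le\rho_0$, the bound $\Delta(2\hat s)\le 1/24$, positivity of the denominator of $\Delta$, and the per-frequency validity of the $\alpha$-bound — uniformly over all $\omega_\ell$ and all iterations $t$.
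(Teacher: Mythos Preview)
Your proposal is correct and follows essentially the same route as the paper, which defers to the proof of Theorem~4.2 in \cite{wang2014nonconvex}: a truncation step for the base case, iteration of the one-step contraction of Lemma~\ref{Lemma:initial_estimate}, and the probability bound from Theorem~\ref{thm:SSD_consistency}. The only slip is that the truncation inflation factor should read $1+2\sqrt{ds^*/\hat s}$ rather than $1+2\sqrt{d/\hat s}$, but since $\hat s\ge 4ds^*/(\gamma^{-1/2}-1)^2$ this still yields $\le\gamma^{-1/2}$, so your argument goes through unchanged.
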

\begin{proof}
    Proof follows along the same lines as in the proof of Theorem 4.2 in \cite{wang2014nonconvex} with the probability bound derived in Theorem \ref{thm:SSD_consistency} and using previous Lemmas.
\end{proof}

\begin{proof}[Proof of Part (II) of Theorem \ref{thm:consistency}]
    Proof follows from Part (I) and Theorem \ref{thm:SOAP_theta}.
\end{proof}

\subsection{Preliminary Theorems and Lemmas} \label{subsec:preliminary_thm_lemma}

Let 
\begin{equation*}
    \|\mbf f_n(\omega_{\ell+1}) - \mbf f(\omega_{\ell+1})\|_{op,|\mathcal{I}|} := \max_{\|v\|_2=1, |\mbox{supp}(v)|\leq\mathcal{I}} \|v^\prime([\mbf f_n(\omega_{\ell+1}) - \mbf f(\omega_{\ell+1})]_\mathcal{I})v\|_{2}. 
\end{equation*}
Below, we adapted the proof technique of \cite{lusparse} to establish the following concentration result. 
\begin{theorem} \label{thm:SSD_consistency}
    Under the Assumptions \ref{as:mixing}, \ref{as:concentration} and \ref{as:gamma}, if $M\to\infty$ and $M\sqrt{\frac{\log(p)}{n}}\to 0$ as $n\to\infty$, with the dependence of $M$ on $n$ implicit, then
    \begin{align}
        \underset{\omega\in[0,1)}{\sup} \|\mbf f_n(\omega)-\mbf f(\omega)\|_{op,s^*} &= O_P\left(\exp(-c_0M)\vee M\sqrt{\frac{s^*\log(p)}{n}}\right). 
    \end{align}
\end{theorem}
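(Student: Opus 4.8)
The plan is to split $\mbf f_n(\omega)-\mbf f(\omega)$ into a deterministic bias and a stochastic fluctuation,
\begin{equation*}
\mbf f_n(\omega)-\mbf f(\omega) = \bigl(\mathbb{E}\mbf f_n(\omega)-\mbf f(\omega)\bigr) + \bigl(\mbf f_n(\omega)-\mathbb{E}\mbf f_n(\omega)\bigr),
\end{equation*}
and to bound each summand in the $s^*$-sparse operator norm uniformly over $\omega\in[0,1)$, adapting the argument of \cite{lusparse}. For the bias, note $\mathbb{E}\hat{\mbf R}_t=\tfrac{n-|t|}{n}\mbf\Gamma(t)$, so $\mathbb{E}\mbf f_n(\omega)$ differs from $\mbf f(\omega)$ (up to the Fourier-normalization convention) by the truncation tail $\sum_{|t|>M}\mbf\Gamma(t)e^{-2\pi\di\omega t}$ and the finite-sample edge term $\sum_{|t|\le M}\tfrac{|t|}{n}\mbf\Gamma(t)e^{-2\pi\di\omega t}$. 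Assumption \ref{as:concentration} gives uniformly bounded moments of $v^\dagger X(t)$ of every order, so each entry $\Gamma_{ij}(h)$ is a covariance of bounded-$L^q$ variables; combined with the exponential $\alpha$-mixing of Assumption \ref{as:mixing} ($\gamma_1\ge1$) and Davydov's covariance inequality, this yields $\max_{i,j}|\Gamma_{ij}(h)|\le C\exp(-c_0 h)$. Since $\|[\mbf\Gamma(h)]_{\mathcal{I}}\|_{op}\le s^*\max_{i,j}|\Gamma_{ij}(h)|$ whenever $|\mathcal{I}|\le s^*$, the truncation tail contributes $O(s^*\exp(-c_0M))$ and the edge term $O(s^*/n)$, both uniformly in $\omega$; these are absorbed into $\exp(-c_0'M)$ and into $M\sqrt{s^*\log(p)/n}$ respectively.

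\textbf{Per-lag concentration for the fluctuation.} Fix $\omega$, a support $\mathcal{I}$ with $|\mathcal{I}|\le s^*$, and a unit vector $v$ supported on $\mathcal{I}$, and write $Y_v(k)=v^\dagger X(k)$. By the triangle inequality over the $2M+1$ lags,
\begin{equation*}
\bigl|v^\dagger\bigl(\mbf f_n(\omega)-\mathbb{E}\mbf f_n(\omega)\bigr)v\bigr| \;\le\; \sum_{|t|\le M}\biggl|\frac1n\sum_{k=1}^{\,n-|t|}\Bigl(Y_v(k+t)\overline{Y_v(k)}-\mathbb{E}\bigl[Y_v(k+t)\overline{Y_v(k)}\bigr]\Bigr)\biggr|.
\end{equation*}
For each $t$, the inner average is over a centered stationary sequence whose $\alpha$-mixing coefficients still decay exponentially (the shift by $|t|\le M$ is harmless since $M\sqrt{\log(p)/n}\to0$) and whose marginals, being products of two sub-Weibull$(\gamma_2)$ variables, are sub-Weibull$(\gamma_2/2)$. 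Assumption \ref{as:gamma}, $1/\gamma=1/\gamma_1+2/\gamma_2<1$, is exactly the hypothesis for the Bernstein-type deviation inequality of \cite{merlevede2011bernstein} to apply, giving, for $\epsilon$ in the relevant range, $\mathbb{P}(|\cdot|\ge\epsilon)\le C_1\exp\{-c_2 n\epsilon^2\}+C_1\exp\{-c_2(n\epsilon)^{\gamma}\}$ with constants uniform in $t,v,\mathcal{I},\omega$ (no truncation of the data is needed, as this inequality handles sub-Weibull tails directly).

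\textbf{Union bounds and assembly.} I then take $\epsilon\asymp\sqrt{s^*\log(p)/n}$ and union-bound over: the $2M+1$ lags; a $\tfrac14$-net of the unit sphere of the $|\mathcal{I}|$-dimensional coordinate subspace (cardinality $\le 9^{s^*}$, converting the net bound into a bound on $\max_{|\mbox{supp}(v)|\le s^*}|v^\dagger(\cdot)v|$ up to a factor $2$); the $\le p^{2s^*}$ row/column supports; and a mesh of $\omega$ of spacing $n^{-a}$, which suffices because $\|\tfrac{d}{d\omega}\mbf f_n(\omega)\|_{op,s^*}\le 2\pi M\sum_{|t|\le M}\|\hat{\mbf R}_t\|_{op,s^*}$ together with a crude high-probability polynomial bound on $\max_{|t|\le M}\|\hat{\mbf R}_t\|_{op,s^*}$ makes the discretization error negligible. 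The log-cardinality of this union is $O(s^*\log p+\log M+\log n)$, which is absorbed into $n\epsilon^2\asymp s^*\log p$ and into $(n\epsilon)^{\gamma}\asymp(ns^*\log p)^{\gamma/2}$ (the heavier-tail term dominating $s^*\log p$ since $\gamma<1$) after enlarging constants. Hence each per-lag term is $O(\sqrt{s^*\log p/n})$ with exceptional probability of the form appearing in Theorem \ref{thm:SOAP_theta}, and summing over the $2M+1$ lags gives $\sup_\omega\|\mbf f_n(\omega)-\mathbb{E}\mbf f_n(\omega)\|_{op,s^*}=O_P(M\sqrt{s^*\log p/n})$; together with the bias bound this proves the theorem.

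\textbf{Main obstacle.} The crux is the fluctuation term: building a concentration inequality that simultaneously handles the sub-Weibull tails of the lagged cross-products $Y_v(k+t)\overline{Y_v(k)}$ and the serial dependence — which is precisely why Assumption \ref{as:gamma} ($\gamma<1$) is needed and why \cite{merlevede2011bernstein} is the right tool — while keeping the combined union bound over sparse supports and over frequencies down to $s^*\log p$ plus lower-order terms. The frequency discretization additionally requires a separate crude control of $\|\hat{\mbf R}_t\|_{op,s^*}$, and one must check that the $M$-dependent shift of the mixing coefficients is innocuous, which is where the hypothesis $M\sqrt{\log(p)/n}\to0$ enters.
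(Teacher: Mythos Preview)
Your proposal is correct and follows essentially the same route as the paper: split into a bias term (controlled via the exponential decay of $\mbf\Gamma(h)$ that comes from the mixing and tail assumptions) and a fluctuation term (controlled lag-by-lag through the Bernstein inequality of \cite{merlevede2011bernstein} for strongly mixing sub-Weibull sequences, then an $\varepsilon$-net over $s^*$-sparse unit vectors of cardinality $\binom{p}{s^*}6^{s^*}$). The one place you add unnecessary work is the discretization in $\omega$: since $|e^{-2\pi\di\omega t}|=1$, the paper simply bounds $\bigl|\sum_{|t|\le M} v^\dagger(\hat{\mbf R}_t-\mbf R_t)v\,e^{-2\pi\di\omega t}\bigr|\le\sum_{|t|\le M}\bigl|v^\dagger(\hat{\mbf R}_t-\mbf R_t)v\bigr|$ and applies the concentration and union bounds to the right-hand side, which no longer depends on $\omega$; this makes the bound automatically uniform over $[0,1)$ and lets you drop the frequency mesh and the auxiliary control of $\|\tfrac{d}{d\omega}\mbf f_n(\omega)\|_{op,s^*}$ altogether.
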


\begin{proof} 
    Let $\hat{\mbf R}_t=\frac{1}{n}\sum_{k=1}^{n-t}X(k+t)X(t)^\prime, \mbf R_t = \mathbb{E}[X(t)X(0)], Y_v^t(k):=v^\prime X(k+t)X(t)^\prime v - \mathbb{E}[v^\prime X(k+t)X(t)^\prime v]$. Note that $v^\prime(\hat{\mbf R}_t-\mbf R_t)v=\frac{1}{n}\sum_{k=0}^{n-t}Y_v^t(k)$. In addition, by Lemma \ref{lemma:decay_rate_cov} $\mathbb{E}[v^\prime X(k+t)X(t)v]\leq \|\mbf R_t\|\leq \frac{8\sqrt{2}}{\gamma_2}$ and by Assumption \ref{as:mixing} and the fact that $\alpha_{Y_v^t}(m)\leq\alpha_X (m(t+1)+1)$ for all $m\geq 1$, we have $\alpha_{Y_v^t(n)}\leq \exp\{-c_1 k^{\gamma_1}\}$. By Assumption \ref{as:concentration} and Lemma \ref{lemma:sub_g_products}, there exist a constant $c_2^{\prime}$ which only depends on $c_2$ such that for every $v\in\mathbb{S}^{p-1}(\mathbb{C}), t\in\mathbb{Z}, k\in\{1,\dots,n-t\}, \lambda\geq 0$, we have 
    \begin{equation}
        \mathbb{P}\left(\left|Y_v^t(k)-\mathbb{E}[Y_v^t(k)]\right|\geq \lambda\right)\leq 2\exp\left\{-c_2^\prime\lambda^{\gamma_2/2}\right\}.
    \end{equation}
    Thus by Theorem 1.1 of \cite{merlevede2011bernstein}, there exists $V<\infty$ and $C_1,\dots,C_4$ depending on $c,\gamma_1,\gamma_2$ such that for all $n$ and $\lambda>\frac{1}{n-t}$, we have
    \begin{align} \label{ineq:Bernstein}
        \mathbb{P}\left(\left|v^\prime(\hat{\mbf R}_t-\mbf R_t)v\right| \geq \lambda\right) &\leq n\exp\left\{-\frac{n^\gamma\lambda^\gamma}{C_1}\right\} + \exp\left\{-\frac{n^2\lambda^2}{C_2(1+nV)}\right\} \notag \\
        & \qquad\qquad + \exp\left\{-\frac{n\lambda^2}{C_3}\exp\left\{\frac{n^{\gamma(1-\gamma)}\lambda^{\gamma(1-\gamma)}}{C_4(\log(n\lambda))^\gamma}\right\}\right\}.
    \end{align}
    We then apply the $\varepsilon$-net type argument. Let $\mathcal{N}_{1/2}$ be a $\frac{1}{2}$-net of $\mathbb{S}^{p-1}(\mathbb{C})\cap\mathbb{B}_0(s^*)$. Note that, it contains $\binom{p}{s^*}6^{s^*}$ points. In addition, note that for any Hermitian matrix $\mbf A$, $\|\mbf A\|_{op,s^*}\leq C\max_{v\in\mathcal{N}_{1/2}}|v^\prime A v|$. Let $\mbf f_n(\omega)=\sum_{t=-M}^{M}\hat{\mbf R}_t\exp\{-2\pi\di\omega t\}$, then
    \begin{align*}
        \mathbb{P}\left(\|\mbf f_n(\omega)-\mbf f(\omega)\|_{op,s^*}\geq C\lambda\right) &\leq \sum_{v\in\mathcal{N}_{1/2}}\mathbb{P}\left(\left|v^\prime(\mbf f_n(\omega)-\mbf f(\omega))v\right|\geq\lambda\right) \\
        &\leq \binom{p}{s^*}6^{s^*}\left[\mathbb{P}\left(\left|\sum_{t=-M}^{M}v^\prime(\hat{\mbf R}_t-\mbf R_t)v\exp\{-2\pi\di\omega t\}\right|\geq\frac{\lambda}{2}\right) \right. \\ 
        &\qquad\qquad\qquad\left. + \mathbb{P}\left(\left|\sum_{|t|>M}v^\prime \mbf R_t v\exp\{-2\pi\di\omega t\}\right|\geq\frac{\lambda}{2}\right)\right].
    \end{align*}
    Note that 
    \begin{align*}
        Q_1 = \left|\sum_{|t|>M}v^\prime \mbf R_t v\exp\{-2\pi\di\omega t\}\right| &\leq 2\sum_{t>M} \| \mbf R_t \exp\{-2\pi\di\omega t\} + \mbf R_{-t}\exp\{2\pi\di\omega t\}\|_{op} \\
        &\overset{Lemma(\ref{lemma:decay_rate_cov})}{\leq} 2c_3\sum_{t>M}\exp\{-c_4t^{-\gamma_1}\}\leq 2c_3\exp\{-c_4M\},
    \end{align*}
    which for large $M$ is smaller than $\lambda/2$. Hence the second term is $0$. For the first term
    \begin{align*}
        \mathbb{P}\left(\left|\sum_{t=-M}^{M}v^\prime(\hat{\mbf R}_t-\mbf R_t)v\exp\{-2\pi\di\omega t\}\right|\geq\frac{\lambda}{2}\right) \leq \sum_{t=-M}^{M}\mathbb{P}\left(\left|v^\prime(\hat{\mbf R}_t-\mbf R_t)v\right|\geq\frac{\lambda}{2(2M+1)}\right) =:Q2.
    \end{align*}
    Let $\lambda=M\sqrt{\frac{s^*\log(p)}{n}}$. By utilizing (\ref{ineq:Bernstein}) We show that $\binom{p}{s^*}6^{s^*}Q_2\leq (I)+(II)+(III)$ where $(I),(II),(III)$ are as follows.
    \begin{align*}
        (I) = \binom{p}{s^*}6^{s^*}\sum_{t=0}^{M}n\exp\left\{-\frac{n^\gamma\lambda^\gamma}{C_1(4M+2)^\gamma}\right\} &\leq C_1^\prime (6p)^{s^*}Mn\exp\left\{-\frac{n^\gamma\lambda^\gamma}{\tilde{C}_1M^\gamma}\right\} \\
        &\leq C_1^\prime(6p)^{s^*}Mn\exp\left\{-\frac{n^\gamma M^\gamma (s^{*})^{\gamma/2}(\log(p))^{\gamma/2}}{\tilde{C}_1n^{\gamma/2}M^\gamma}\right\} \\
        &= C_1^\prime(6p)^{s^*}Mn\exp\left\{-\frac{n^{\gamma/2}(s^*\log(p))^{\gamma/2}}{\tilde{C}_1} \right\}
    \end{align*}
    \begin{align*}
        (II) = \binom{p}{s^*}6^{s^*}\sum_{t=0}^{M}\exp\left\{-\frac{n^2\lambda^2}{C_2(1+nV)(4M+2)^2}\right\} &\leq C_2^\prime(6p)^{s^*}M\exp\left\{-\frac{n^2\lambda^2}{\Tilde{C}_2M^2}\right\} \\
        &\leq C_2^\prime (6p)^{s^*} M\exp\left\{-\frac{n^2M^2s^*\log(p)}{\tilde{C}_2M^2 n}\right\} \\
        &= C_2^\prime (6p)^{s^*} M\exp\left\{-\frac{n s^* \log(p)}{\tilde{C}_2}\right\}
    \end{align*}
    \begin{align*}
        (III) = \binom{p}{s^*}6^{s^*}\sum_{t=0}^{M}\exp\left\{-\frac{n\lambda^2}{C_3}\exp\left\{\frac{n^{\gamma(1-\gamma)}\lambda^{\gamma(1-\gamma)}}{C_4(\log(n\lambda))^\gamma}\right\}\right\} \leq (II)
    \end{align*}
    If $M\to\infty$ and $M\sqrt{\frac{s^*\log(p)}{n}}\to 0$ and $n\to\infty$, then the $Q_2\to 0$. the upper bound holds for all $\omega\in[0,1)$, therefore
    \[
    \underset{\omega\in[0,1)}{\sup} \|\mbf f_n(\omega)-\mbf f(\omega)\|_{op,s^*} = O_P\left(\exp(-c_0M)\vee M\sqrt{\frac{s^*\log(p)}{n}}\right).
    \]

\end{proof}

\begin{lemma}
    Suppose that $\sqrt{s^*/\hat{s}}\leq 1$ and there exists $N_2\in\mathbb{N}$ such that for all $n>N_2$ and all $\ell=1,\dots,n/2$, $\mathcal{D}(\mathcal{V}^{(t+1)}(\omega_\ell), \mathcal{U}^*(\omega_\ell))\leq 1$. Then, for all $n>N_2$ and $\ell=1,\dots,n/2$,
    \[
    \mathcal{D}(\mathcal{U}^{(t+1)}(\omega_\ell), \mathcal{U}^*(\omega_\ell))\leq\left(1+2\sqrt{\frac{d.s^*}{\hat{s}}}\right)\mathcal{D}(\mathcal{V}^{(t+1)}(\omega_\ell), \mathcal{U}^*(\omega_\ell)).
    \]
\end{lemma}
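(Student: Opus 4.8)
The plan is to prove the inequality by a row-wise comparison of the truncated iterate with the target subspace that exploits the $s^{*}$-row-sparsity of $\mathcal U^{*}(\omega_\ell)$ against the fact that the truncation keeps $\hat s\ge s^{*}$ rows; the argument runs parallel to the truncation analysis in \cite{wang2014nonconvex}, adapted to complex bases. Fix $\ell$, write $\mathcal D:=\mathcal D(\mathcal V^{(t+1)}(\omega_\ell),\mathcal U^{*}(\omega_\ell))$, and let $\mbf V$ be an orthonormal basis of $\mathcal V^{(t+1)}(\omega_\ell)$, $\mbf U^{*}$ an orthonormal basis of $\mathcal U^{*}(\omega_\ell)$ with row-support $\mathcal S^{*}$, $|\mathcal S^{*}|\le s^{*}$. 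Let $\mathcal I$ be the set of the $\hat s$ rows of $\mbf V$ of largest $\ell_2$-norm and $\mbf V_{\mathcal I}$ the matrix obtained from $\mbf V$ by zeroing the rows outside $\mathcal I$, so that after the QR step of Algorithm SOAP, $\mathcal U^{(t+1)}(\omega_\ell)=\mathrm{colspace}(\mbf V_{\mathcal I})$. With $\mbf\Pi_{\mbf V}=\mbf V\mbf V^{\dagger}$, $\mbf\Pi^{*}=\mbf U^{*}\mbf U^{*\dagger}$ and $\mbf\Pi_{\mbf V_{\mathcal I}}$ the orthogonal projection onto $\mathrm{colspace}(\mbf V_{\mathcal I})$, the identity $\|\mbf\Pi_{\mbf V_{\mathcal I}}-\mbf\Pi^{*}\|_{F}^{2}=2\|(\mathbb{I}-\mbf\Pi_{\mbf V_{\mathcal I}})\mbf U^{*}\|_{F}^{2}$ of Lemma \ref{lemma:subspace_distance} (valid once $\mbf V_{\mathcal I}$ has full column rank $d$, checked below) and the least-squares identity $\|(\mathbb{I}-\mbf\Pi_{\mbf V_{\mathcal I}})\mbf U^{*}\|_{F}=\min_{\mbf M}\|\mbf U^{*}-\mbf V_{\mathcal I}\mbf M\|_{F}$ reduce the claim to finding one $\mbf M$ with $\sqrt2\,\|\mbf U^{*}-\mbf V_{\mathcal I}\mbf M\|_{F}\le(1+2\sqrt{ds^{*}/\hat s})\,\mathcal D$. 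I take $\mbf M=\mbf N:=\mbf V^{\dagger}\mbf U^{*}$, so $\mbf V\mbf N=\mbf\Pi_{\mbf V}\mbf U^{*}$ and $\|\mbf N\|_{op}\le1$.

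Next I split $\mbf U^{*}-\mbf V_{\mathcal I}\mbf N$ by rows. For $i\in\mathcal I$ its $i$-th row is $((\mathbb{I}-\mbf\Pi_{\mbf V})\mbf U^{*})_{i,\cdot}$, while for $i\in\mathcal I^{c}$ it is $\mbf U^{*}_{i,\cdot}$, which vanishes unless $i\in\mathcal S^{*}$. Writing $a_i:=\|((\mathbb{I}-\mbf\Pi_{\mbf V})\mbf U^{*})_{i,\cdot}\|_2$ and using $\sum_i a_i^{2}=\|(\mathbb{I}-\mbf\Pi_{\mbf V})\mbf U^{*}\|_{F}^{2}=\tfrac12\mathcal D^{2}$,
\[
\|\mbf U^{*}-\mbf V_{\mathcal I}\mbf N\|_{F}^{2}=\sum_{i\in\mathcal I}a_i^{2}+\sum_{i\in\mathcal I^{c}\cap\mathcal S^{*}}\|\mbf U^{*}_{i,\cdot}\|_2^{2}\le\tfrac12\mathcal D^{2}+\sum_{i\in\mathcal I^{c}\cap\mathcal S^{*}}\bigl(\|\mbf U^{*}_{i,\cdot}\|_2^{2}-a_i^{2}\bigr),
\]
so the whole problem reduces to the at most $s^{*}$ support rows that were crowded out of $\mathcal I$.

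The crux is that a crowded-out row must be small. Let $\mbf O$ be a unitary attaining $\min_{\mbf O}\|\mbf V-\mbf U^{*}\mbf O\|_{F}$, whose value is at most $\mathcal D\le1$ (the standard estimate $1-\cos\theta\le\sin^{2}\theta$ for principal angles $\theta\le\pi/2$, which here all satisfy $\sin^{2}\theta\le\tfrac12$). Since $\mbf U^{*}\mbf O$ is supported on $\mathcal S^{*}$, every kept row $j\in\mathcal I\setminus\mathcal S^{*}$ has $\|\mbf V_{j,\cdot}\|_2=\|(\mbf V-\mbf U^{*}\mbf O)_{j,\cdot}\|_2$ and $|\mathcal I\setminus\mathcal S^{*}|\ge\hat s-s^{*}$, so any discarded row $i\in\mathcal I^{c}$ satisfies $\|\mbf V_{i,\cdot}\|_2^{2}\le(\hat s-s^{*})^{-1}\sum_{j\in\mathcal I\setminus\mathcal S^{*}}\|(\mbf V-\mbf U^{*}\mbf O)_{j,\cdot}\|_2^{2}\le\mathcal D^{2}/(\hat s-s^{*})$. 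Setting $b_i:=\|\mbf V_{i,\cdot}\|_2$, $B^{2}:=\sum_{i\in\mathcal I^{c}\cap\mathcal S^{*}}b_i^{2}\le s^{*}\mathcal D^{2}/(\hat s-s^{*})$, and using $\|\mbf U^{*}_{i,\cdot}\|_2\le a_i+\|\mbf V_{i,\cdot}\mbf N\|_2\le a_i+b_i$ (the last step by $\|\mbf N\|_{op}\le1$), Cauchy--Schwarz over $i\in\mathcal I^{c}\cap\mathcal S^{*}$ gives $\sum(\|\mbf U^{*}_{i,\cdot}\|_2^{2}-a_i^{2})\le\sum(2a_ib_i+b_i^{2})\le\sqrt2\,\mathcal D B+B^{2}$. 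Substituting into the display, $\|(\mathbb{I}-\mbf\Pi_{\mbf V_{\mathcal I}})\mbf U^{*}\|_{F}^{2}\le\tfrac12(\mathcal D+\sqrt2 B)^{2}$, hence $\mathcal D(\mathcal U^{(t+1)},\mathcal U^{*})\le\mathcal D+\sqrt2\,B=\mathcal D\bigl(1+\sqrt{2s^{*}/(\hat s-s^{*})}\bigr)$, and the stated bound follows from $\sqrt{2s^{*}/(\hat s-s^{*})}\le2\sqrt{ds^{*}/\hat s}$, which holds once $\hat s\ge2s^{*}$ and $d\ge1$ --- a margin guaranteed by the theorem's choice $\hat s=C\max\{4d/(\gamma^{-1/2}-1)^{2},1\}s^{*}$ in the regime where $\gamma^{-1/2}-1$ is small. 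The full column rank of $\mbf V_{\mathcal I}$ follows in the same regime from $\|\mbf V_{\mathcal I^{c}}\|_{op}^{2}\le\|\mbf V_{\mathcal I^{c}}\|_{F}^{2}\le\mathcal D^{2}\hat s/(\hat s-s^{*})<1$.

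I expect the main obstacle to be the third step: showing the crowded-out support rows contribute at order $\mathcal D^{2}s^{*}/\hat s$ rather than $\mathcal D^{2}$. The naive route --- bounding $\mathcal D(\mathcal U^{(t+1)},\mathcal U^{*})\le\mathcal D(\mathcal U^{(t+1)},\mathcal V^{(t+1)})+\mathcal D$ and then estimating the discarded energy $\|\mbf V_{\mathcal I^{c}}\|_{F}$ by a column-space perturbation bound --- only yields an inflation factor bounded below by a constant (about $2$), which fails to tend to $1$ as $\mathcal V^{(t+1)}\to\mathcal U^{*}$. The telescoping above, in which the contributions of the kept support rows cancel, is precisely what makes the residual quadratic in $\mathcal D$ and hence negligible beside $\mathcal D$. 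The remaining points --- that all principal angles between $\mathcal V^{(t+1)}$ and $\mathcal U^{*}$ lie in $[0,\pi/2]$, and that $\mbf V_{\mathcal I}$ has rank $d$ --- are routine verifications legitimizing the subspace-distance identities used in the first step.
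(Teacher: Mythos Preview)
Your argument is precisely the truncation analysis of Lemma~5.5 in \cite{wang2014nonconvex}, which is exactly what the paper invokes here; the row-wise split of $\mbf U^{*}-\mbf V_{\mathcal I}\mbf N$, the Procrustes alignment to control the crowded-out support rows, and the averaging over $\mathcal I\setminus\mathcal S^{*}$ all reproduce that proof. The only loose ends are the final constant comparison $\sqrt{2s^{*}/(\hat s-s^{*})}\le 2\sqrt{ds^{*}/\hat s}$ and the full-rank check for $\mbf V_{\mathcal I}$, both of which need $\hat s$ a fixed multiple of $s^{*}$ (and $\mathcal D$ slightly below $1$) rather than merely $\hat s\ge s^{*}$ as in the lemma's stated hypothesis---but, as you note, this margin is supplied by the choice of $\hat s$ in Theorem~\ref{thm:consistency}, so the discrepancy is cosmetic.
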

\begin{proof}
    Proof follows along the same lines as in Lemma 5.5 of \cite{wang2014nonconvex}.
\end{proof}

Let 
\begin{align}\label{eq:f^{(R)}}
    {^{(R)}}{ \hat{\mbf {f}}^{(\theta)}}(\omega_\ell):=
    \begin{bmatrix}
        \Re{\left\{ \hat{\mbf {f}}^{(\theta)}(\omega_\ell) \right\}} & \Im{ \left\{ \hat{\mbf {f}}^{(\theta)}(\omega_\ell) \right\}} \\
        -\Im{ \left\{ \hat{\mbf {f}}^{(\theta)}(\omega_\ell) \right\}} & \Re{ \left\{ \hat{\mbf {f}}^{(\theta)}(\omega_\ell) \right\}} 
    \end{bmatrix}.
\end{align}
If the eigenvalues and corresponding eigenvectors of $ \hat{\mbf {f}}^{(\theta)}(\omega_\ell)$ are $\lambda_j(\omega_\ell)$ and $u_j(\omega_\ell), j=1,\dots,p$, then the $\left[2\left(j+1\right) + 1\right]$st and $\left[2\left(j+1\right) + 2\right]$st  eigenvalues and corresponding eigenvectors of ${^{(R)}}{ \hat{\mbf {f}}^{(\theta)}}(\omega_\ell)$ are 
$
\lambda_j(\omega_\ell)\;$, $\;\left[ \Re{\left\{u_j(\omega_\ell) \right\}}, \Im{\left\{u_j(\omega_\ell)\right\} }  \right]^\dagger$ and $\lambda_j(\omega_\ell)\;$, $\;\left[-\Im{\left\{u_j(\omega_\ell)\right\}}, \Re{\left\{u_j(\omega_\ell)\right\}} \right]^{\dagger}$, respectively.
\begin{theorem} \label{thm:ADMM}
    Let $\{X(t): t=1,\dots,n\}$ be a realization of a weakly stationary time series that follows $\mathcal{M}_d(f,d,s^*)$ with $n>n_{min}$ and ${^{(R)}}{ \hat{\mbf {f}}^{(\theta)}}(\omega_\ell)$ be as defined in (\ref{eq:f^{(R)}}). In addition, let the regularization parameter in (\ref{eq:relaxed_convex_opt}) be $\varrho=C\lambda_1(\omega_1)\sqrt{\log(p)/n}$ for a sufficiently large constant $C$, and the penalty parameter $\beta$ in (\ref{eq:augmented_Lagrangian}) be $\beta=\sqrt{2}p.\varrho/\sqrt{d}$. Then the iterative sequence of $d$-dimensional subspace $\{\mathcal{U}^{(t)}(\omega_1)\}_{t=1}^{T}$ satisfies
    \begin{align*}
    \mathcal{D}(\mathcal{U}^{(t)}(\omega_1), \mathcal{U}^*(\omega_1)) &\leq \frac{\tilde{C}^\prime\lambda_1(\omega_1)}{\lambda_d(\omega_1)-\lambda_{d+1}(\omega_1)}.s^*\sqrt{\frac{\log(p)}{n}} \\
    &\qquad\qquad + \frac{\tilde{C}^{\prime\prime}\sqrt{M\lambda_1(\omega_1)}}{\sqrt{\lambda_d(\omega_1)-\lambda_{d+1}(\omega_1)}}\left(\frac{d.p^2\log(p)}{n}\right)^{1/4}.\frac{1}{\sqrt{t}}
    \end{align*}
    with high probability, where $\tilde{C}^\prime=4C$ a $\tilde{C}^{\prime\prime}$ are constants. 
\end{theorem}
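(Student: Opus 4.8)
The plan is to reduce the complex Fantope problem to a real symmetric one via the isomorphism of Lemma B.4.3, run the Fantope-plus-ADMM convergence analysis of \cite{vu2013fantope} and \cite{wang2014nonconvex} on the real matrix, and then substitute the truncated-periodogram concentration bound of Theorem \ref{thm:SSD_consistency} wherever those references invoke concentration of an i.i.d.\ sample covariance matrix. Since no information is shared at the first fundamental frequency, $\hat{\mbf f}^{(\theta)}(\omega_1)=\mbf f_n(\omega_1)$, so the ADMM in \eqref{eq:relaxed_convex_opt}--\eqref{eq:augmented_Lagrangian} acts on the $2p\times 2p$ real symmetric matrix ${}^{(R)}\mbf f_n(\omega_1)$. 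By Lemma B.4.3 the eigenvalues of ${}^{(R)}\mbf f_n(\omega_1)$ are exactly those of $\mbf f_n(\omega_1)$ with multiplicity two, so the gap between its $(2d)$-th and $(2d+1)$-th eigenvalues equals $\lambda_d(\omega_1)-\lambda_{d+1}(\omega_1)$; an $s^*$-sparse $d$-dimensional principal subspace lifts to a $(2d)$-dimensional principal subspace of ${}^{(R)}\mbf f(\omega_1)$ with at most $2s^*$ nonzero rows; and entrywise deviations satisfy $\|{}^{(R)}\mbf f_n(\omega_1)-{}^{(R)}\mbf f(\omega_1)\|_{\infty,\infty}\le\sqrt2\,\|\mbf f_n(\omega_1)-\mbf f(\omega_1)\|_{\infty,\infty}$. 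It therefore suffices to prove the stated rate for the iterate produced on the real matrix, after which Part (I) of Theorem \ref{thm:consistency} follows from the factor-two comparison established in Section \ref{subsec:proof_thm_ADMM_complex}.

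The core of the argument is the decomposition $\|\bar{\mbf\Pi}^{(T)}-\mbf\Pi^*\|_F\le\|\hat{\mbf\Pi}-\mbf\Pi^*\|_F+\|\bar{\mbf\Pi}^{(T)}-\hat{\mbf\Pi}\|_F$, where $\mbf\Pi^*$ projects onto the true $(2d)$-dimensional principal subspace and $\hat{\mbf\Pi}$ solves the convex relaxation \eqref{eq:relaxed_convex_opt} for ${}^{(R)}\mbf f_n(\omega_1)$. For the statistical term I would use the standard Fantope curvature argument: the restricted strong convexity of the trace objective on the Fantope (which grows quadratically away from the optimum with a constant proportional to $\lambda_d(\omega_1)-\lambda_{d+1}(\omega_1)$), combined with dual feasibility of $\hat{\mbf\Pi}$ — valid once $\varrho$ dominates $\|{}^{(R)}\mbf f_n(\omega_1)-{}^{(R)}\mbf f(\omega_1)\|_{\infty,\infty}$ — gives $\|\hat{\mbf\Pi}-\mbf\Pi^*\|_F\le c\,s^*\varrho/[\lambda_d(\omega_1)-\lambda_{d+1}(\omega_1)]$; substituting $\varrho=C\lambda_1(\omega_1)\sqrt{\log p/n}$ yields the first summand with $\tilde C'=4C$. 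The required dual feasibility holds with high probability by the entrywise form of Theorem \ref{thm:SSD_consistency}, which controls $\|\mbf f_n(\omega_1)-\mbf f(\omega_1)\|_{\infty,\infty}$ by $O_P(\exp(-c_0M)\vee M\sqrt{\log p/n})$ — this is the only place the $\alpha$-mixing, non-Gaussian structure of the data enters, and it is what forces $C$ to be chosen large relative to $M$ and to the truncation bias.

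For the optimization term I would invoke the ergodic $O(1/T)$ decay of ADMM on the augmented Lagrangian \eqref{eq:augmented_Lagrangian} in the augmented objective value; together with the compactness of the Fantope ($\|\mbf\Pi\|_F\le\sqrt{2d}$), the scale $\|{}^{(R)}\mbf f_n(\omega_1)\|_{op}=O(M\lambda_1(\omega_1))$ obtained from the triangle bound $\sum_{|t|\le M}\|\hat{\mbf R}_t\|_{op}$, and the prescribed $\beta=\sqrt2\,p\varrho/\sqrt d$, this bounds the objective gap of $\bar{\mbf\Pi}^{(T)}$ at order $M\lambda_1(\omega_1)\,\sqrt{d/(\log p/n)}\cdot\beta^{-1}$-type plus $\beta d$-type, each divided by $T$. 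Applying the same sharpness estimate as in the statistical step — objective gap at least a constant times $[\lambda_d(\omega_1)-\lambda_{d+1}(\omega_1)]\|\bar{\mbf\Pi}^{(T)}-\hat{\mbf\Pi}\|_F^2/\sqrt d$ — converts this into $\|\bar{\mbf\Pi}^{(T)}-\hat{\mbf\Pi}\|_F=O\!\left(\sqrt{M\lambda_1(\omega_1)}\,(dp^2\log p/n)^{1/4}\,T^{-1/2}/\sqrt{\lambda_d(\omega_1)-\lambda_{d+1}(\omega_1)}\right)$, which is exactly the second summand after inserting $\beta$. A Davis--Kahan $\sin\Theta$ inequality, whose denominator (the eigengap of $\hat{\mbf\Pi}$) is bounded below by a constant using $\|\hat{\mbf\Pi}-\mbf\Pi^*\|_F=o(1)$, then passes to the distance of the extracted $(2d)$-dimensional eigenspaces up to a constant. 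Adding the two contributions on the intersection of the high-probability events of Theorem \ref{thm:SSD_consistency} gives the asserted bound, and mapping back through Lemma B.4.3 recovers the complex subspace distance.

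The main obstacle is the optimization-error step under the truncated periodogram: one must track carefully how the crude scale $\|{}^{(R)}\mbf f_n(\omega_1)\|_{op}\asymp M$ (a sum of $2M+1$ lag terms), the $2p$-dimensional real embedding, and the penalty choice $\beta=\sqrt2\,p\varrho/\sqrt d$ propagate through the ADMM convergence bound, the sharpness/curvature conversion, and the $\sin\Theta$ passage so as to produce precisely the prefactor $\sqrt{M\lambda_1(\omega_1)}\,(dp^2\log p/n)^{1/4}$ and not something larger, while simultaneously verifying that the Fantope curvature estimate and the ADMM contraction of \cite{vu2013fantope} and \cite{wang2014nonconvex} are untouched by the serial dependence beyond what Theorem \ref{thm:SSD_consistency} already absorbs. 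A secondary delicate point is checking that the single choice $\varrho=C\lambda_1(\omega_1)\sqrt{\log p/n}$ still upper-bounds the entrywise deviation, bias term included, with high probability.
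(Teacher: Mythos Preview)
Your approach is essentially the same as the paper's: the paper's proof of this theorem is a single line, stating that the argument follows Theorem~4.3 of \cite{wang2014nonconvex} with the regularization choice $\varrho=C\lambda_1(\omega_1)\sqrt{\log(p)/n}$, and your sketch simply unpacks what that argument is (Fantope curvature for the statistical error, ergodic ADMM rate for the optimization error, Davis--Kahan to pass to subspaces). One organizational point: the complex-to-real reduction via Lemma~B.4.3 is not part of this theorem in the paper's layout---Theorem~\ref{thm:ADMM} is already stated for the real matrix ${}^{(R)}\hat{\mbf f}^{(\theta)}$, and the factor-two comparison you describe is exactly how the paper derives Part~(I) of Theorem~\ref{thm:consistency} \emph{from} Theorem~\ref{thm:ADMM}, not the other way around. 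Your concern about whether $\varrho$ without an explicit $M$ factor dominates the entrywise deviation (bias included) is well-placed; the paper does not address it beyond the one-line deferral.
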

\begin{proof}
    Proof follows along the same lines as in the proof of Theorem 4.3 of \cite{wang2014nonconvex} with $\varrho=C\lambda_1(\omega_1)\sqrt{\log(p)/n}$.
\end{proof}


\subsection{Background Definitions and Lemmas}
\begin{definition}\label{def:alpha_mixing}
    Given two $\sigma$-algebras $\mathcal{A}$ and $\mathcal{B}$, the $\alpha$-mixing coefficient between $\mathcal{A}$ and $\mathcal{B}$, denoted by $\alpha(\mathcal{A}, \mathcal{B})$, is defined as 
    \begin{equation}\label{eq:alpha_mixing}
        \alpha(\mathcal{A},\mathcal{B}) = \sup_{A\in\mathcal{A}, B\in\mathcal{B}} \mathbb{P}(A\cap B) - \mathbb{P}(A)\mathbb{P}(B)
    \end{equation}
\end{definition}

\begin{definition}\label{def:strong_mixing-coeff}
    Given a weakly stationary time series $X_t$, the strong-mixing coefficient at lag $h$ is defined as 
    \begin{equation}\label{eq:strong_mising_coeff}
        \alpha(h) = \alpha(\sigma(X_t,t\leq 0), \sigma(X_t,t\geq h).
    \end{equation}
\end{definition}

\begin{lemma}\label{lemma:decay_rate_cov}
    Let $\mbf R_t = \mathbb{E}[X(t)X(0)]$. Under the assumptions (\ref{as:mixing}, \ref{as:concentration}, \ref{as:gamma}), there exist constants $c_3$ and $c_4$ which only depends on $c_1,c_2$ and $\gamma_2$, such that for all $t\in\mathbb{Z}$ and all $\omega\in[0,1)$,  
    \begin{equation}
        \|\mbf R_t\exp\{-2\pi\di\omega t\}+\mbf R_{-t}\exp\{2\pi\di\omega t\}\|_{op} \leq c_3\exp\{-c_4t^{\gamma_1}\}
    \end{equation}
\end{lemma}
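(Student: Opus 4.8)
The plan is to reduce the claimed bound to a decay estimate for $\|\mbf R_t\|_{op}$ and then to control $\|\mbf R_t\|_{op}$ via a covariance inequality for strongly mixing sequences together with the moment bounds implied by the sub-Weibull tails of Assumption \ref{as:concentration}. Throughout one may take $\mathbb{E}[X(t)]=0$ (as is implicit elsewhere in the development); otherwise we replace $X$ by its centered version, which does not alter any quantity entering the subsequent argument.

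First I would use weak stationarity to write $\mbf R_{-t}=\mathbb{E}[X(-t)X(0)^\dagger]=\mathbb{E}[X(0)X(t)^\dagger]=\mbf R_t^\dagger$, so that, with $\mbf A:=\mbf R_t\exp\{-2\pi\di\omega t\}$, the matrix $\mbf R_t\exp\{-2\pi\di\omega t\}+\mbf R_{-t}\exp\{2\pi\di\omega t\}$ equals $\mbf A+\mbf A^\dagger$. Since $|\exp\{-2\pi\di\omega t\}|=1$, the triangle inequality gives $\|\mbf A+\mbf A^\dagger\|_{op}\le 2\|\mbf A\|_{op}=2\|\mbf R_t\|_{op}$, uniformly in $\omega\in[0,1)$. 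Hence it suffices to show $\|\mbf R_t\|_{op}\le c_3'\exp\{-c_4 t^{\gamma_1}\}$ for all $t\ge 1$; the value $t=0$ is absorbed by enlarging the constant $c_3$, and $t<0$ follows from $\mbf R_{-t}=\mbf R_t^\dagger$.

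Next, for unit vectors $u,v\in\mathbb{S}^{p-1}(\mathbb{C})$ I would write $u^\dagger\mbf R_t v=\mathbb{E}\bigl[(u^\dagger X(t))\,\overline{(v^\dagger X(0))}\bigr]$, which, under centering, is the (complex) covariance of $u^\dagger X(t)$ and $v^\dagger X(0)$. As $u^\dagger X(t)$ is $\sigma(X_s:s\ge t)$-measurable and $v^\dagger X(0)$ is $\sigma(X_s:s\le 0)$-measurable, Davydov's covariance inequality for $\alpha$-mixing sequences, applied to the real and imaginary parts (which costs only an absolute-constant factor), yields for conjugate exponents $1/p_0+1/q_0+1/r_0=1$ that $|u^\dagger\mbf R_t v|\le C\,\alpha(t)^{1/r_0}\,\|u^\dagger X(t)\|_{L^{p_0}}\,\|v^\dagger X(0)\|_{L^{q_0}}$ for an absolute constant $C$. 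Taking $p_0=q_0=4$ and $r_0=2$, Assumption \ref{as:concentration} gives, for any unit vector $w$ and any $s$, $\mathbb{E}|w^\dagger X(s)|^4=\int_0^\infty 4\lambda^3\,\mathbb{P}(|w^\dagger X(s)|\ge\lambda)\,d\lambda\le\int_0^\infty 8\lambda^3\exp\{-c_2\lambda^{\gamma_2}\}\,d\lambda=:K(c_2,\gamma_2)<\infty$, so each $L^4$ norm above is at most $K^{1/4}$. Combining with Assumption \ref{as:mixing}, $\alpha(t)^{1/2}\le\exp\{-\tfrac{1}{2}c_1 t^{\gamma_1}\}$, gives $|u^\dagger\mbf R_t v|\le C\sqrt{K}\,\exp\{-\tfrac{1}{2}c_1 t^{\gamma_1}\}$ uniformly over $u,v$; taking the supremum over unit vectors bounds $\|\mbf R_t\|_{op}$, and the lemma follows with $c_4=c_1/2$ and $c_3=2C\sqrt{K}$ (enlarged if needed to accommodate $t=0$), both depending only on $c_1,c_2,\gamma_2$.

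The main obstacle is the choice of a covariance inequality that is valid for unbounded, complex-valued random variables: the elementary Billingsley-type bound requires $L^\infty$ control, so one must use the $L^{p}$-version of Davydov (or Rio) and verify its moment hypotheses through the sub-Weibull tails, while keeping every constant independent of the frequency $\omega$, of the dimension $p$, and of the particular unit vectors. A secondary and easily dispatched point is the centering of $X$ noted above; note also that Assumption \ref{as:gamma} is not needed here, since it enters only the concentration results built on the Bernstein-type inequality of \cite{merlevede2011bernstein}.
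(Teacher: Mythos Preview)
The paper does not supply its own proof of this lemma; it is listed under ``Background Definitions and Lemmas'' without an accompanying argument, so there is nothing to compare against directly. Your proof is correct and is the standard route to such a statement: the reduction to $2\|\mbf R_t\|_{op}$ via $\mbf R_{-t}=\mbf R_t^\dagger$ and $|\exp\{-2\pi\di\omega t\}|=1$ is clean, Davydov's covariance inequality with $p_0=q_0=4$, $r_0=2$ is applicable, and the required $L^4$ bounds follow from the sub-Weibull tail condition in Assumption~\ref{as:concentration} with a constant depending only on $c_2,\gamma_2$. The resulting constants $c_3,c_4$ depend only on $c_1,c_2,\gamma_2$, as claimed, and are in particular dimension-free. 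Your remark that Assumption~\ref{as:gamma} plays no role here is also correct.
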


\begin{lemma}\label{lemma:sub_g_products}
    If $X$ and $Y$ are two sub-Gaussian random variables, then $XY$ is a complex sub-exponential random variable. 
\end{lemma}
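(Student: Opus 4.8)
The plan is to prove the claim directly from tail estimates, via a deterministic set inclusion that needs no independence of $X$ and $Y$ — this is essential, since in the application (the proof of Theorem \ref{thm:SSD_consistency}) the two factors $v^\dagger X(k+t)$ and $X(t)^\dagger v$ are highly correlated. First I would fix the working definitions: a (possibly complex) random variable $Z$ is \emph{sub-Gaussian} if there is a constant $K_Z$ with $\mathbb{P}(|Z|\ge\lambda)\le 2\exp\{-\lambda^2/K_Z^2\}$ for all $\lambda\ge 0$, and \emph{sub-exponential} if $\mathbb{P}(|Z|\ge\lambda)\le 2\exp\{-\lambda/K_Z\}$ for all $\lambda\ge 0$; here $|\cdot|$ is the modulus, so that $\Re Z$ and $\Im Z$ inherit the same type with comparable constants (these characterizations are equivalent, up to constants, to finiteness of the $\psi_2$-, resp. $\psi_1$-, Orlicz norm, and one may use whichever is convenient).

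The key observation is the pointwise inclusion
\[
\{|XY|\ge t\}\ \subseteq\ \{|X|\ge\sqrt{t}\}\ \cup\ \{|Y|\ge\sqrt{t}\},\qquad t\ge 0,
\]
valid because $|X|<\sqrt t$ and $|Y|<\sqrt t$ force $|XY|<t$. Taking probabilities, applying a union bound, and inserting the sub-Gaussian tails of $X$ and $Y$ yields
\[
\mathbb{P}(|XY|\ge t)\ \le\ \mathbb{P}(|X|\ge\sqrt t)+\mathbb{P}(|Y|\ge\sqrt t)\ \le\ 2e^{-t/K_X^2}+2e^{-t/K_Y^2}\ \le\ 4\,e^{-t/K^2},
\]
with $K=\max\{K_X,K_Y\}$, which is a sub-exponential tail up to the leading constant. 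To land on the constant $2$ one may either absorb the factor $4$ by slightly enlarging $K$ (for $t$ above a constant multiple of $K^2$ one has $4e^{-t/K^2}\le 2e^{-t/(2K^2)}$, while the bound is vacuous for smaller $t$), or argue at the level of Orlicz norms: Young's inequality $ab\le\tfrac12(a^2+b^2)$ gives $\|XY\|_{\psi_1}\le\|X\|_{\psi_2}\|Y\|_{\psi_2}<\infty$, which is precisely sub-exponentiality. I would present the event-inclusion computation as the main argument and record the Orlicz-norm version as the short alternative.

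I do not expect a genuine obstacle; the only points requiring care are (i) pinning down what ``complex sub-Gaussian/sub-exponential'' means, so that passing to real and imaginary parts elsewhere in Appendix B is legitimate, and (ii) bookkeeping of absolute constants. I would also note that the identical computation, with the sub-Gaussian tail replaced by the stretched-exponential tail $\mathbb{P}(|Z|\ge\lambda)\le 2\exp\{-c_2\lambda^{\gamma_2}\}$ of Assumption \ref{as:concentration}, produces $\mathbb{P}(|XY|\ge t)\le 4\exp\{-c_2\,t^{\gamma_2/2}\}$; this is exactly the form in which the lemma is invoked in Theorem \ref{thm:SSD_consistency}, the exponent $\gamma_2/2$ there being nothing but the trace of the substitution $\lambda=\sqrt t$.
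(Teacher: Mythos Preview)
The paper does not actually supply a proof of this lemma; it is listed among the ``Background Definitions and Lemmas'' in Appendix~B.4 and is stated without argument, as a standard fact. Your proof is correct and is the standard one: the event inclusion $\{|XY|\ge t\}\subseteq\{|X|\ge\sqrt t\}\cup\{|Y|\ge\sqrt t\}$ followed by a union bound, or equivalently the Orlicz-norm inequality $\|XY\|_{\psi_1}\le\|X\|_{\psi_2}\|Y\|_{\psi_2}$ via Young's inequality. Your observation that no independence is needed, and that the identical computation with the stretched-exponential tail of Assumption~\ref{as:concentration} produces the exponent $\gamma_2/2$, is exactly right and matches how the lemma is invoked in the proof of Theorem~\ref{thm:SSD_consistency}.
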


\begin{lemma}\label{Lemma:isomorphism}
    To any $J\times K$ matrix $\mbf Z$ with complex entries there corresponds a $(2J)\times (2K)$ matrix $\mbf Z^R$ with real entries such that 
    \begin{enumerate}
        \item if $\mbf Z=\mbf X+\mbf Y$, then $\mbf Z^R=\mbf X^R+\mbf Y^R$
        \item if $\mbf Z=\mbf X\mbf Y$, then $\mbf Z^R=\mbf X^R\mbf Y^R$
        \item if $\mbf Y=\mbf Z^{-1}$, then $\mbf Y^R=(\mbf Z^R)^{-1}$
        \item $\det(\mbf Z^R)=|\det(\mbf Z)|^2$
        \item if $\mbf Z$ is Hermitian, then $\mbf Z^R$ is symmetric
        \item if $\mbf Z$ is unitary, then $\mbf Z^R$ is orthogonal
        \item if the latent values and vectors of $\mbf Z$ are $\mu_j,\alpha_j,j=1,\dots,J$, then those of $\mbf Z^R$ are, respectively,
        \[
\lambda_j, 
\begin{bmatrix}
    \Re{(u_j)} \\
    \Im{(u_j)}
\end{bmatrix}
,\mbox{ and },
\lambda_j, 
\begin{bmatrix}
    -\Im{(u_j)} \\
    \Re{(u_j)}
\end{bmatrix}, j=1,\dots,J.
\]
        providing that dimensions of matrices appearing throughout the lemma are appropriate. 
        In addition, the $\mbf Z^R$ may be taken to be 
        \[
        Z^R=
    \begin{bmatrix}
        \Re{(\mbf Z)} & \Im{(\mbf Z)} \\
        -\Im{(\mbf Z)} & \Re{(\mbf Z)} 
    \end{bmatrix}.
        \]
    \end{enumerate}
\end{lemma}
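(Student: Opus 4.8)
The plan is to take $\mbf Z^R$ to be precisely the block matrix displayed at the end of the statement, namely $\mbf Z^R=\begin{bmatrix}\mbf A&\mbf B\\-\mbf B&\mbf A\end{bmatrix}$ with $\mbf A=\Re(\mbf Z)$ and $\mbf B=\Im(\mbf Z)$, and to verify the seven properties by direct block-matrix computation. The unifying observation is that $\mbf Z\mapsto\mbf Z^R$ is an injective ring homomorphism from complex matrices into real matrices of twice the size, so (1)--(3) are formal consequences of the algebra and (4)--(7) reduce to a single block-diagonalization. Property (1) is immediate, since $\Re$ and $\Im$ are real-linear and hence the block form is additive in $\mbf Z$. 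For (2), writing $\mbf X=\mbf A_1+\mi\mbf B_1$ and $\mbf Y=\mbf A_2+\mi\mbf B_2$ one has $\mbf X\mbf Y=(\mbf A_1\mbf A_2-\mbf B_1\mbf B_2)+\mi(\mbf A_1\mbf B_2+\mbf B_1\mbf A_2)$, and multiplying out the $2\times 2$ block product $\mbf X^R\mbf Y^R$ produces exactly the block matrix with diagonal blocks $\mbf A_1\mbf A_2-\mbf B_1\mbf B_2$ and off-diagonal blocks $\pm(\mbf A_1\mbf B_2+\mbf B_1\mbf A_2)$, i.e. $(\mbf X\mbf Y)^R$. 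Since $\mbf I^R=\mbf I_{2J}$, property (3) then follows for square $\mbf Z$: from $\mbf Y\mbf Z=\mbf Z\mbf Y=\mbf I$ and (2) we get $\mbf Y^R\mbf Z^R=\mbf Z^R\mbf Y^R=\mbf I_{2J}$, so $\mbf Y^R=(\mbf Z^R)^{-1}$.

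For (5) and (6) I would first record the transpose identity $(\mbf Z^\dagger)^R=(\mbf Z^R)^\dagger$, valid for every $\mbf Z$, since $\mbf Z^\dagger=\mbf A^\dagger-\mi\mbf B^\dagger$ gives $(\mbf Z^\dagger)^R=\begin{bmatrix}\mbf A^\dagger&-\mbf B^\dagger\\\mbf B^\dagger&\mbf A^\dagger\end{bmatrix}=(\mbf Z^R)^\dagger$. Property (5) is the special case $\mbf Z=\mbf Z^\dagger$, which forces $(\mbf Z^R)^\dagger=\mbf Z^R$, i.e. $\mbf Z^R$ symmetric. For (6), if $\mbf Z$ is unitary then, combining (2) with the transpose identity, $(\mbf Z^R)^\dagger\mbf Z^R=(\mbf Z^\dagger)^R\mbf Z^R=(\mbf Z^\dagger\mbf Z)^R=\mbf I^R=\mbf I_{2J}$, and symmetrically $\mbf Z^R(\mbf Z^R)^\dagger=\mbf I_{2J}$, so $\mbf Z^R$ is orthogonal.

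For (4) and (7), take square $\mbf Z$ and set $\mbf P=\begin{bmatrix}\mbf I&\mbf I\\-\mi\mbf I&\mi\mbf I\end{bmatrix}$, which is invertible with $\mbf P^{-1}=\tfrac12\begin{bmatrix}\mbf I&\mi\mbf I\\\mbf I&-\mi\mbf I\end{bmatrix}$; a short computation gives the block-diagonalization $\mbf P^{-1}\mbf Z^R\mbf P=\begin{bmatrix}\bar{\mbf Z}&0\\0&\mbf Z\end{bmatrix}$. Taking determinants yields $\det(\mbf Z^R)=\det(\bar{\mbf Z})\det(\mbf Z)=\overline{\det(\mbf Z)}\,\det(\mbf Z)=|\det(\mbf Z)|^2$, which is (4). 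The same similarity transfers eigenpairs: if $\mbf Z u_j=\mu_j u_j$, then $\mbf Z^R$ has the eigenvector obtained by stacking $u_j$ over $\mi u_j$, with eigenvalue $\mu_j$, while $\bar{\mbf Z}\,\overline{u_j}=\overline{\mu_j}\,\overline{u_j}$ gives the eigenvector obtained by stacking $\overline{u_j}$ over $-\mi\overline{u_j}$, with eigenvalue $\overline{\mu_j}$. In the case relevant to the paper, where $\mbf Z$ is Hermitian and every $\mu_j$ is real, these two eigenvalues coincide, and real linear combinations of the two complex eigenvectors produce the pair of real eigenvectors built from $\Re(u_j)$ and $\Im(u_j)$ as displayed in (7), the precise placement of signs being fixed by the conjugation convention for $u_j$.

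I do not anticipate a serious obstacle: each of (1)--(6) is a one- or two-line block-matrix identity, and (4) and (7) both follow from the single diagonalization $\mbf P^{-1}\mbf Z^R\mbf P=\mathrm{diag}(\bar{\mbf Z},\mbf Z)$. The only points requiring care are that the statements involving inverses, determinants, eigenvalues, and Hermitian or unitary structure implicitly require $\mbf Z$ to be square, and that the sign and conjugation bookkeeping in (7) be arranged so that the real eigenbasis comes out in exactly the stated form; the result itself is classical and is recorded in Brillinger (2001).
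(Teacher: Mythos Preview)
Your proof is correct and follows the standard route. The paper, however, does not actually supply a proof of this lemma: it is listed among the ``Background Definitions and Lemmas'' in Appendix~B and stated without argument, as a classical fact borrowed from Brillinger (2001). Your closing remark already identifies this, so there is no discrepancy to resolve; your block-diagonalization $\mbf P^{-1}\mbf Z^R\mbf P=\mathrm{diag}(\bar{\mbf Z},\mbf Z)$ is exactly the standard device used in that reference to obtain (4) and (7) in one stroke.
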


Let $\mathcal{U}$ and $\mathcal{U}^\prime$ be two $d$-dimensional subspaces of $\mathbb{R}^p$ with projection matrices $\Pi$ and $\mbf \Pi^\prime$, respectively. In addition, let columns of $\mbf U=[u_1|\dots|u_d]$ and $\mbf U^\prime=[u_1^\prime|\dots|u_d^\prime]$ be orthonormal basis of $\mathcal{U}$ and $\mathcal{U}^\prime$, respectively. Also, let $\mathcal{U}^\perp$ be the orthogonal complement of $\mbf U$ and $\mbf U^\perp=[u_{d+1}|\dots|u_p]$ be an orthonormal matrix whose columns are orthogonal to the columns of $U$. The following lemma, adopted from \cite{wang2014nonconvex}, characterizes useful properties of the distance between subspaces.
\begin{lemma} \label{lemma:subspace_distance}
    Let $\mathcal{U}$ and $\mathcal{U}^\prime$ be two $d$-dimensional subspaces of $\mathbb{R}^p$. Then,
    \begin{equation*}
        \mathcal{D}(\mathcal{U},\mathcal{U}^\prime)=\sqrt{2}\|\mbf U^\dagger \mbf U^{\prime\perp}\|_F=\sqrt{2}\|[\mbf U^\perp]^\dagger \mbf U^\prime\|_F = \sqrt{2}\|\mbf \Pi^\perp \mbf \Pi^\prime\|_F=\sqrt{2}\|\mbf \Pi\mbf \Pi^{\prime\perp}\|_F\leq\sqrt{2d} 
    \end{equation*}
    and
    \begin{equation*}
        \mathcal{D}(\mathcal{U},\mathcal{U}^\prime)=\sqrt{2}\left[d-\|\mbf U^\dagger \mbf U^\prime\|_F^2\right]^{1/2}=\sqrt{2}\left[d-\frac{\mathcal{D}(\mathcal{U},\mathcal{U}^{\prime\perp})^2}{2}\right]^{1/2}.
    \end{equation*}
\end{lemma}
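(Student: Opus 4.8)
The plan is to reduce every expression in the statement to the single scalar $\tau := \tr(\mbf\Pi\mbf\Pi')$, where $\mbf\Pi = \mbf U\mbf U^\dagger$ and $\mbf\Pi' = \mbf U'\mbf U'^\dagger$ are the orthogonal projections onto $\mathcal{U}$ and $\mathcal{U}'$. The starting point is to expand $\mathcal{D}(\mathcal{U},\mathcal{U}')^2 = \|\mbf\Pi - \mbf\Pi'\|_F^2 = \tr[(\mbf\Pi-\mbf\Pi')^2]$, using that $\mbf\Pi,\mbf\Pi'$ are Hermitian and idempotent and that $\tr(\mbf\Pi) = \tr(\mbf\Pi') = d$; this gives $\mathcal{D}(\mathcal{U},\mathcal{U}')^2 = 2d - 2\tau = 2(d-\tau)$. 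Likewise $\tr(\mbf\Pi\mbf\Pi') = \tr(\mbf U\mbf U^\dagger\mbf U'\mbf U'^\dagger) = \tr(\mbf U'^\dagger\mbf U\mbf U^\dagger\mbf U') = \|\mbf U^\dagger\mbf U'\|_F^2$, and $\|\mbf\Pi\mbf\Pi'\|_F^2 = \tr(\mbf\Pi'\mbf\Pi\mbf\Pi') = \tr(\mbf\Pi\mbf\Pi') = \tau$, so the middle form $\mathcal{D}(\mathcal{U},\mathcal{U}') = \sqrt{2}\,[\,d - \|\mbf U^\dagger\mbf U'\|_F^2\,]^{1/2}$ is immediate, and the bound $\mathcal{D}(\mathcal{U},\mathcal{U}') \le \sqrt{2d}$ follows because $\tau = \|\mbf\Pi\mbf\Pi'\|_F^2 \ge 0$ (equivalently, $\tau$ is the sum of squared cosines of the principal angles).

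Next I would show that each of the four ``cross'' Frobenius norms in the first displayed line equals $\sqrt{d-\tau}$, hence all coincide with $\mathcal{D}(\mathcal{U},\mathcal{U}')/\sqrt{2}$. The four computations have the same shape: using $\|\mbf A\|_F^2 = \tr(\mbf A^\dagger\mbf A)$, cyclicity of the trace, idempotence of $\mbf\Pi$ and $\mbf\Pi^\perp = \mbf I - \mbf\Pi$, and the identities $\mbf U\mbf U^\dagger = \mbf\Pi$, $\mbf U^\perp(\mbf U^\perp)^\dagger = \mbf\Pi^\perp$, one gets $\|\mbf U^\dagger\mbf U^{\prime\perp}\|_F^2 = \tr(\mbf\Pi\,\mbf\Pi^{\prime\perp}) = d-\tau$, $\|[\mbf U^\perp]^\dagger\mbf U'\|_F^2 = \tr(\mbf\Pi^\perp\mbf\Pi') = d-\tau$, $\|\mbf\Pi^\perp\mbf\Pi'\|_F^2 = \tr(\mbf\Pi'\mbf\Pi^\perp\mbf\Pi') = \tr(\mbf\Pi^\perp\mbf\Pi') = d-\tau$, and $\|\mbf\Pi\mbf\Pi^{\prime\perp}\|_F^2 = \tr(\mbf\Pi\,\mbf\Pi^{\prime\perp}) = d-\tau$. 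Combining these with $\mathcal{D}(\mathcal{U},\mathcal{U}')^2 = 2(d-\tau)$ yields the entire chain of equalities in the first line.

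For the last identity in the second line, I would read $\mathcal{D}(\mathcal{U},\mathcal{U}^{\prime\perp})$ through the same formula applied with the subspace $\mathcal{U}^{\prime\perp}$ in place of $\mathcal{U}'$: since the orthogonal complement of $\mathcal{U}^{\prime\perp}$ is $\mathcal{U}'$, the relevant overlap block of $\mathcal{U}$ with the complement of $\mathcal{U}^{\prime\perp}$ is $\mbf U^\dagger\mbf U'$, so $\mathcal{D}(\mathcal{U},\mathcal{U}^{\prime\perp}) = \sqrt{2}\,\|\mbf\Pi\mbf\Pi'\|_F = \sqrt{2\tau}$, i.e. $\tau = \tfrac12\mathcal{D}(\mathcal{U},\mathcal{U}^{\prime\perp})^2$. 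Substituting this into $\mathcal{D}(\mathcal{U},\mathcal{U}') = \sqrt{2}\,[\,d - \|\mbf U^\dagger\mbf U'\|_F^2\,]^{1/2}$ gives $\mathcal{D}(\mathcal{U},\mathcal{U}') = \sqrt{2}\,[\,d - \tfrac12\mathcal{D}(\mathcal{U},\mathcal{U}^{\prime\perp})^2\,]^{1/2}$.

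There is essentially no serious obstacle here: the whole argument is projection-algebra bookkeeping, and everything collapses to traces of products of two projections. The only points that need a little care are (i) carrying the conjugate transpose through in the complex case — one uses $\|\mbf A\|_F^2 = \tr(\mbf A^\dagger\mbf A)$, the Hermiticity of $\mbf\Pi,\mbf\Pi^\perp$, and $(\mbf\Pi^\perp)^\dagger\mbf\Pi^\perp = \mbf\Pi^\perp$, so that each cross term reduces to a trace of a product of projections; and (ii) the mild notational point that in the last identity one of the subspaces has dimension $p-d$, so $\mathcal{D}(\mathcal{U},\mathcal{U}^{\prime\perp})$ must be read through the $\|\mbf\Pi\mbf\Pi'\|_F$-form rather than naively as $\|\mbf\Pi - \mbf\Pi^{\prime\perp}\|_F$.
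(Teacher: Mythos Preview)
Your argument is correct and self-contained. The paper does not actually prove this lemma; it simply refers the reader to \cite{stewart1990perturbation} and \cite{bhatia2013matrix}. Your trace-of-projections bookkeeping, reducing every quantity to $\tau=\tr(\mbf\Pi\mbf\Pi')$ and then reading off $\mathcal{D}(\mathcal{U},\mathcal{U}')^2=2(d-\tau)$ together with $\|\mbf U^\dagger\mbf U^{\prime\perp}\|_F^2=\|[\mbf U^\perp]^\dagger\mbf U'\|_F^2=\|\mbf\Pi^\perp\mbf\Pi'\|_F^2=\|\mbf\Pi\mbf\Pi^{\prime\perp}\|_F^2=d-\tau$, is exactly the standard elementary derivation underlying those references, so in substance you are giving the proof the paper omits.

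Your caveat about the last identity is the one genuinely delicate point and you handled it correctly: with $\dim\mathcal{U}'^\perp=p-d\neq d$, the projection-difference definition $\|\mbf\Pi-\mbf\Pi^{\prime\perp}\|_F$ gives $p-2d+2\tau$, which does \emph{not} reproduce the stated formula unless $p=2d$. Reading $\mathcal{D}(\mathcal{U},\mathcal{U}^{\prime\perp})$ instead as $\sqrt{2}\,\|\mbf\Pi(\mbf\Pi^{\prime\perp})^\perp\|_F=\sqrt{2}\,\|\mbf\Pi\mbf\Pi'\|_F=\sqrt{2\tau}$, as you do, is the intended interpretation (this is how the identity is stated in \cite{wang2014nonconvex}, from which the lemma is adopted), and with that convention the final equality follows immediately.
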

\begin{proof}
    See \cite{stewart1990perturbation} and \cite{bhatia2013matrix} for details.
\end{proof}

\section{Simulations} \label{Appendix:Simulation}

This appendix provides four classes of additional results from simulation studies: illustrations of the role of the smoothness penalty, stability of the tuning parameter selection routine, computational runtime, and performance in sparsity and frequency selection.  

\subsection{Illustration}

The bottom two panels of Figure 2 in the manuscript help illustrate the benefits of the smoothness penalty in improved estimation and interpretation through the analysis of one realization of a simulated time series.  To further illustrate these benefits, here we take a closer look at information that is displayed in this figure for just one coordinate. Estimates from the modulus of the principal component for the first coordinate without sharing of information across frequencies $\theta =0$, $\theta =0.6$ and the true population value are displayed in Figure \ref{fig:evec_coordinate}.  We can see that estimates at certain frequencies within the range of support $\Omega$ are estimated as zero when $\theta = 0$.  This is a result of the sparsity selection procedure  falsely selecting other coordinates as having non-zero support at a given frequency compared to this first coordinate. The result is not only poorer finite sample estimation compared to when $\theta = 0.6$, but also results in an estimate where there is not a continuous band of support frequency. 

\begin{figure}[H] 
\centering
\begin{tabular}{c}
 \includegraphics[width=5.5in, height=3in, bb = 50 50 600 600]{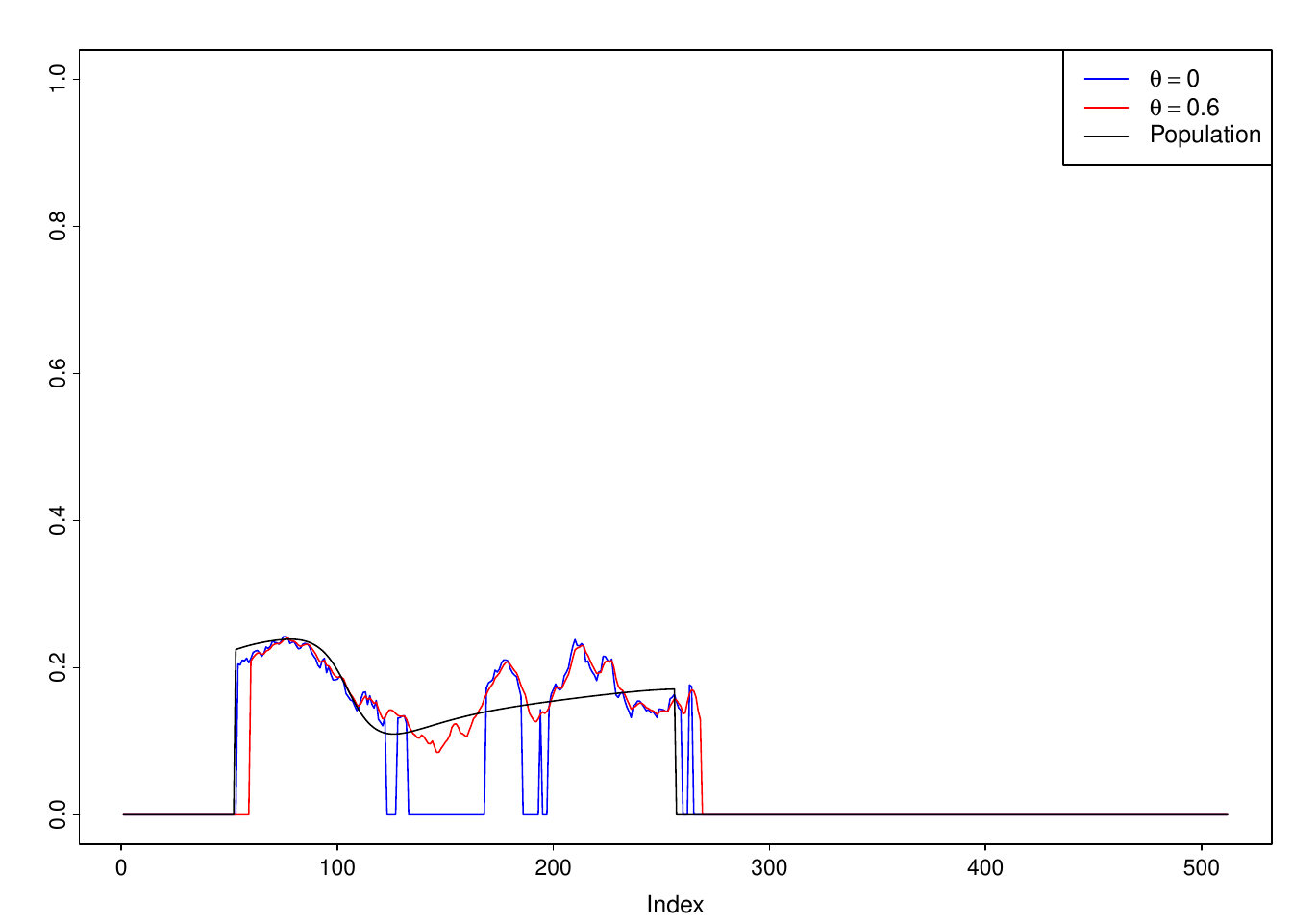} 
\end{tabular}
\vspace{10pt}
         \caption{A coordinate of the leading eigenvector of the spectral density matrix of the process $\{X_t\}$.}
         \label{fig:evec_coordinate}
\end{figure}

\subsection{Tuning Parameter Stability}

We explored changing the order of parameter selection on the final result by comparing the sequential order of initializing $s$ followed by selecting $\eta$ then $\theta$ and $s$, continuing iteratively, with the ordering of initializing with $\eta$ followed by selecting $s$ and then $\theta$ and $\eta$, continuing iteratively. In the case where the parameter selection begins by the initial value $\eta^{init}=3n/8$, which is an overestimate of the frequency parameter in our case, followed by selection of $s$ and $\theta$, iteratively for total of $5$ iterations, we observed that the parameter section and final estimates are not sensitive to the order of parameter selection. Side by side boxplots comparing the average distances between the estimated and the population 1-dimensional principal subspaces of the process are illustrated in Figure \ref{fig:Comparing_Initializations}.

\begin{figure}[H] 
\centering
\begin{tabular}{c}
 \includegraphics[width=5.75in, height=3in, bb = 50 50 600 600]{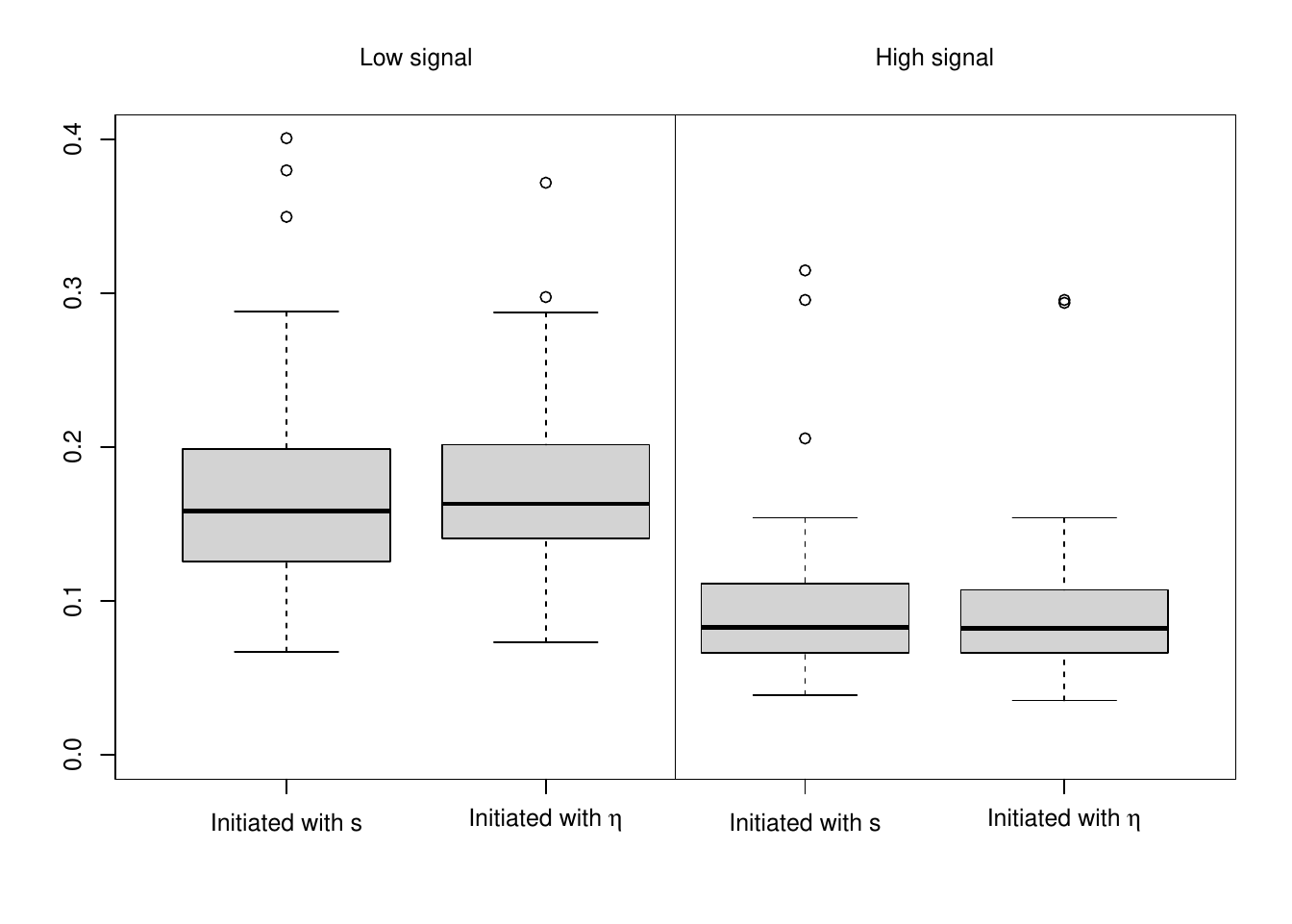}
\end{tabular}
         \caption{Side by side box plots of the mean estimation error of the LSPCA illustrating the effect of changing the order of parameter selection on estimation of the 1-dimensional principal subspaces over $\Omega$.}
         \label{fig:Comparing_Initializations}

\end{figure}

\subsection{Runtime}

Table \ref{table:time} is a summary of the run time of the LSPCA algorithm 
for all combinations of $p=64,n=1024$; $p=64,n=2048$; $p=128,n=1024$; and $p=128,n=2048$.  
Computations were performed in parallel in R on an HPC running Redhat Linux and where each core used has at least 196GB of RAM. For the settings with $p=64$,  4GB of memory and for $p=128$ 8GB of memory were requested. 

\begin{center}
\begin{table}[H] 
\centering
\begin{tabular}{|>{\centering\arraybackslash}m{4.5cm}|>{\centering\arraybackslash}m{.95cm}|>{\centering\arraybackslash}m{.95cm}|>{\centering\arraybackslash}m{.95cm}|>{\centering\arraybackslash}m{.95cm}|>{\centering\arraybackslash}m{.95cm}|>{\centering\arraybackslash}m{.95cm}|>{\centering\arraybackslash}m{.5cm}|>{\centering\arraybackslash}m{.5cm}|>{\centering\arraybackslash}m{.5cm}|>{\centering\arraybackslash}m{.90cm}|>{\centering\arraybackslash}m{.5cm}|>{\centering\arraybackslash}m{.5cm}|>{\centering\arraybackslash}m{.5cm}|>{\centering\arraybackslash}m{.5cm}|>{\centering\arraybackslash}m{.5cm}|>{\centering\arraybackslash}m{.5cm}|>{\centering\arraybackslash}m{.5cm}|>{\centering\arraybackslash}m{.5cm}|>{\centering\arraybackslash}m{.5cm}|>{\centering\arraybackslash}m{.5cm}|
}
    \hline
	$\hat{\eta}$ &min&$Q_1$&$Q_2$&Mean&$Q_3$&max\\
	\hline
	$p = 64, n=1024$ & 1.186 &  1.349 &  1.386  & 1.380 &  1.441 &  1.543     \\
	\hline
	$p = 64, n=2048$ & 2.300 &  2.642 &  2.728  & 2.699 &  2.823  & 3.040   \\
	\hline
	$p = 128, n=1024$ & 1.429 &  1.824 &  1.970 &  1.905 &  2.015 &  2.164     \\
	\hline
	$p = 128, n=2048$ & 2.883 &  3.432 &  3.623 &  3.545 &  3.702 &  4.048   \\
	\hline
	\end{tabular}
    \caption{Run time of the LSPCA algorithm in minutes.}
\label{table:time}
\end{table}
\end{center}

\subsection{Sparsity and Frequency Parameter Selection}
In this experiment the sparsity parameter is selected in an iterative scheme starting with an initial sparsity level of $s^{init} = 16$ followed by selecting $\eta$ iterating $2$ times. Results of the selected sparsity level in 100 runs of the algorithm is as follows. It can be seen that the sparsity parameter selection does not underestimate the true sparsity level more than about 80\% of the times.  

\begin{center}
\begin{table}[H] 
\centering
\begin{tabular}{|>{\centering\arraybackslash}m{4.5cm}|>{\centering\arraybackslash}m{.5cm}|>{\centering\arraybackslash}m{.5cm}|>{\centering\arraybackslash}m{.5cm}|>{\centering\arraybackslash}m{.5cm}|>{\centering\arraybackslash}m{.5cm}|>{\centering\arraybackslash}m{.5cm}|>{\centering\arraybackslash}m{.5cm}|>{\centering\arraybackslash}m{.5cm}|>{\centering\arraybackslash}m{.5cm}|>{\centering\arraybackslash}m{.9cm}|>{\centering\arraybackslash}m{.5cm}|>{\centering\arraybackslash}m{.5cm}|>{\centering\arraybackslash}m{.5cm}|>{\centering\arraybackslash}m{.5cm}|>{\centering\arraybackslash}m{.5cm}|>{\centering\arraybackslash}m{.5cm}|>{\centering\arraybackslash}m{.5cm}|>{\centering\arraybackslash}m{.5cm}|>{\centering\arraybackslash}m{.5cm}|>{\centering\arraybackslash}m{.5cm}|
}
    \hline
	$\hat{s}$ &1&2&3&4&5&6&7&8&9&$\geq 10$\\
	\hline
	$p = 64, n=1024, c=3$ &1&3 &0 & 7& 41 & 19 & 9 & 7 & 2  & 11   \\
	\hline
	$p = 64, n=1024, c=1$ & 4& 3 & 1 & 14 & 45 & 10 & 3 & 3 & 3 & 14  \\
	\hline
	$p = 128, n=1024, c=3$ &2&4 & 6& 6 & 30 & 17 & 4 & 4 & 8 & 19   \\
	\hline
	$p = 128, n=1024, c=1$ & 4& 5 & 0 & 9 & 36 & 10 & 6 & 5 & 4 &  21  \\
	\hline
\end{tabular}
\caption{Sparsity parameter selection  }
\label{table:s-p64n1024}
\end{table}
\end{center}

The following table represents the percentage of frequencies selected by the frequency selection procedure. For instance when $p=64,n=1024,c=1$, in 100 runs of the simulation, the minimum, first quartile, median, mean, third quartile, and maximum  number of selected frequencies are $40,41.2, 41.6, 43.1, 42$, and $90$ percent of the fundamental frequencies. Note that the frequency support of the truncated AR(4) processes considered in the simulation study was $\Omega=[0.05,0.25]$ which contains $40\%$ of the fundamental frequencies. Results presented in Table \ref{table:eta_hat} indicates that when the signal strength is high (in this case when $c=1$) the frequency selection selects the frequency support of the underlying truncated AR(4) process. When the signal strength is low compare to the strength of the noise the frequency selection procedure selects a subset of the frequency support that contains strong power.

\begin{center}
\begin{table}[H] 
\centering
\begin{tabular}{|>{\centering\arraybackslash}m{4.5cm}|>{\centering\arraybackslash}m{.95cm}|>{\centering\arraybackslash}m{.95cm}|>{\centering\arraybackslash}m{.95cm}|>{\centering\arraybackslash}m{.95cm}|>{\centering\arraybackslash}m{.95cm}|>{\centering\arraybackslash}m{.95cm}|>{\centering\arraybackslash}m{.5cm}|>{\centering\arraybackslash}m{.5cm}|>{\centering\arraybackslash}m{.5cm}|>{\centering\arraybackslash}m{.90cm}|>{\centering\arraybackslash}m{.5cm}|>{\centering\arraybackslash}m{.5cm}|>{\centering\arraybackslash}m{.5cm}|>{\centering\arraybackslash}m{.5cm}|>{\centering\arraybackslash}m{.5cm}|>{\centering\arraybackslash}m{.5cm}|>{\centering\arraybackslash}m{.5cm}|>{\centering\arraybackslash}m{.5cm}|>{\centering\arraybackslash}m{.5cm}|>{\centering\arraybackslash}m{.5cm}|
}
    \hline
	$\hat{\eta}$ &min&$Q_1$&$Q_2$&Mean&$Q_3$&max\\
	\hline
	$p = 64, n=1024, c=3$ & 0.20& 19.6& 22.6& 23.3&  25.2& 33.6     \\
	\hline
	$p = 64, n=1024, c=1$ & 40&  41.2&  41.6&  43.1&  42&  90   \\
	\hline
	$p = 128, n=1024, c=3$ & 0.20& 11.2& 14.5& 17.1&  18.5& 82     \\
	\hline
	$p = 128, n=1024, c=1$ & 38.5&  40.6&  41.2&  41.6&  41.8&  64.5   \\
	\hline
	\end{tabular}
    \caption{Percentage of fundamental frequencies selected   }
\label{table:eta_hat}
\end{table}
\end{center}

\section{Data Analysis} \label{Appendix:Application}

This appendix provides two additional analyzes from the LSPCA estimates of the EEG data for the FEP and HC subjects discussed in Section 7 of the manuscript. In addition, we provide the plot of the estimated, in the sense of \cite{Brillinger1964a}, leading eigenvector of the spectral density matrices over the frequency domain of $[0,32]$ Hz and the corresponding spatial plot of the average power over the commonly adopted delta and theta bands. Finally, the estimated band coherence, based on LSPCA, over the two selected frequency bands are displayed and interpreted. 

First, we display the real and imaginary parts of the $d=2$ components as functions of coordinate/channel and frequency for the FEP participant in Figure \ref{fig:Subject3_evecs} and for the healthy control in Figure \ref{fig:Subject43_evecs}. These figures provide more nuanced information compared to the estimated modulus provided in the manuscript, although are of less direct clinical interest. 

\begin{figure}[H]
    \centering
    \includegraphics[width=6.0in, height=3.5in]{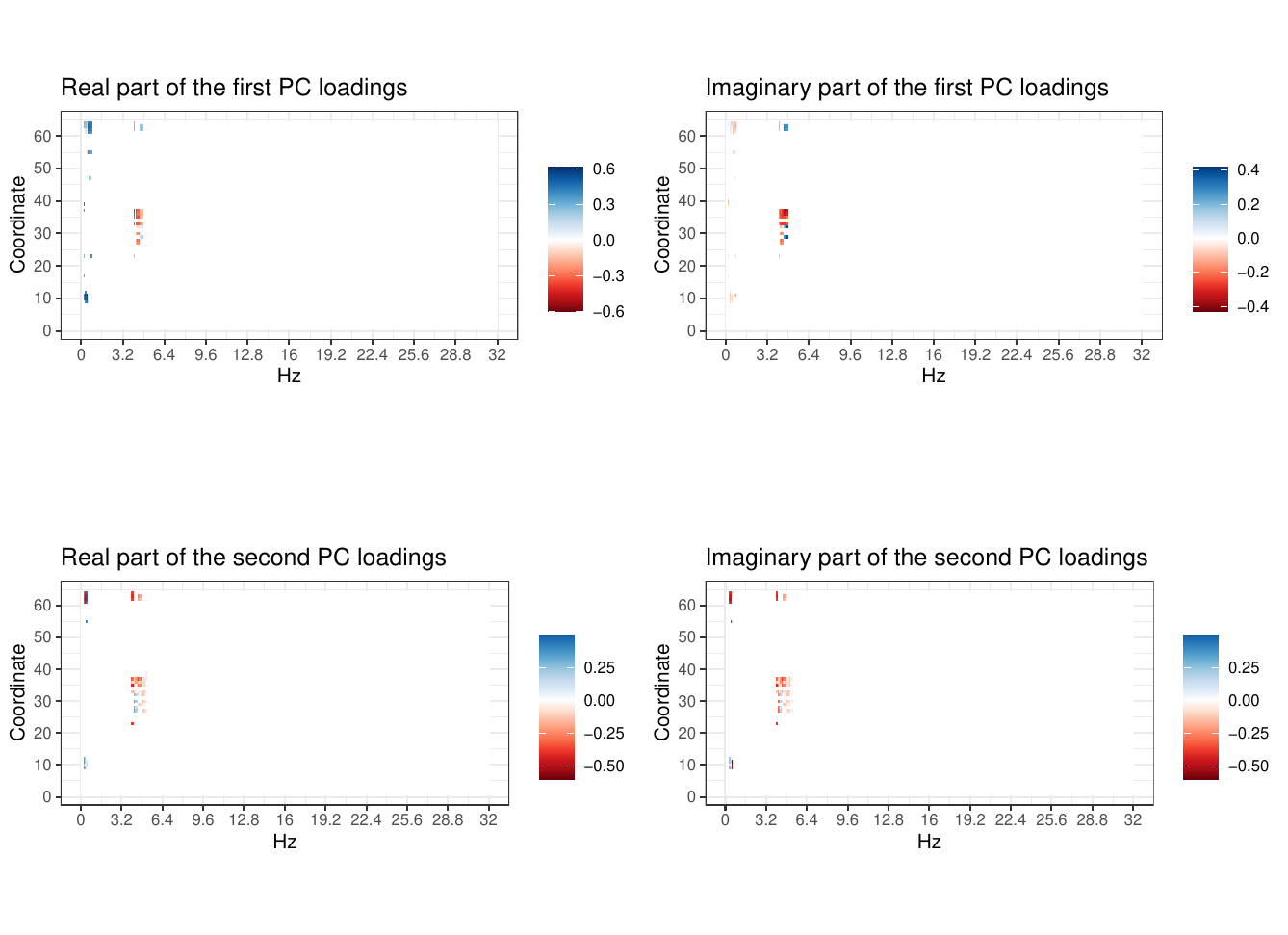}
    \caption{Subject FEP: \textbf{Top left panel} Real part of the first PC loadings; \textbf{Top right pane}: Imaginary part of the first PC loadings. \textbf{Bottom left panel} Real part of the second PC loadings; \textbf{Bottom right pane}: Imaginary part of the second PC loadings.}
    \label{fig:Subject3_evecs}
\end{figure}

\begin{figure}[H]
    \centering
    \includegraphics[width=6.0in, height=3.5in]{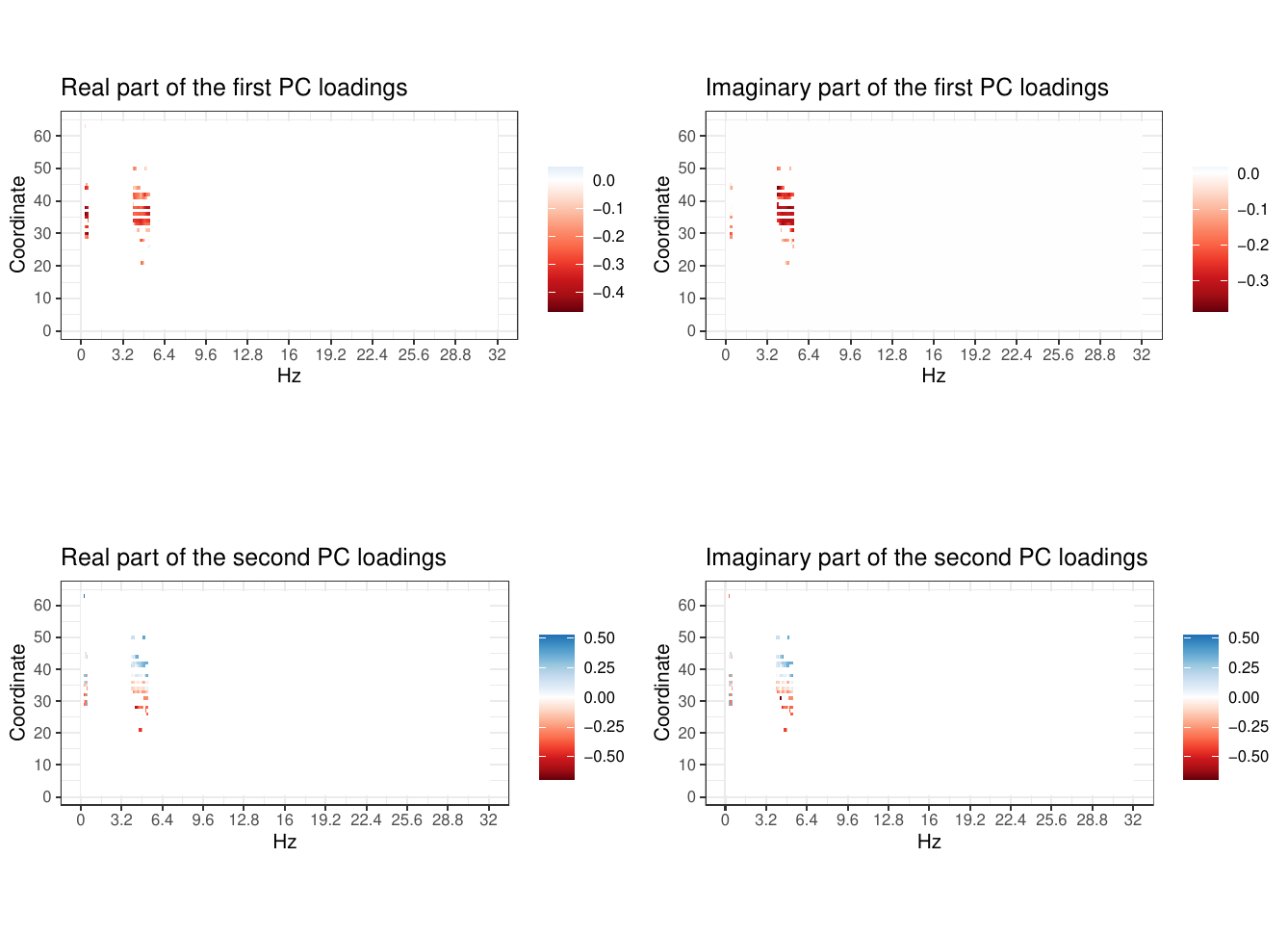} 
    \caption{Subject HC: \textbf{Top left panel} Real part of the first PC loadings; \textbf{Top right pane}: Imaginary part of the first PC loadings. \textbf{Bottom left panel} Real part of the second PC loadings; \textbf{Bottom right pane}: Imaginary part of the second PC loadings.}
    \label{fig:Subject43_evecs}
\end{figure}

To provide a visual comparison between the estimated leading eigenvector obtained by (1) the LSPCA, illustrated in Figures 5 and 6, and (2) the classical eigenvector estimates, we illustrate the plot of the modulus of the estimated leading eigenvector, in the sense of \cite{Brillinger1964a}, in Figure \ref{fig:subjects_evecs_classic} and the average power within the delta band of less than 4 Hz and theta band of 4 to 8 Hz in Figure \ref{fig:eegcap_classic}. Note that the scale of the color map in Figure \ref{fig:eegcap_classic} is not the same as that obtained from the LSPCA, where in Figure \ref{fig:eegcap_classic} bolder red color represents a higher power relative to the other channels. To better illustrate how not regularizing the estimation of eigenvectors affects power estimation, we matched the scale of the color map to the one obtained by the LSPCA, illustrated in Figure \ref{fig:eegcap_classic_color_scaled}. The estimation of the eigenvectors results in the noisy spread of power across all channels and in underestimation of the power within each channel.


\begin{figure}[H] 
\begin{center}
\begin{tabular}{cc}
\includegraphics[scale = .32]{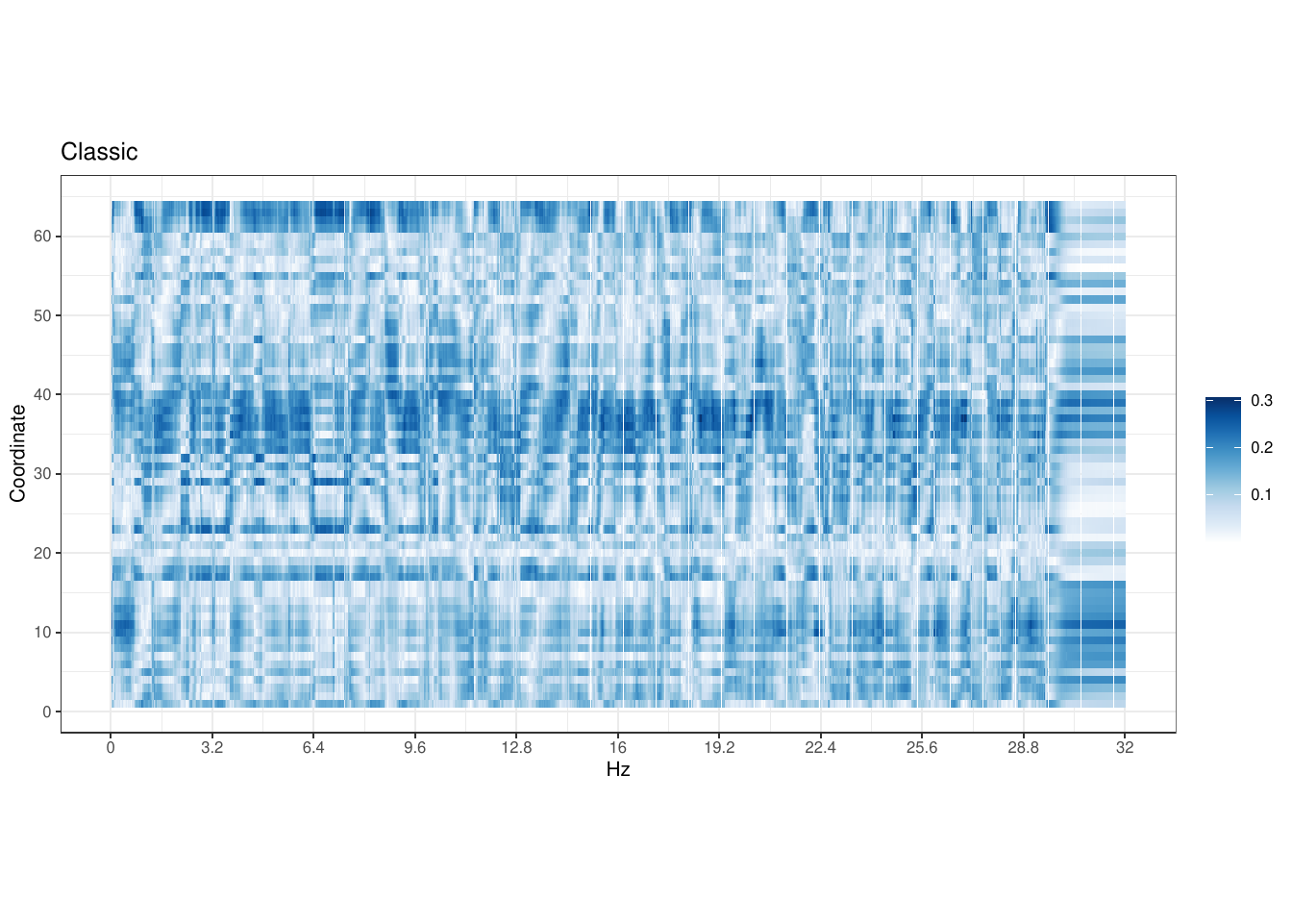} 
& \includegraphics[scale = .32]{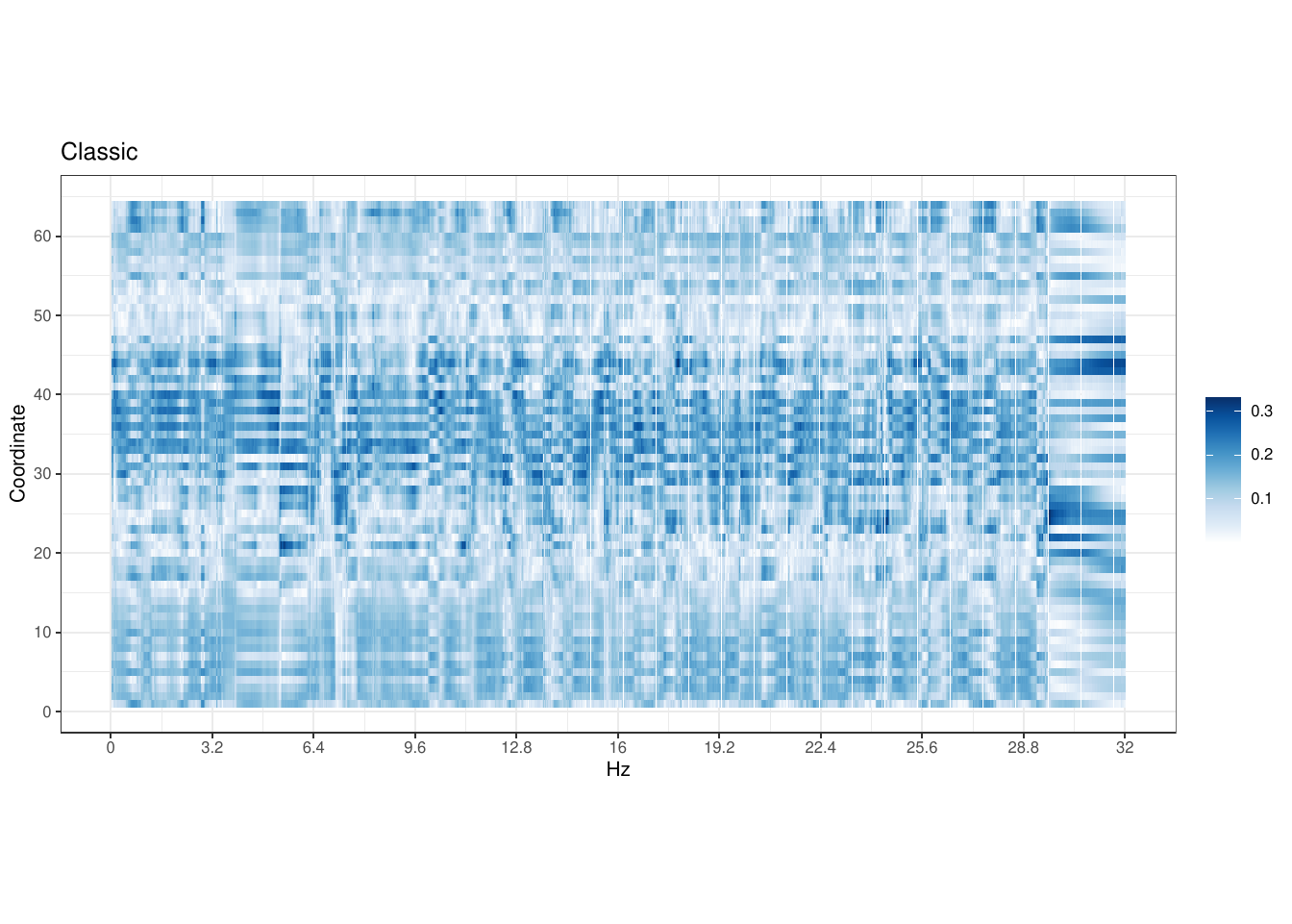} 
\end{tabular}
         \caption{Plots represent the estimated modulus of the first PC loadings by the classical method, corresponding to \textbf{left panel} the FEP subject; \textbf{ right pane}:  the HC subject.}
         \label{fig:subjects_evecs_classic}
\end{center}
\end{figure}

\begin{figure}[H] 
\begin{center}
\begin{tabular}{cc}
\includegraphics[scale = .3]{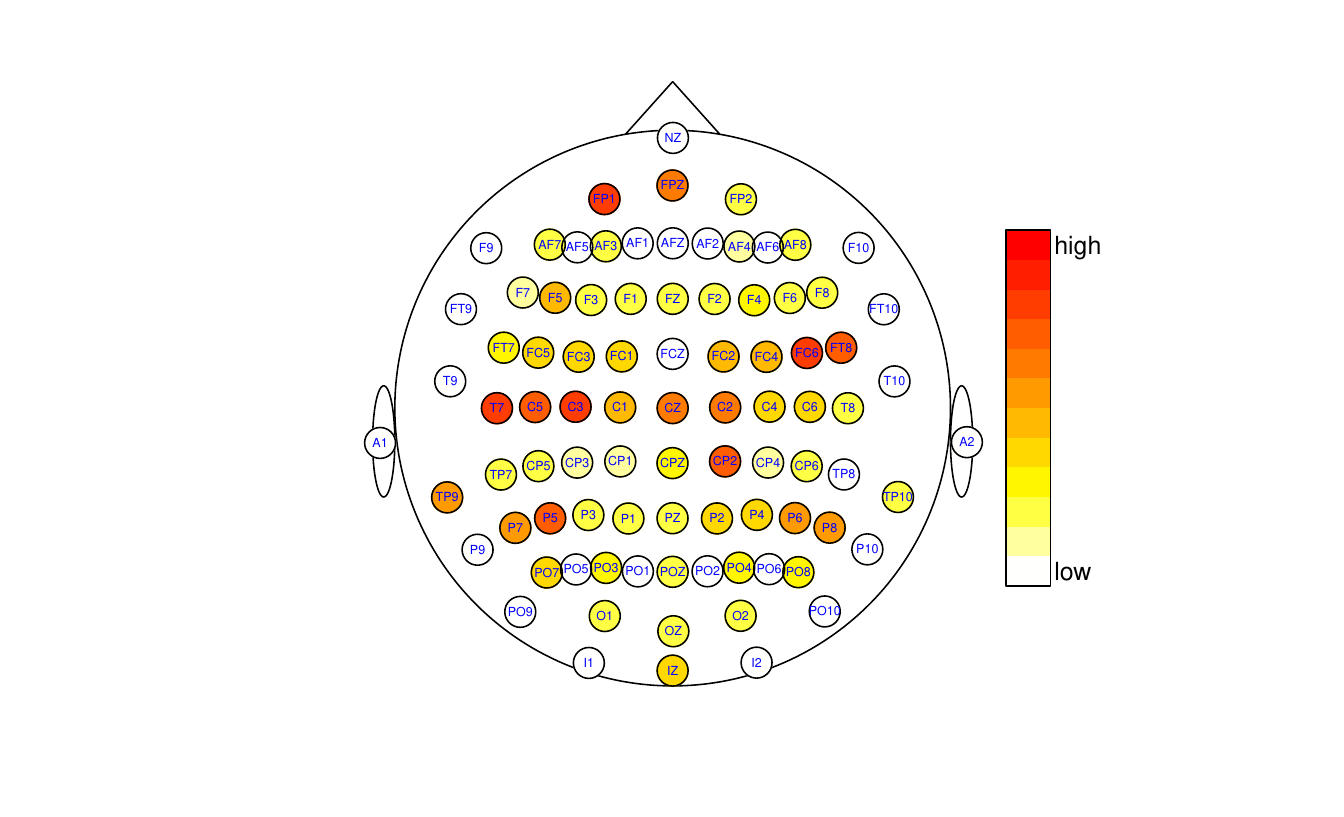} 
& \includegraphics[scale = .3]{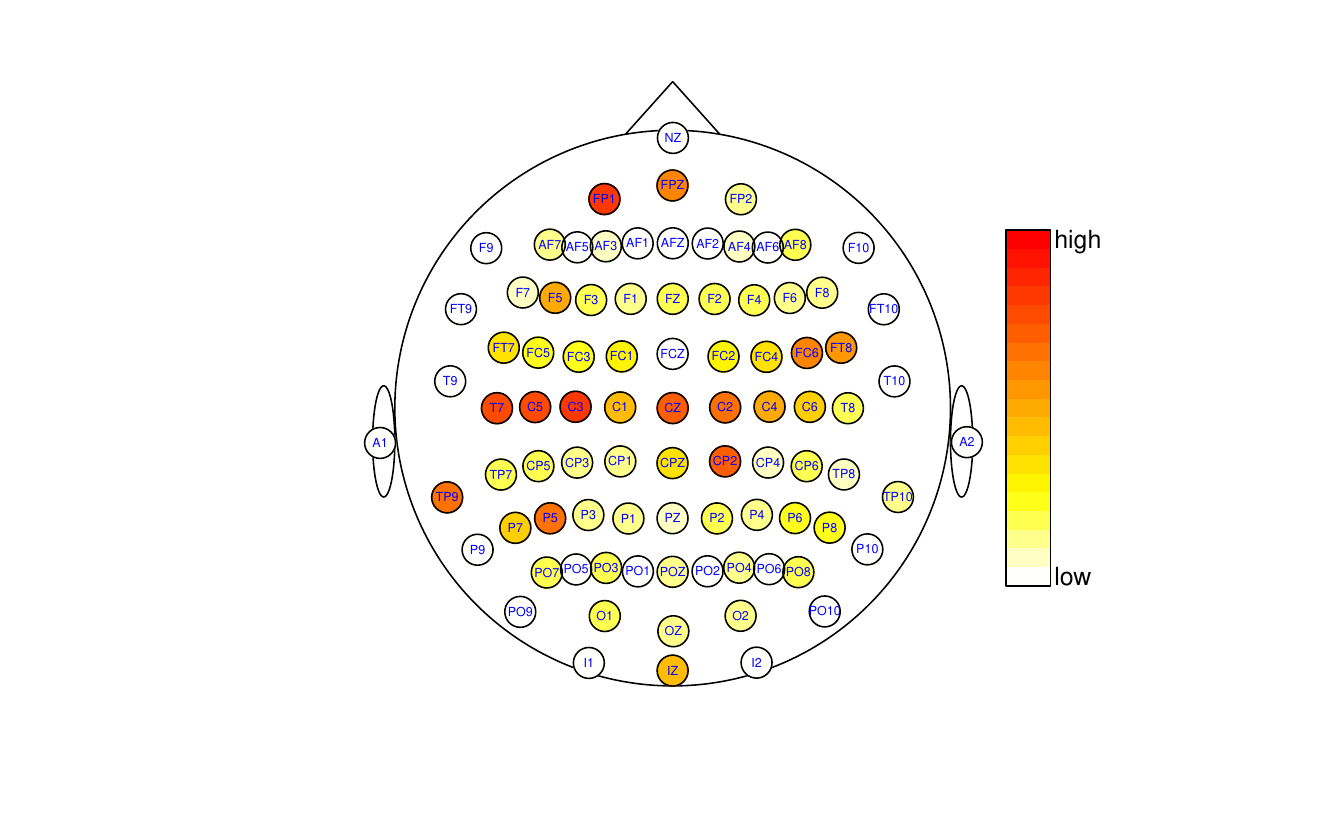} \\
\includegraphics[scale = .3]{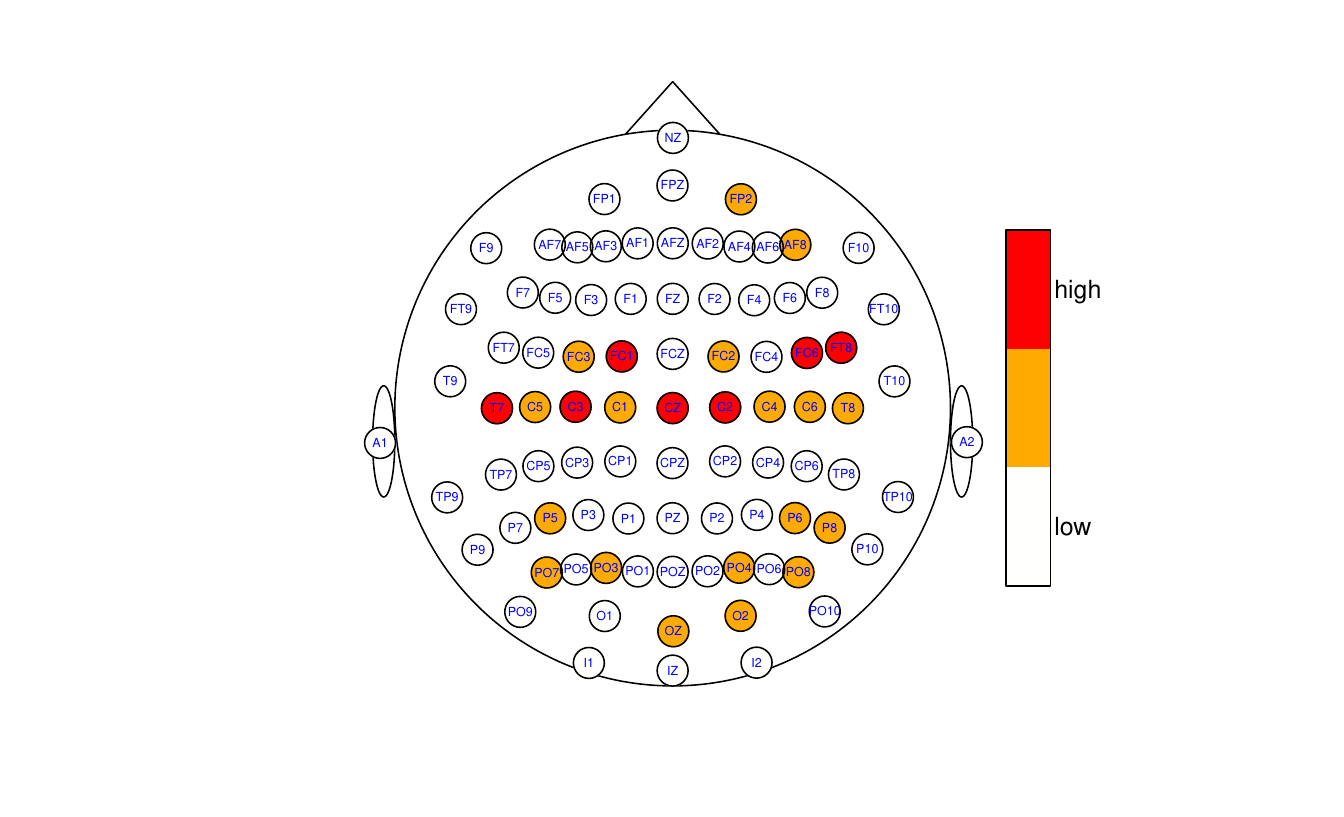}
& \includegraphics[scale = .3]{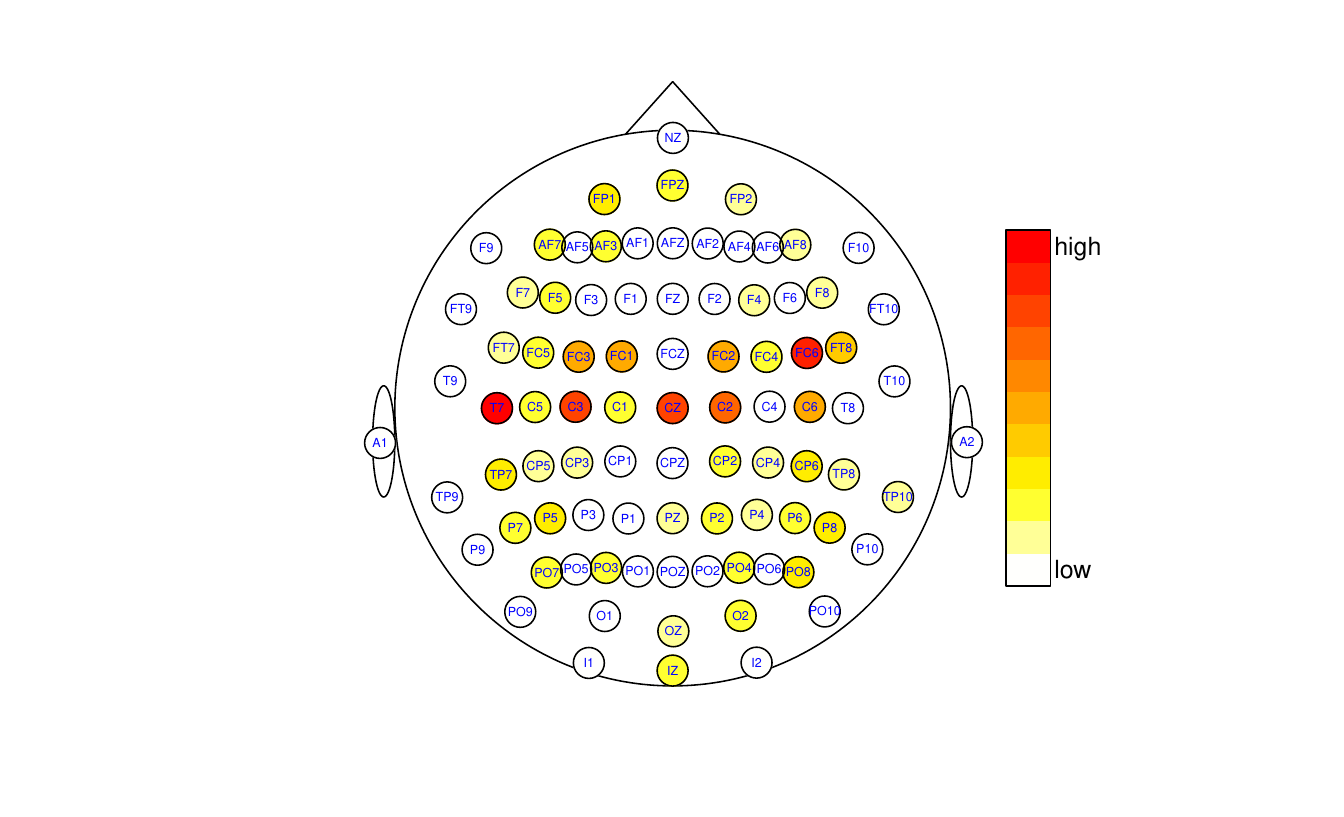}
\end{tabular}
         \caption{Band-collapsed power estimated using the classical approach for the FEP (top row) and HC (bottom row) participants for the lower frequency band (left column) and higher frequency band (right column); Scale of the color map is not matched with the one obtained by the LSPCA and presented in the manuscript.}
        \label{fig:eegcap_classic}
\end{center}
\end{figure}

\begin{figure}[H] 
\begin{center}
\begin{tabular}{cc}
\includegraphics[scale = .3]{Subject3_band1_V0.pdf} 
& \includegraphics[scale = .3]{Subject3_band2_V0.pdf} \\
\includegraphics[scale = .3]{Subject43_band1_V0.pdf}
& \includegraphics[scale = .3]{Subject43_band2_V0.pdf}
\end{tabular}
         \caption{Band-collapsed power estimated using the classical approach for the FEP (top row) and HC (bottom row) participants for the lower frequency band (left column) and higher frequency band (right column); Scale of the color map is matched with the one obtained by the LSPCA presented in the manuscript.}
        \label{fig:eegcap_classic_color_scaled}
\end{center}
\end{figure}

Next, we analyze coherence. Figure \ref{fig:Coherence} represents the estimated band coherence, where band coherence is defined as suggested in \cite{ombao2006coherence}, as follows.
\begin{align*} \label{eq:band_coh}
    \mathcal{K}_{k,\ell}(\mathcal{B}) = \frac{\int_{\mathcal{B}}f_{k,\ell}(\omega)\;d\omega}{\sqrt{\int_{\mathcal{B}}f_{k,k}(\omega)\;d\omega \;\int_{\mathcal{B}}f_{\ell,\ell}(\omega)\;d\omega}},    
\end{align*}
where $\mathcal{B}\subset[0,.5]$. Estimate of $\mathcal{K}_{k,\ell}(\mathcal{B})$ were obtained by 
\begin{align*}
    \hat{\mathcal{K}}_{k,\ell}(\mathcal{B}) = \frac{\sum_{\{j:\omega_j\in\mathcal{B}\}}\hat{f}_{k,\ell}(\omega_j)}{\sqrt{\sum_{\{j:\omega_j\in\mathcal{B}\}}\hat{f}_{k,k}(\omega_j) \;\sum_{\{j:\omega_j\in\mathcal{B}\}}\hat{f}_{\ell,\ell}(\omega_j)}},
\end{align*}
where $\hat{f}$ is the rank 1 estimate of the spectral density matrices computed as follows
\begin{align*}
    \hat{f}(\omega) = \hat{\lambda}(\omega)\hat{V}(\omega)\hat{V}(\omega)^\dagger,
\end{align*}
and $\hat{V}(\omega)$ is the estimated sparse eigenvector of $f(\omega)$ by the LSPCA and $\hat{\lambda}(\omega)$ is the estimated leading eigenvalue. Plots represent the coherence between the selected channels by the LSPCA over the two selected frequency bands, where the lower delta frequency band is denoted by $\mathcal{B}_1$ and the higher theta frequency band is denoted by $\mathcal{B}_2$.

The coherence maps reveal some expected similarities among the two subjects.  Both subjects displayed highly coherent delta and theta power in frontal and central blocks that span both hemispheres of the brain.  However, the FEP participant displays an additional coherent parietal region (Iz, PO7, P8, P6, P4) across both hemispheres that is not present in the HC. Increased inter- and intra-hemisphere parietal delta synchronization have been reported in people with schizophrenia compared to healthy controls \citep{kam2013, perrotelli2021}.  Identifying characteristics in FEP patients that are are similar to those without mental health conditions and characteristics that are similar to people with schizophrenia are of clinical interest, as they can help understand the course and development of the disease.  Our analysis suggests that increased parietal theta coherence could be a marker that can be explored in future studies.

\begin{figure}[H] 
\begin{center}
\begin{tabular}{cc}
\includegraphics[width=2.9in, height=3.5in, bb = 50 50 650 650]{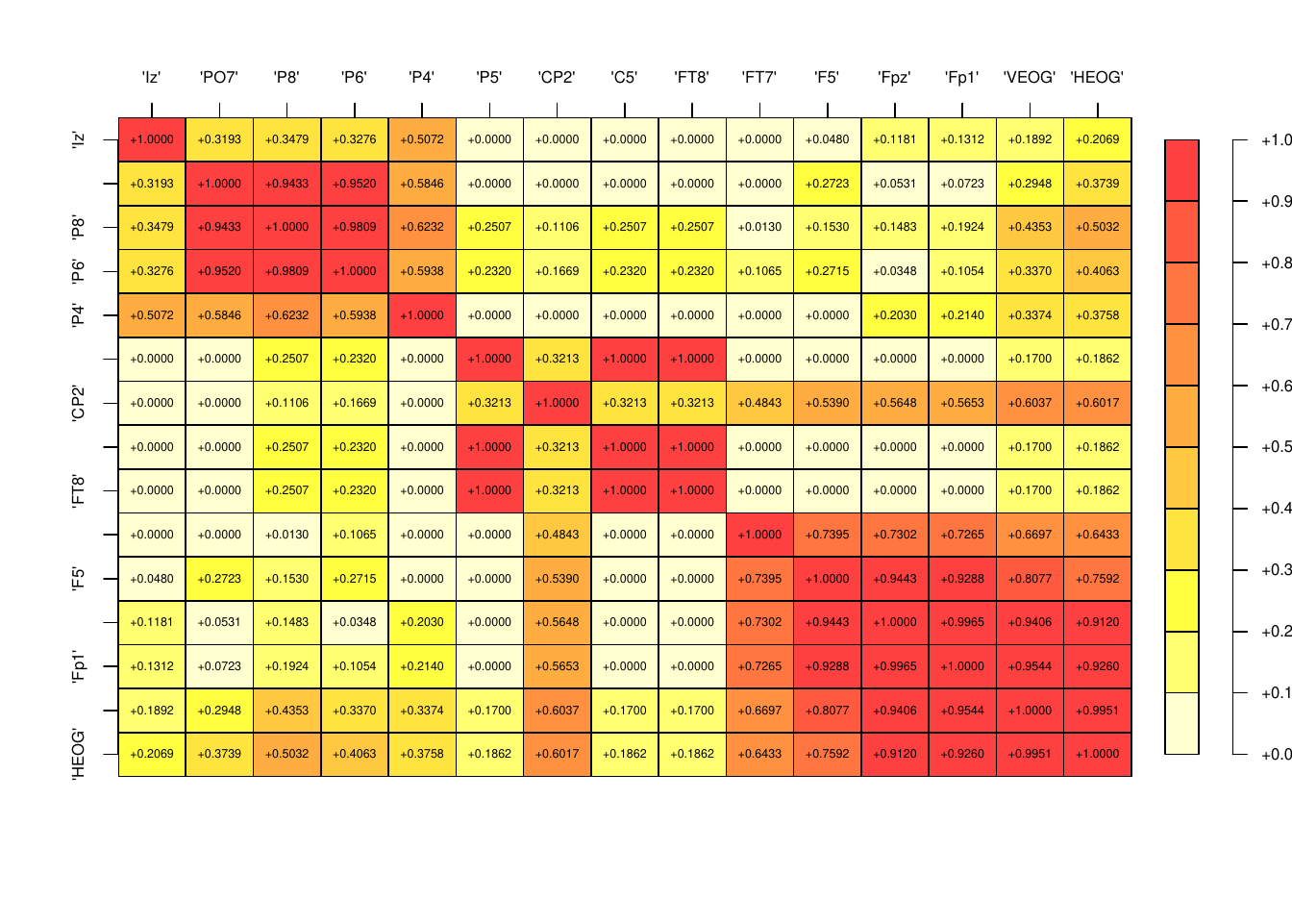} 
& \includegraphics[width=2.9in, height=3.5in, bb = 50 50 650 650]{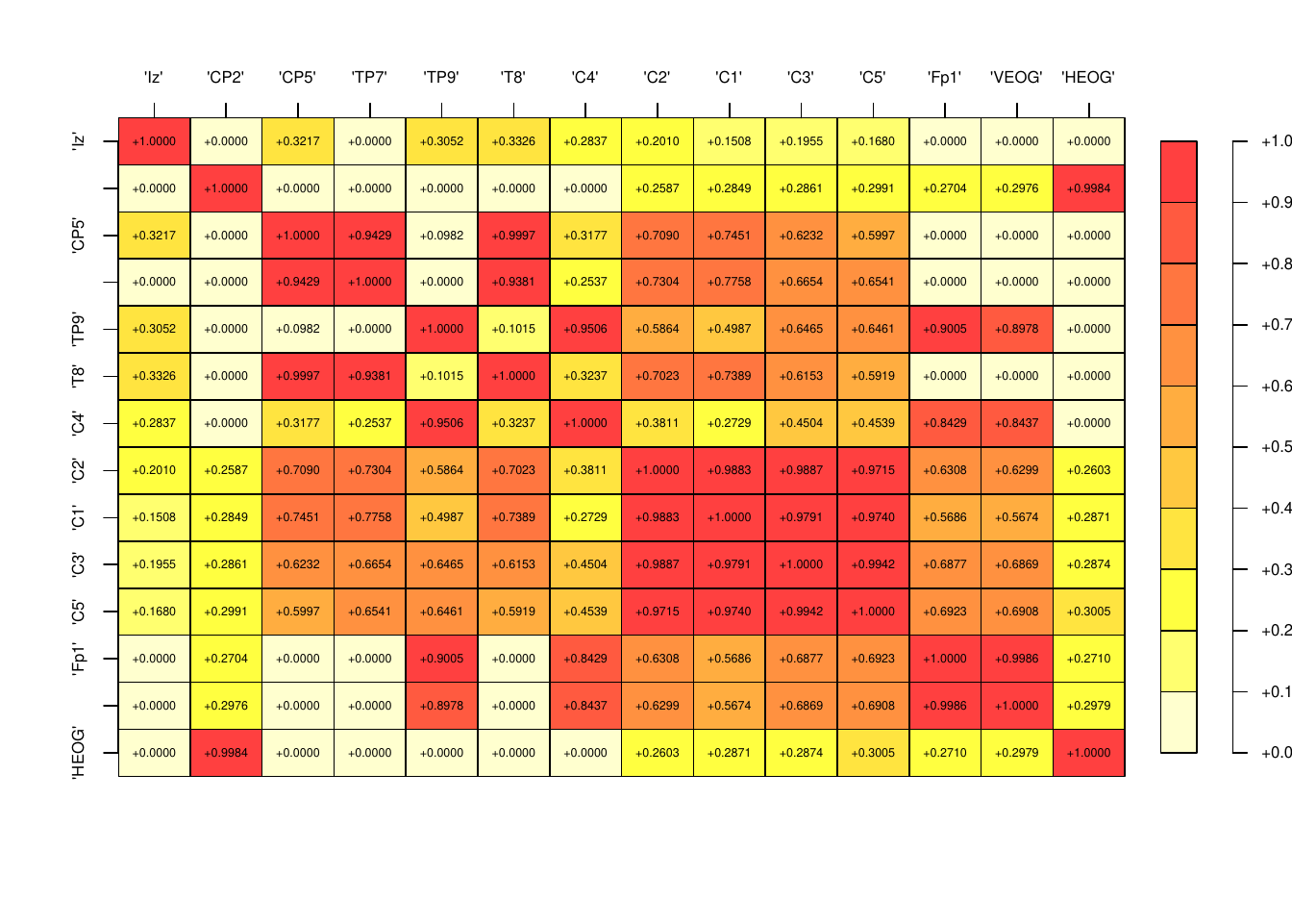} \vspace{-50pt}\\
\includegraphics[width=2.9in, height=3.5in, bb = 50 50 650 650]{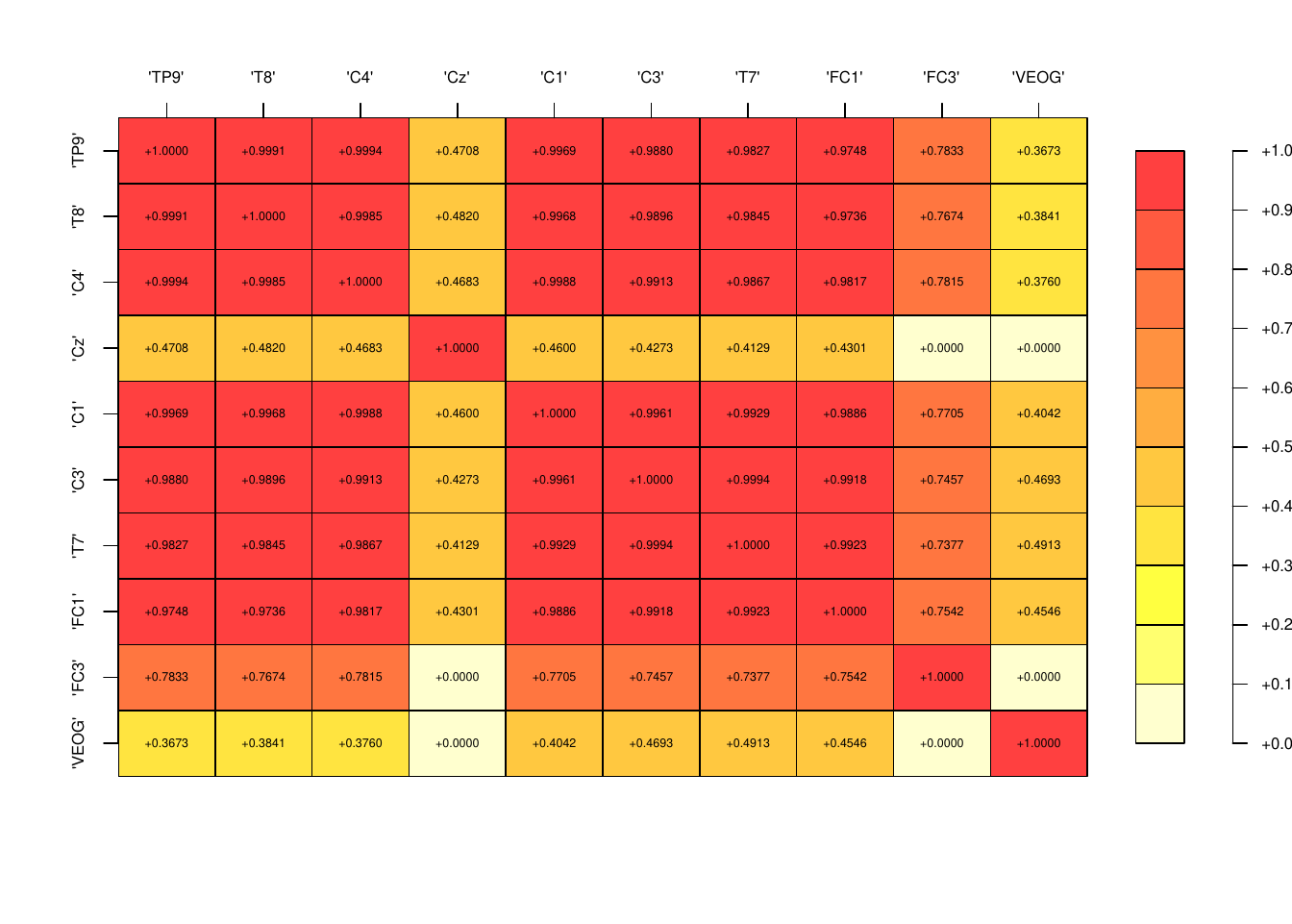}
& \includegraphics[width=2.9in, height=3.5in, bb = 50 50 650 650]{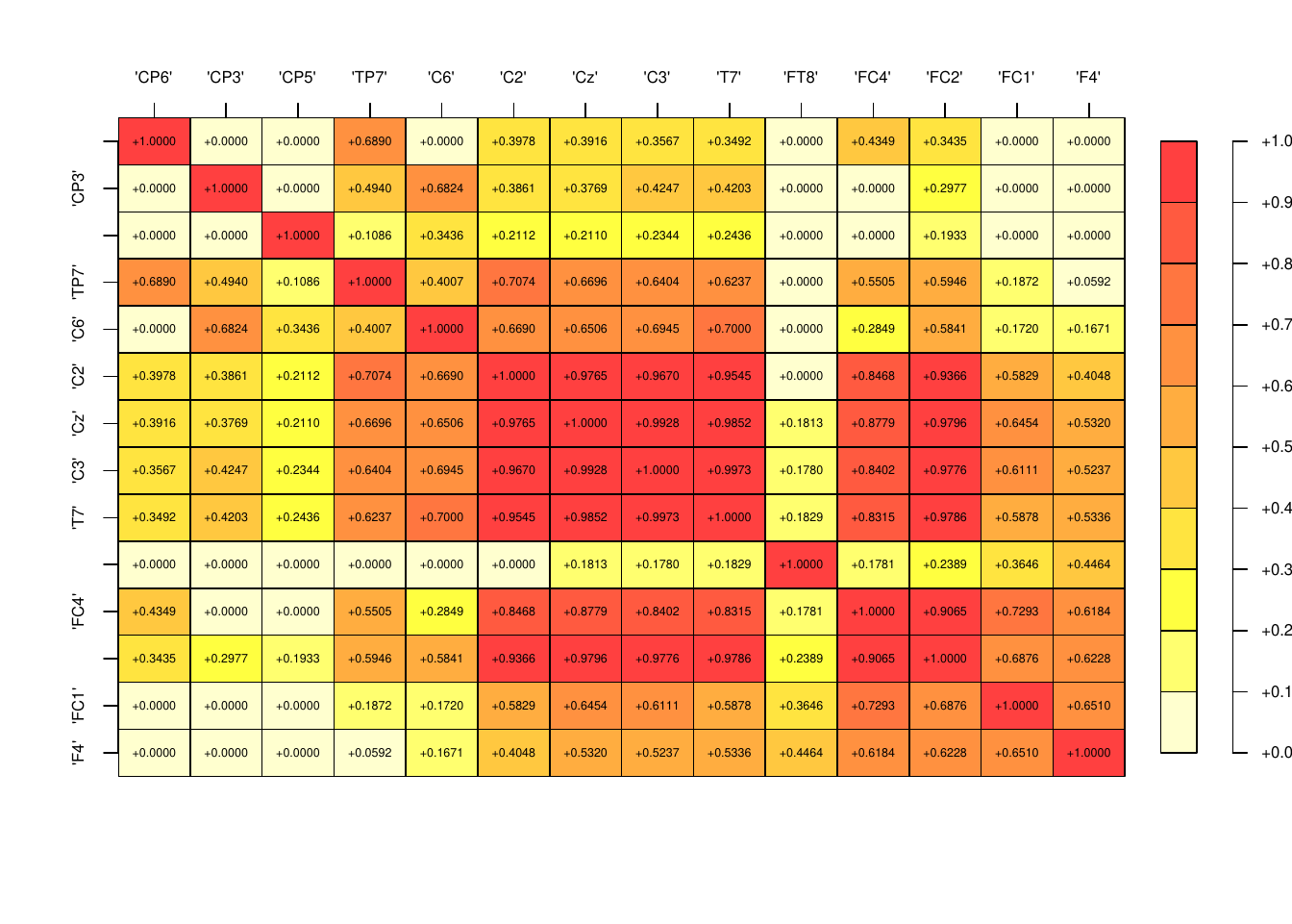}
\end{tabular}
         \caption{Plots represent the Coherence between the selected channels by the LSPCA where \textbf{Top left} panel corresponds to the FEP subject over $\mathcal{B}_1$ (lower frequency band); \textbf{Top right} panel corresponds to the FEP subject over $\mathcal{B}_2$ (higher frequency band); \textbf{Bottom left} panel corresponds to the HC subject over $\mathcal{B}_1$ (lower frequency band); \textbf{Bottom right} panel corresponds to the HC subject over $\mathcal{B}_2$ (higher frequency band).}
         \label{fig:Coherence}
\end{center}
\end{figure}

\bibliographystyle{asa}

\bibliography{LSPCA}

\end{document}